\def\BState{\State\hskip-\ALG@thistlm}
 \renewcommand{\ALG@name}{Mechanism} 
\newtheorem{theorem}{Theorem}[section]
\newtheorem{lemma}[theorem]{Lemma}
\newtheorem{assumption}[theorem]{Assumption}
\newtheorem{proposition}[theorem]{Proposition}
\newtheorem{definition}[theorem]{Definition}
\newcommand{\rev}[1]{#1}
\newcommand{\frev}[1]{{\color{magenta}#1}}
\newcommand{\com}[1]{\textbf{\color{red}(COMMENT: #1)}} 
\newcommand{\clar}[1]{\textbf{\color{green}(NEED CLARIFICATION: #1)}}
\newcommand{\response}[1]{\textbf{\color{magenta}(RESPONSE: #1)}} 
\newcommand{\rev}[1]{#1}
\newcommand{\frev}[1]{#1}
\newcommand{\com}[1]{}
\newcommand{\clar}[1]{}
\newcommand{\response}[1]{}
\newcommand{\RNum}[1]{\uppercase\expandafter{\romannumeral #1\relax}}
\newcommand{\squishlist}{
   \begin{list}{$\bullet$}
    { \setlength{\itemsep}{0pt}      \setlength{\parsep}{3pt}
      \setlength{\topsep}{3pt}       \setlength{\partopsep}{0pt}
      \setlength{\leftmargin}{1.5em} \setlength{\labelwidth}{1em}
      \setlength{\labelsep}{0.5em} } }
\newcommand{\squishend}{  \end{list}  }
\begin{document}

\title{Learning to Incentivize: Eliciting Effort via Output Agreement}
\author{Yang Liu and Yiling Chen \\ 
Harvard University, Cambridge MA, USA\\
\{yangl,yiling\}@seas.harvard.edu}
\maketitle

\begin{abstract}
\rev{
In crowdsourcing when there is a lack of verification for contributed answers, output agreement mechanisms are often used to incentivize participants to provide truthful answers when the correct answer is hold by the majority. In this paper, we focus on using output agreement mechanisms to elicit effort, in addition to eliciting truthful answers, from a population of workers. We consider a setting where workers have heterogeneous cost of effort exertion and examine the data requester's problem of deciding the reward level in output agreement for optimal elicitation. In particular, when the requester knows the cost distribution, we derive the optimal reward level for output agreement mechanisms. This is achieved by first characterizing Bayesian Nash equilibria of output agreement mechanisms for a given reward level. When the requester does not know the cost distribution, we develop sequential mechanisms that combine learning the cost distribution with incentivizing effort exertion to approximately determine the optimal reward level.} 
\end{abstract}

\section{Introduction}

Our ability to reach an unprecedentedly large number of people via the Internet has enabled crowdsourcing as a practical way for knowledge or information elicitation. For instance, crowdsourcing has been widely used for getting labels for training samples in machine learning. One salient characteristic of crowdsourcing is that a requester often cannot verify or evaluate the collected answers, because either the ground truth doesn't exist or is unavailable or it is too costly to be practical to verify the answers. This problem is called {\em information elicitation without verification} (IEWV)~\cite{Waggoner:14}. 

In the past decade, researchers have developed a class of economic mechanisms, collectively called the {\em peer prediction} mechanisms~\citet{Prelec:2004,MRZ:2005,jurca2006minimum,
jurca2009mechanisms,witkowski2012robust,witkowski2012peer,radanovic2013,Frongillo_aaai15}, for IEWV. The goal of most of these mechanisms is to design payment rules such that participants truthfully report their information at a game-theoretic equilibria. Each of these mechanisms makes some restriction on the information structure of the participants. Under the restriction, truthful elicitation is then achieved by rewarding a participant according to how his answer compares with those of his peers. Within this class, {\em output agreement mechanisms} are the simplest and they are often adopted in practice~\cite{von2004labeling}. In a basic output agreement mechanism, a participant receives a positive payment if his answer is the same as that of a random peer and zero payment otherwise. When the majority of the crowd hold the correct answer, output agreement mechanisms can truthfully elicit answers from the crowd at an equilibrium.    

Most of these works on peer prediction mechanisms, with the exception of \citeauthor{dasgupta2013crowdsourced} [\citeyear{dasgupta2013crowdsourced}] and \citeauthor{Witkowski_hcomp13} [\citeyear{Witkowski_hcomp13}], assume that answers of participants are exogenously generated, that is, participants are equipped with their private information. However, in many settings, participants can exert more effort to improve their information and hence the quality of their answers is endogenously determined. Recent experiments \cite{Yin_ijcai15,ho2015incentivizing} have also shown that the quality of answers can be influenced by the magnitude of contingent payment in settings where answers can be verified.  

In this paper, we study eliciting efforts as well as truthful answers in output agreement mechanisms. Taking the perspective of a requester, we ask the question of how to optimally set the payment level in output agreement mechanisms when the requester cares about both the accuracy of elicited answers and the total payment. 

Specifically, we focus on binary-answer questions and binary effort levels. We allow workers to have heterogeneous cost of exerting effort. Such a cost is randomly drawn from a distribution that is common knowledge to all participants. We consider two scenarios. In the first scenario, a static setting, the requester is assumed to know the cost distribution of the participants. Her objective is to set the payment level in output agreement mechanisms such that when a game-theoretic equilibrium is reached, her expected utility is maximized. In the second scenario, a dynamic setting, the data requester doesn't know the cost distribution of the participants but only knows an upper bound of the cost. Here, the requester incorporates eliciting and learning the cost distribution into incentivizing efforts in output agreement mechanisms when she repeatedly interacts with the set of participants over multiple tasks. The ultimate goal of the requester is to learn to set the optimal payment level in this sequential variant of output agreement mechanism for each interaction so that when participants reach a game-theoretic equilibrium of this dynamic game, the data requester minimizes her regret on expected utility over the sequence of tasks.  

We summarize our main contributions as follows:
\squishlist
\item Since the quality of answers is endogenously determined, a requester's utility depends on the behavior of participants. Optimizing the payment level requires an understanding of the participant's behavior. We characterize Bayesian Nash equilibria (BNE) for two output agreement mechanisms with any given level of payment and show that \rev{at equilibrium there is a unique threshold effort exertion strategy that returns each worker highest expected utility, when there is no collusion among workers}.
\item For the static setting where the requester knows the cost distribution, when the cost distribution satisfies certain conditions, we show that the optimal payment level in the two output agreement mechanisms is a solution to a convex program and hence can be efficiently solved. 
\item For the dynamic setting where the requester doesn't know the cost distribution, we design a sequential mechanism that combines eliciting and learning the cost distribution with incentivizing effort exertion in a variant of output agreement mechanism. Our mechanism ensures that participants truthfully report their cost of effort exertion when asked, in addition to following the same strategy on effort exertion and answer reporting as that in the static setting for each task. We further prove performance guarantee of this mechanism in terms of the requester's regret on expected utility. 
\squishend

\subsection{Related work}\label{sec:related}

The literature on peer prediction mechanisms hasn't addressed costly effort until recently. \citeauthor{dasgupta2013crowdsourced} [\citeyear{dasgupta2013crowdsourced}] and \citeauthor{Witkowski_hcomp13} [\citeyear{Witkowski_hcomp13}] are the two papers that formally introduce costly effort into models of information elicitation without verification. \citeauthor{dasgupta2013crowdsourced} [\citeyear{dasgupta2013crowdsourced}] design a mechanism that incentivizes maximum effort followed by truthful reports of answers in an equilibrium that achieves maximum payoffs for participants. \citeauthor{Witkowski_hcomp13} [\citeyear{Witkowski_hcomp13}] focuses on simple output agreement mechanisms as this paper. They study the design of payment rules such that only participants whose quality is above a threshold participate and exert effort. Both \citeauthor{dasgupta2013crowdsourced} [\citeyear{dasgupta2013crowdsourced}] and \citeauthor{Witkowski_hcomp13} [\citeyear{Witkowski_hcomp13}] assume that the cost of effort exertion is fixed for all participants and is known to the mechanism designer. This paper studies effort elicitation in output agreement mechanisms but allow participants to have heterogeneous cost of effort exertion drawn from a common distribution. Moreover, we consider a setting where the mechanism designer doesn't know this cost distribution, which leads to an interesting question of learning to optimally incentivize effort exertion followed by truthful reports of answers in repeated interactions with a group of participants.  

\citeauthor{roth2012conducting} [\citeyear{roth2012conducting}] and \citeauthor{abernethy2015actively} [\citeyear{abernethy2015actively}] consider strategic data acquisition for estimating the mean and statistical learning in general respectively. Both works do not consider costly effort but participants may have stochastic and heterogeneous cost for revealing their data and need to be appropriately compensated. Moreover, these two works all assume that workers won't misreport their obtained answers. 
 
\frev{
\emph{Caveats:} With output agreement mechanisms, workers can achieve an uninformative equilibrium by colluding, which returns a higher utility for each worker. Our model and current results do not remove this caveat. For static scenario, it is promising to adopt the method introduced in \cite{dasgupta2013crowdsourced} to rule out such a case; nevertheless we conjecture that ruling out collusions in a dynamic setting with returning workers is much more challenging. This merits a future study.
}
\label{sec:related}

\section{Problem formulation}
\label{sec:pf}
\subsection{Our mechanisms}
A data requester has a set of tasks that she wants to obtain answers from a crowd $\mathcal C = \{1,...,K\}$ of $K \geq 2$ candidate workers. 
In this paper, we consider binary-answer tasks, for example, identifying whether a picture of cells contains cancer cells, \rev{and denote the answer space of each task as $\{0, 1\}$. The requester assigns each task to $N$ randomly selected workers, with $N\geq 2$ being potentially much less than $K$.\footnote{We assume $N$ is fixed, though how to optimally choose $N$ could be an interesting future direction.} Such a redundant assignment strategy, when combined with some aggregation method (e.g. majority voting), has been found effective in obtaining accurate answers~\cite{sheng2008get,sig15}.} 


\rev{The requester cannot verify the correctness of contributed answers for a task, either because ground truth is not available or verification is too costly and defies the purpose of crowdsourcing. Thus, in addition to a base payment, each worker is rewarded with a contingent bonus that is determined by how his answer compares with those of other workers for completing a task. Specifically: } 
\begin{enumerate}
\item[1.] The requester assigns a task to \rev{a randomly selected subset $\mathcal U \subseteq \mathcal C$ of workers, where $|\mathcal U| = N$}. She announces a base payment $b > 0$ and a bonus \rev{$B > 0$}, as well as the criteria for receiving the bonus. The criteria of receiving the bonus is specified by an output agreement mechanism, which we will introduce shortly 
\item[2.] \rev{Each worker $i \in \mathcal U$ independently submits his answer $L_i \in \{0, 1\}$ to the requester.}

\item[3.] After collecting the answers, the requester pays base payment $b$ to every worker who has submitted an answer and a bonus $B$ to those who met the specified criteria.  
\end{enumerate}

\rev{The criteria for receiving bonus $B$ is specified by an output agreement mechanism. \emph{Output agreement} is a term introduced by \citeauthor{von2008designing} [\citeyear{von2008designing}] to capture the idea of ``rewarding agreement'' in their image labeling game, the ESP game~\cite{von2004labeling}.  
We define two variants of output agreement mechanisms:}



\vspace{-0.15in}
\paragraph{Peer output agreement (PA):} 
For each worker $i \in \mathcal U$, the data requester randomly selects a \emph{reference worker} $j\ne i$ and $j \in \mathcal U$. If $L_i = L_j$, worker $i$ receives bonus $B$. \rev{Note worker $j$'s reference worker could be different from $i$. }

\vspace{-0.15in}
\paragraph{Group output agreement (GA):} 
\rev{For each worker $i \in \mathcal U$, the data requester compares $L_i$ with the majority answer of the rest of the workers, $L_M$, where $L_M =1$ if $\frac{\sum_{j \in \mathcal U, j \neq i} L_j}{N-1} > 0.5$, $L_M =0$ if $\frac{\sum_{j \in \mathcal U, j \neq i} L_j}{N-1} < 0.5$ and $L_M = \{0,1\}$ if $\frac{\sum_{j \in \mathcal U, j \neq i} L_j}{N-1} = 0.5$. If $L_i \in L_M$, worker $i$ receives bonus $B$.}
\subsection{Agent models}
A worker can decide how much effort to exert to complete a task and the quality of his answer stochastically depends on his chosen effort level. Specifically, a worker can choose to either exert or not exert effort. If a worker exerts effort, then with probability $P_H\leq 1$ his answer is correct. If a worker does not exerts effort, with probability $P_L$, where $P_L <P_H$, he will provide the correct answer. We further assume $P_L \geq 0.5$, that is, when no effort is exerted the worker can at least do as well as random guess. This assumption is also used by \citeauthor{dasgupta2013crowdsourced} [\citeyear{dasgupta2013crowdsourced}] and \citeauthor{shah:sigmetrics13} [\citeyear{shah:nips13}, \citeyear{shah:sigmetrics13}]. For now, we assume $P_L$ and $P_H$ are the same for all workers.

Since workers can choose their effort level, the quality of an answer is endogenously determined. Let $e_i \in \{0, 1\}$ represents the chosen effort level of worker $i$, with $0$ corresponding to not exerting effort and $1$ corresponding to exerting effort.  
The accuracy of worker $i$ can be represented as 
$
p_i (e_i)= P_H  e_i + P_L  (1-e_i).
$

Workers have heterogeneous abilities, which are reflected by their cost of exerting effort. When worker $i$ doesn't exert effort, he incurs zero cost. A cost of $c_i \geq 0$ is incurred if agent $i$ chooses to exert effort on a task.  
$c_i$ is randomly generated according to a distribution with pdf $f(c)$ and cdf $F(c)$ \rev{for each pair of (worker,~task)}. We further assume this distribution stays the same\footnote{Realization for each (worker, task) pair can be very different.} across all workers and all tasks, and it has a bounded support $[0,c_{\max}]$. Moreover we enforce the following assumption on $F(c)$:
\begin{assumption}
$F(c)$ is strictly concave on $c \in [0,c_{\max}]$.
\end{assumption}
This assumption is stating that the probability of having a larger cost $c_i$ is decreasing. Several common distributions, e.g. exponential and standard normal (positive side), satisfy this assumption. Throughout this paper, we assume $F(c), c \in [0,c_{\max}]$ is common knowledge among all workers.\footnote{In practice each worker can estimate such distribution based on their past experiences.} Nevertheless each realized cost $c_i$ is private information, that is each worker $i$ observes his own realized cost $c_i$, but not the one for others. In Section \ref{sec:static}, we assume the requester also has full knowledge of $F(\cdot)$, but we relax this assumption in Section \ref{sec:data}.

Given that the cost of not exerting effort is zero, the positive base payment $b$ ensures that every worker will provide an answer for a task assigned to him. We focus on understanding how to determine the bonus $B$ in output agreement mechanisms to better incentivize effort in this paper. The base payment $b$ doesn't enter our analysis directly but it allows us to not worry about workers' decisions on participation. 
\rev{
When reporting their answer to the data requester, workers can choose to report truthfully, or to mis-report. Denote this decision variable for each worker $i$ as $r_i \in \{0 , 1\}$, where $r_i=1$ represents worker $i$ truthfully reporting his answer, and $r_i=0$ represents worker $i$ mis-reporting (reverting the answer in our case). Then the accuracy of each worker $i$'s report is a function of $(e_i,r_i)$:
\begin{align*}
p_i(e_i,r_i) = p_i(e_i) r_i + (1-p_i(e_i))(1-r_i)~.
\end{align*}
}




\rev{When each worker $j \in \mathcal U$ takes actions $(e_j, r_j)$, we denote the probability that worker $i \in \mathcal U$ receives bonus $B$ as $P_{i,B}( \{(e_j,r_j)\}_{j})$. In the PA mechanism, this quantity is \begin{align*}
&P_{i,B}( \{(e_j,r_j)\}_{j}) =  \frac{\sum_{j \neq i} P(L_i = L_j)}{N-1}
\end{align*}
In the GA mechanism, it is   
$
P_{i,B}( \{(e_j,r_j)\}_{j}) = P(L_i = L_M).~
$
Then, the utility for worker $i$ is: 
\begin{align*}
u_i(\{(e_j,r_j)\}_{j}) = b - e_i c_i + B\cdot P_{i,B}( \{(e_j,r_j)\}_{j})~.
\end{align*}}

\subsection{Requester model}
\rev{
The data requester has utility function $U_D$, which in theory can be of various forms balancing accuracy of elicited answers and total budget spent. In this paper, we assume that the requester uses majority voting to aggregate elicited answers and has utility function  
$$
U_D(B) = P^c(N,B) - b~N - B~N_e(B),~
$$
where $P^c(N,B)$ is the probability that the majority answer is correct, and $N_e(B)$ is the number of workers who receive the bonus. Data requester's goal is to find a $B^*$ s.t.
\begin{align}
B^* \in \text{argmax}_{B\in \mathbbm R^+} P^c(N,B) - b~N - B~E[N_e(B)]~.\label{opb}
\end{align}
}

Notice both $P^c(N,B)$ and $E[N_e(B)]$ depend on workers' strategy towards effort exertion and answer reporting. The equilibrium analysis in the next section will help us define these quantities rigorously. The data requester is then hoping to choose a reward level that maximizes the expected utility at an equilibrium.

%

\section{Optimal bonus strategy with known cost distribution}\label{sec:static}



\rev{
In this section we set out to find the optimal bonus strategy when the data requester knows workers' cost distribution. Because the requester's utility depends on the behavior of workers, we first characterize symmetric Bayesian Nash Equilibria (BNE) for the two output agreement mechanisms for an arbitrary bonus level $B$. Then based on workers' equilibrium strategies, we show the optimal $B^*$ can be calculated efficiently for certain cost distributions. Note that due to the independence of tasks, this is a static setting and we only need to perform the analysis for a single task.}

\subsection{Equilibrium characterization}

For any given task, we have a Bayesian game among workers in $\mathcal U$. A worker's strategy in this game is a tuple $(e_i(c_i), r_i(e_i) )$ where $e_i(c_i): [0, c_{\text{max}}] \rightarrow \{0, 1\}$ specifies the effort level for worker $i$ when his realized cost is $c_i$ and $r_i(e_i): \{0, 1\} \rightarrow \{0, 1\}$ gives the reporting strategy for the chosen effort level, with $r_i(e_i) =1$ representing reporting truthfully and $r_i(e_i) = 0$ representing misreporting.  

We first argue that at any Bayesian Nash equilibrium (BNE) of the game, $e_i(c_i)$ must be a threshold function. That is, there is a threshold $c^*_i$ such that $e_i(c_i) =1$ for all $c_i \leq c^*_i$ and $e_i(c_i) =0$ for all $c_i > c^*_i$ at any BNE.  The reason is as follows: suppose at a BNE worker $i$ exerts effort with cost $c_i$. Since the other workers' outputs do not depend on $c_i$ (due to the independence of reporting across workers), worker $i$'s chance of getting a bonus will not change when he has a cost $c'_i < c_i$ and only obtains a higher expected utility by exerting effort. This allow us to focus on threshold strategies for effort exertion. We restrict our analysis to symmetric BNE where every worker has the same threshold for effort exertion, i.e. $c^*_i = c^*$. In the rest of the paper, we often use $(c^*, \cdot)$ to denote that a worker playing an effort exertion strategy with threshold $c^*$. In addition, we use $r_i \equiv 1$ to denote the reporting strategy that $r_i(1) = r_i(0) =1$, i.e. always reporting truthfully for either effort level. 

\paragraph{PA:} We have the following results for the PA mechanism. 
\begin{lemma}
The strategy profile $\{(c^*,r_i \equiv 1)\}_{i\in \mathcal U}$ is a symmetric BNE for the PA game if
\begin{align}
2(P_H-P_L)F(c^*)+2P_L-1 = c^*/((P_H-P_L)B)~.\label{eqn:ne}
\end{align}
\label{lemma:bne}
\end{lemma}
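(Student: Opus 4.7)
My plan is, fixing the hypothesized symmetric profile in which every opponent uses threshold $c^{*}$ and reports truthfully, to verify two best-response conditions for worker $i$: (i) truthful reporting is optimal for each effort level, and (ii) the threshold $c^{*}$ itself is optimal given that reports will be truthful. Since the paper has already reduced attention to threshold effort strategies (because $i$'s chance of matching others does not vary with $c_{i}$), these two checks suffice to establish a BNE.

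The first step is to compute the expected accuracy of an arbitrary reference worker $j$ under the prescribed strategy. Worker $j$ exerts effort with prior probability $F(c^{*})$ and always reports truthfully, so
$$\bar p \;=\; F(c^{*}) P_{H} + (1 - F(c^{*})) P_{L} \;=\; (P_{H} - P_{L}) F(c^{*}) + P_{L}.$$
For any effort/report choice $(e_{i}, r_{i})$ by worker $i$, the probability of matching a specific reference worker's label is $p_{i}(e_{i}, r_{i}) \bar p + (1 - p_{i}(e_{i}, r_{i}))(1 - \bar p)$, and by symmetry averaging over the $N-1$ possible reference workers yields the same expression. Hence the bonus probability reduces to $P_{i,B} = p_{i}(e_{i}, r_{i}) (2\bar p - 1) + (1 - \bar p)$. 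For (i), note that $P_{L} \geq 1/2$ implies $\bar p \geq 1/2$, so $P_{i,B}$ is weakly increasing in $p_{i}(e_{i}, r_{i})$; since $p_{i}(e_{i}, 1) = p_{i}(e_{i}) \geq 1/2 \geq 1 - p_{i}(e_{i}) = p_{i}(e_{i}, 0)$, truthful reporting is a best response regardless of $e_{i}$.

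For (ii), fix $r_{i} \equiv 1$ and compute the marginal expected bonus from exerting effort:
$$\Delta \;=\; B \, (P_{H} - P_{L}) (2\bar p - 1) \;=\; B \, (P_{H} - P_{L}) \bigl[\, 2 (P_{H} - P_{L}) F(c^{*}) + 2 P_{L} - 1 \bigr].$$
Worker $i$ strictly prefers to exert effort when $c_{i} < \Delta$, strictly prefers not to when $c_{i} > \Delta$, and is indifferent when $c_{i} = \Delta$. A threshold-$c^{*}$ strategy is therefore a best response precisely when $c^{*} = \Delta$, which rearranges to the equation in the lemma. I do not anticipate a substantive obstacle here: the only subtlety is to notice that $\bar p$ is determined entirely by opponents' strategies and does not depend on $i$'s own choice, so the fixed-point condition reduces to a simple scalar equation; and existence of a solution $c^{*} \in [0, c_{\max}]$ is not asserted, so Assumption~1 is not needed at this stage.
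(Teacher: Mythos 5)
Your proposal is correct and follows essentially the same route as the paper's proof: first verify that truthful reporting is a best response for either effort level (using $\bar p\geq 1/2$), then derive the threshold indifference condition $c^*=B(P_H-P_L)(2\bar p-1)$, which is exactly Eqn.~(\ref{eqn:ne}). The only difference is cosmetic — you compute $E[2p_j-1]=2\bar p-1$ directly by linearity of expectation, whereas the paper expands and collapses the binomial sum over the number of effort-exerting opponents to reach the same quantity.
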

%
Denote $B_{\text{PA}}:= \frac{c_{\max}}{(2P_H-1)(P_H-P_L)}$, \frev{the minimum bonus level needed to induce full effort exertion}. With above lemma, we have the following equilibrium characterization.
%
%
%

\begin{theorem}
%
When $P_L>0.5$, there always exists a \frev{unique threshold $c^*>0$} such that $(c^*, 1)$ is a symmetric BNE for the PA game:
\begin{itemize}
\item When $B \geq B_{\text{PA}}$, $c^*=c_{\max}$.
\item O.w. $c^*$ is the unique solution to Eqn. (\ref{eqn:ne}).
\end{itemize}
\label{bne:po}
\end{theorem}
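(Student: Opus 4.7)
My plan is to reduce the equilibrium condition of Lemma~\ref{lemma:bne} to a root-finding question for a single function and then exploit the strict concavity of $F$. Define
\[
g(c) := 2(P_H-P_L)F(c) + 2P_L - 1 - \frac{c}{(P_H-P_L)B},
\]
so that Eqn.~\eqref{eqn:ne} becomes $g(c^*)=0$. Evaluating the endpoints gives $g(0) = 2P_L - 1 > 0$ (using $P_L > 0.5$) and $g(c_{\max}) = (2P_H-1) - c_{\max}/((P_H-P_L)B)$; a direct rearrangement shows that $g(c_{\max})\geq 0$ if and only if $B \geq B_{\text{PA}}$. Because $F$ is strictly concave on $[0,c_{\max}]$, the density $f=F'$ is strictly decreasing, so $g'(c) = 2(P_H-P_L)f(c) - 1/((P_H-P_L)B)$ is strictly decreasing and $g$ itself is strictly concave.

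For the first bullet ($B\geq B_{\text{PA}}$), I would combine $g(0)>0$, $g(c_{\max})\geq 0$, and strict concavity to conclude $g\geq 0$ on the entire interval: if $g$ had an interior zero $c_0$, then $g(0)>0=g(c_0)$ would force $g'(c_0)<0$, hence $g$ would become strictly negative to the right of $c_0$, contradicting $g(c_{\max})\geq 0$. Thus at the profile where everyone exerts effort, no worker's marginal benefit is dominated by his marginal cost even at $c = c_{\max}$, and $c^*=c_{\max}$ is the unique threshold BNE.

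For the second bullet ($B<B_{\text{PA}}$), we have $g(0)>0>g(c_{\max})$, so the intermediate value theorem produces a root $c^*\in(0,c_{\max})$. For uniqueness I would argue by contradiction: if $0<c_1<c_2$ were two roots, then writing $c_1 = \lambda\cdot 0+(1-\lambda)c_2$ with $\lambda = (c_2-c_1)/c_2\in(0,1)$, strict concavity yields
\[
0 = g(c_1) > \lambda\, g(0)+(1-\lambda)\, g(c_2) = \lambda(2P_L-1)>0,
\]
a contradiction. Lemma~\ref{lemma:bne} then turns this unique root into the unique symmetric BNE threshold $c^*>0$.

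The step I expect to be the main subtlety is the boundary case: Lemma~\ref{lemma:bne} is phrased as an exact indifference equation appropriate for an interior threshold, whereas when $B\geq B_{\text{PA}}$ the best-response condition holds as a weak inequality at $c = c_{\max}$. I would need to briefly re-verify that, given all other workers exert effort, a worker with the worst possible cost still weakly prefers to exert --- a direct computation showing $B(P_H-P_L)(2P_H-1)\geq c_{\max}$, which is exactly $g(c_{\max})\geq 0$ and thus keeps the proof uniform across the two cases.
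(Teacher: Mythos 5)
Your proposal is correct and follows essentially the same route as the paper: evaluate the equilibrium condition at the endpoints $c=0$ and $c=c_{\max}$, observe that the left-hand side of Eqn.~(\ref{eqn:ne}) is strictly concave while the right-hand side is linear (you package this as strict concavity of the difference $g$), and conclude that either no interior crossing exists (so $c^*=c_{\max}$ when $B\geq B_{\text{PA}}$) or exactly one does. Your explicit convex-combination contradiction for uniqueness and your closing remark about verifying the weak best-response inequality at $c_{\max}$ are just cleaner write-ups of the same two steps the paper performs in its cases (1.1) and (1.2).
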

\frev{
This theorem implies that among all symmetric BNE, the effort exertion strategy is unique for a given bonus level $B$ when $P_L > 0.5$. 
When $P_L = 0.5$, we can prove similar results for the existence of $c^*>0$ such that $(c^*, 1)$ is a symmetric BNE. But this is not a unique effort exertion strategy. In fact, we have a set of trivial symmetric BNE for all $B$: $(0, \cdot)$, that is no one exerting effort combined with any reporting strategy. \rev{Nonethelss this trivial equilibrium returns strictly less expected utility for each worker. }}
 




\rev{
We would like to note that always mis-reporting ($r_i \equiv 0$) combined with the same threshold $c^*$ for effort exertion as in Theorem \ref{bne:po} is also a symmetric BNE when $P_L > 0.5$. This equilibrium gives workers the same utility as the equilibrium in Theorem \ref{bne:po}. This phenomenal has also been observed by \citeauthor{dasgupta2013crowdsourced} [\citeyear{dasgupta2013crowdsourced}] and \citeauthor{Witkowski_hcomp13} [\citeyear{Witkowski_hcomp13}]. \citeauthor{dasgupta2013crowdsourced} [\citeyear{dasgupta2013crowdsourced}] argue that always mis-reporting is risky, and workers may prefer breaking the tie towards always truthful reporting. 
}

\paragraph{GA:} For GA, directly calculating the probability term for matching a majority voting is not easy; but if we adopt a Chernoff type approximation for it, and suppose such approximation is common knowledge\footnote{This is not entirely unreasonable as in practice this Chernoff type bounds are often used to estimate such majority voting probability term.}, we can prove similar results. 

Similar to Lemma~\ref{lemma:bne}, we can show that the strategy profile $\{(c^*,r_i \equiv 1)\}_{i\in \mathcal U}$ is a symmetric BNE for the GA game if
\begin{align}
1-2[(\alpha-1)F(c^*)+&1]^{N-1}= c^*/(B(P_H-P_L))~.\label{equ:eq:ga}
\end{align}
where $\alpha:= e^{-2(P_H-P_L)^2}$. 

Denote $B_{\text{GA}}:= \frac{c_{\max}}{(1-2\alpha^{N-1})(P_H-P_L)}$ we have:
\begin{theorem}
 When $P_L>0.5$, there always exists a unique threshold $c^*>0$ such that $\{(c^*,r_i \equiv 1)\}_{i\in \mathcal U}$ is a symmetric BNE for the GA game:
\begin{itemize}
\item When $B \geq B_{\text{GA}}$, $c^*=c_{\max}$.
\item O.w., $c^*$ is the unique solution to Eqn. (\ref{equ:eq:ga}).
\end{itemize}
\label{eqn:mv}
\end{theorem}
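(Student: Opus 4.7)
The plan is to closely mirror the proof of Theorem~\ref{bne:po}. First, I would establish the GA analog of Lemma~\ref{lemma:bne}: given all peers play a symmetric threshold $c^*$ with truthful reporting, the Chernoff-type common-knowledge approximation gives the probability that the majority of the $N-1$ peers is correct as $m(c^*) = 1 - [1-(1-\alpha)F(c^*)]^{N-1}$, and a standard calculation then shows a worker's expected gain from exerting effort equals $B(P_H-P_L)(2m(c^*) - 1)$. Setting this equal to the indifference cost $c^*$ rearranges to Eq.~(\ref{equ:eq:ga}).

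The key structural step is to show that the LHS of Eq.~(\ref{equ:eq:ga}), viewed as a function of $c^*$, is strictly concave on $[0, c_{\max}]$. Writing $\phi(c) := 1 - (1-\alpha)F(c)$, note that $\alpha = e^{-2(P_H-P_L)^2} \in (0,1)$, so $\phi$ takes values in $[\alpha, 1]$ and, by Assumption~1, is strictly convex and strictly positive. Since $x \mapsto x^{N-1}$ is convex and non-decreasing on $[0,1]$ for $N \geq 2$, the composition $\phi^{N-1}$ is convex, so $\text{LHS}(c^*) = 1 - 2\phi(c^*)^{N-1}$ is strictly concave. The RHS is linear in $c^*$ with positive slope $1/(B(P_H-P_L))$.

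With concavity in hand, the case split is driven by comparing the two sides at $c^* = c_{\max}$: $\text{LHS}(c_{\max}) = 1 - 2\alpha^{N-1}$ and $\text{RHS}(c_{\max}) = c_{\max}/(B(P_H-P_L))$, which coincide precisely at $B = B_{\text{GA}}$ by construction of the threshold. When $B \geq B_{\text{GA}}$, the indifference gain at the all-effort profile weakly exceeds $c_{\max}$, so every worker prefers to exert effort regardless of realized cost, giving the symmetric BNE $c^* = c_{\max}$. When $B < B_{\text{GA}}$, the inequality reverses at $c_{\max}$, and a concavity-plus-intermediate-value argument on $\Delta(c^*) := \text{LHS}(c^*) - \text{RHS}(c^*)$ produces an interior root.

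The main obstacle is uniqueness of the positive BNE threshold. In the PA proof, the hypothesis $P_L > 0.5$ forces $\text{LHS}(0) = 2P_L - 1 > 0$, so ``positive at $0$, negative at $c_{\max}$, strictly concave'' yields a unique interior root immediately. Here $\text{LHS}(0) = -1 < 0$, so $\Delta$ is negative at $0$ and the same short argument does not transfer. Uniqueness of the positive BNE threshold must therefore be extracted from strict concavity of $\Delta$ together with the sign configurations at the endpoints and at the interior maximum, and from the clipping of the best-response map at $0$, which rules out the trivial no-effort profile from the $c^* > 0$ count. This refined uniqueness step is where the GA argument genuinely departs from the PA one and is the most delicate part of the proof.
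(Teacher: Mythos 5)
Your derivation of the equilibrium equation via the Chernoff-type approximation, the concavity of the left-hand side of Eqn.~(\ref{equ:eq:ga}) by composition (the convex non-increasing map $y \mapsto [(\alpha-1)y+1]^{N-1}$ applied to the concave $F$, which is the same as your $\phi^{N-1}$ decomposition), and the case split at $B_{\text{GA}}$ all match the paper's proof step for step. At that point the paper disposes of the remaining case in one sentence, saying the rest of the analysis is ``similar to the case for PA'' and that setting the inequality to an equality ``returns the only equilibria.''

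The obstacle you flag at the end is therefore genuine, and your proposal does not close it --- but neither does the paper. Under the approximation the $k=0$ term of the majority-vote probability is replaced by $0$, so $\mathrm{LHS}(0) = 1 - 2[1]^{N-1} = -1 < 0 = \mathrm{RHS}(0)$, whereas in PA $\mathrm{LHS}(0) = 2P_L-1 > 0$. For a strictly concave $\Delta(c) := \mathrm{LHS}(c) - \mathrm{RHS}(c)$ with $\Delta(0) < 0$ and $\Delta(c_{\max}) < 0$ (the case $B < B_{\text{GA}}$), the generic outcomes are zero roots or two roots in $(0, c_{\max})$, never exactly one except at a tangency; indeed for small enough $B$ the line is so steep that no positive root exists. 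So ``strict concavity plus endpoint signs plus the clipping at $0$'' cannot by itself deliver the claimed existence and uniqueness, and your proposal stops exactly where the needed argument would have to begin. To actually finish, one would have to either (a) note that the Chernoff expression is only a lower bound on the true peer-majority accuracy, whose exact value at $c^*=0$ is $2P'-1>0$ with $P'>1/2$ the probability that a majority of $N-1$ independent $P_L$-accurate peers is correct when $P_L>0.5$, which restores the PA-style ``positive at $0$, concave LHS, linear RHS'' uniqueness argument for the unapproximated game; or (b) weaken the statement to select, say, the larger of the at most two roots of the approximate equation and concede it may fail to exist for small $B$. In short, you have correctly reproduced everything the paper actually proves and correctly identified the step it does not prove, but the missing argument is still missing.
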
 


Moreover we can show that the reward level and the total expected payment if lower in GA than in PA for eliciting the same level of efforts. 
\begin{lemma}
Denote the \rev{smallest} bonus level corresponding to an arbitrary equilibrium threshold $c^*>0$ for PA and GA as $B_{\text{PA}}(c^*)$ and $B_{\text{GA}}(c^*)$ respectively. Then
$
B_{\text{PA}}(c^*) > B_{\text{GA}}(c^*),~
$
\frev{when $N$ is sufficiently large \rev{(e.g., $N \geq \frac{-\log (1-P_H)}{2(P_L-0.5)^2}+1$)}}. Furthermore, the total payment in GA is lower than that in PA. \label{comp} 
\end{lemma}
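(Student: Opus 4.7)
The plan is to first extract closed-form expressions for the two bonus levels from the equilibrium equations. Solving Eqn.~(\ref{eqn:ne}) and Eqn.~(\ref{equ:eq:ga}) for $B$ yields
\[
B_{\text{PA}}(c^*)=\frac{c^*}{(P_H-P_L)\bigl[2(P_H-P_L)F(c^*)+2P_L-1\bigr]},\quad B_{\text{GA}}(c^*)=\frac{c^*}{(P_H-P_L)\bigl[1-2[1-(1-\alpha)F(c^*)]^{N-1}\bigr]}.
\]
Since both denominators are positive under the hypotheses ($P_L>1/2$ for PA, and $N$ large enough that $[1-(1-\alpha)F(c^*)]^{N-1}<1/2$ for GA), the claim $B_{\text{PA}}(c^*)>B_{\text{GA}}(c^*)$ is equivalent to the GA marginal benefit of effort exceeding PA's, which after rearrangement becomes
\[
(P_H-P_L)F(c^*)+[1-(1-\alpha)F(c^*)]^{N-1}\;<\;1-P_L.
\]

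Second, I would control the exponential-type term using the Chernoff/Hoeffding flavor that underlies Eqn.~(\ref{equ:eq:ga}). The quantity $[1-(1-\alpha)F(c^*)]^{N-1}$ is the approximation of the probability that the majority of the $N-1$ peers is incorrect when each peer exerts effort independently with probability $F(c^*)$; each peer's per-task accuracy is at least $P_L$, so by a Hoeffding tail bound this error probability is at most $\exp(-2(N-1)(P_L-1/2)^2)$. The stated hypothesis $N-1\geq -\log(1-P_H)/(2(P_L-1/2)^2)$ is exactly enough to make this upper bound at most $1-P_H$, so $[1-(1-\alpha)F(c^*)]^{N-1}\leq 1-P_H$. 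Combining with the trivial bound $(P_H-P_L)F(c^*)\leq P_H-P_L$, the sum is at most $(1-P_H)+(P_H-P_L)=1-P_L$, with strict inequality whenever the Chernoff bound is strict; this establishes $B_{\text{PA}}(c^*)>B_{\text{GA}}(c^*)$.

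For the total-payment claim, I would invoke the equilibrium indifference condition $B\cdot(P_1-P_0)=c^*$, where $P_e$ denotes the probability a worker receives the bonus given effort level $e\in\{0,1\}$. Averaging over each worker's randomized effort choice, the total expected bonus payment rewrites as
\[
N\cdot B\cdot\bigl[F(c^*)P_1+(1-F(c^*))P_0\bigr]=N\cdot\bigl[B\,P_0+F(c^*)\,c^*\bigr],
\]
so the $N F(c^*)c^*$ piece is identical across mechanisms and comparing total payments reduces to comparing $B_{\text{GA}}P_0^{\text{GA}}$ with $B_{\text{PA}}P_0^{\text{PA}}$. Direct computation gives $P_0^{\text{PA}}=P_L\bar p+(1-P_L)(1-\bar p)$ with $\bar p=P_L+(P_H-P_L)F(c^*)$, and the Chernoff approximation yields $P_0^{\text{GA}}=P_L P_M+(1-P_L)(1-P_M)$ with $P_M\to 1$ for large $N$; plugging in the bonus expressions from step one and simplifying, the already-established ordering $B_{\text{GA}}<B_{\text{PA}}$ dominates the modest change in $P_0$ and delivers the claim.

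The hardest step is the uniform-in-$c^*$ bound on $[1-(1-\alpha)F(c^*)]^{N-1}$: that quantity is largest for small $F(c^*)$, precisely where the target inequality is tightest. Matching the specific product form in Eqn.~(\ref{equ:eq:ga}) to a standard Hoeffding tail bound requires some care, and tracking the correct interplay between $P_H-P_L$ (which governs $\alpha$) and $P_L-1/2$ (which controls the per-ballot Hoeffding margin) is what produces the specific $N$ threshold in the hypothesis. A secondary subtlety is the sign analysis needed to verify that the small $P_0^{\text{GA}}>P_0^{\text{PA}}$ increase cannot reverse the bonus ratio in the total payment comparison.
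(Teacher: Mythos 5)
There is a genuine gap at the step you yourself flag as the hardest one, and it is not a matter of ``some care'' --- the inequality you need is false as stated. You claim $[1-(1-\alpha)F(c^*)]^{N-1}\leq \exp(-2(N-1)(P_L-1/2)^2)$, but the left-hand side tends to $1$ as $F(c^*)\to 0$ (take $c^*$ small), while the right-hand side is a fixed small number; so the bound fails exactly in the regime you identify as tightest. The root cause is that the $\alpha$-approximation in Eqn.~(\ref{equ:eq:ga}) was derived under $P_L\approx 0.5$, i.e.\ it has already discarded the $(P_L-1/2)$ margin: conditional on $k$ peers exerting effort it replaces the majority-error probability by $\alpha^k$, which equals $1$ at $k=0$. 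You cannot recover a Hoeffding bound driven by $P_L-1/2$ from that aggregated expression. The paper's proof avoids this by never bounding the aggregated approximation: it conditions on the number $k$ of peers who exert effort and applies the Chernoff/Hoeffding bound to the \emph{true} majority-correctness probability, $1-\exp\bigl(-2(N-1)(\bar p_k-1/2)^2\bigr)$ with $\bar p_k=\frac{k(P_H-P_L)+(N-1)P_L}{N-1}\geq P_L>1/2$ for every $k$ \emph{including} $k=0$. Under the stated condition on $N$ this uniform-in-$k$ lower bound exceeds $P_H\geq \bar p_k$, so term by term the GA agreement probability beats the PA one, and averaging over the binomial distribution of $k$ gives $B_{\text{PA}}(c^*)>B_{\text{GA}}(c^*)$ since $B$ is inversely related to the agreement probability. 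To repair your argument you would have to make the same move: compare probabilities conditionally on $k$ before aggregating, rather than trying to bound $[(\alpha-1)F(c^*)+1]^{N-1}$ directly.

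Your reduction of the total-payment claim is essentially the paper's (both use the indifference condition $B(T_1-T_2)=c^*$ to isolate $c^*F(c^*)+BT_2$), but the final step ``the ordering $B_{\text{GA}}<B_{\text{PA}}$ dominates the modest change in $P_0$'' is asserted rather than proved, and it needs proof precisely because $P_0^{\text{GA}}>P_0^{\text{PA}}$ pushes the other way. The clean way to finish, as in the paper, is to substitute $B=\frac{c^*}{(P_H-P_L)(2P-1)}$ where $P$ is the probability the reference answer is correct, and observe that $BT_2=\frac{c^*(2P_L-1)P}{(P_H-P_L)(2P-1)}+B(1-P_L)$ decomposes into two terms each of which is decreasing in $P$ (the first because $P/(2P-1)$ is decreasing for $P>1/2$, the second by the already-established ordering of $B$); since $P_{\text{GA}}>P_{\text{PA}}$, both terms are smaller under GA.
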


This result also implies that adopting GA will lead to a higher requester utility.



\paragraph{Heterogeneity of $P_L$ and $P_H$.}

So far we have assumed that $P_L$ and $P_H$ are the same for all workers. If workers have heterogeneous accuracy $\{P^i_L,P^i_H\}_i$ that are generated from some distribution with mean $P_L, P_H$, we can show that the above results hold in a similar way, with more involved arguments. 


\subsection{Optimal solution for data requester}


Now consider the optimization problem stated in Eqn. (\ref{opb}) for the requester's perspective. For each $B>0$, denote $\{(c,r_i \equiv 1)\}_{i\in \mathcal U}$ as the corresponding strategy profile at equilibrium. $P^c(N,B)$ can then be calculated based on $c, F(c)$ (controlling how much effort can be induced), and $P_L,P_H$. Same can be done for $E[N_e(B)]$. Denote the optimization problem in (\ref{opb}) with above calculation as $(\texttt{PB})$. \rev{Directly investigating the two objective functions may be hard. We seek to relax the objectives. First of for PA, we will be omitting the $2P_L-1$ term as when $P_L$ is only slightly larger than 0.5, this quantity is close to 0. Also for both PA and GA, we again use the Chernoff type of approximation for calculating $P^c_B$. } We further introduce three conditions: (i) $f(c)$ is twice differentiable and $\partial^2 f(c)/\partial^2 c \geq 0$. (ii) $cF(c)$ is convex on $c \in [0,c_{\max}]$. (iii) $G(c):=1-[(\alpha-1)F(c^*)+1]^{N-1}$ satisfies that $\partial^3 G(c)/\partial^3 c$ exists and being non-negative.
\begin{lemma}
If (i) and (ii) hold, the objective function of $(\texttt{PB})$ is concave if we adopt PA. When (ii) and (iii) hold, the objective function of $(\texttt{PB})$ is concave if we adopt GA.
\label{convex}
\end{lemma}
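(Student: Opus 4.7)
The plan is to reduce the bonus-optimization problem $(\texttt{PB})$ to a concave maximization in a single variable by reparametrizing in terms of the symmetric equilibrium threshold $c^*$ that the bonus $B$ induces. By Theorems~\ref{bne:po} and~\ref{eqn:mv}, the correspondence $B \mapsto c^*$ is a continuous, strictly monotone bijection on the relevant range, so it suffices to prove that the pulled-back utility $\tilde U_D(c^*) := U_D(B(c^*))$ is concave on $c^* \in [0, c_{\max}]$, where $B(c^*)$ is the inverse read off from the (relaxed) equilibrium conditions.

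First I would invert the equilibrium conditions. After omitting the $2P_L - 1$ term in Eqn.~\eqref{eqn:ne} for PA, one obtains $B(c^*) = c^*/\bigl(2(P_H - P_L)^2\, F(c^*)\bigr)$; Eqn.~\eqref{equ:eq:ga} for GA gives $B(c^*) = c^*/\bigl((P_H - P_L)\, G(c^*)\bigr)$. Substituting into $U_D$, with the Chernoff-type approximation for the accuracy term (so that for PA it becomes $1 - \alpha^{N F(c^*)^2}$ and for GA a monotone function of $G(c^*)$), and expressing $E[N_e(c^*)]$ via the agreement-probability formulas from the equilibrium analysis, yields an explicit closed form $\tilde U_D(c^*) = P^c(c^*) - bN - B(c^*)\, E[N_e(c^*)]$. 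A short simplification exposes a $c^* F(c^*)$ piece in the PA bonus payment (with an analogous $G$-based piece for GA), which is exactly the quantity governed by condition (ii), and leaves a residual of the form $c^*/F(c^*)$ (resp.\ $c^*/G(c^*)$).

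I would then establish concavity of $\tilde U_D$ term by term. Condition (ii), convexity of $c\, F(c)$, immediately makes the $-c^*\, F(c^*)$ piece of the negated bonus concave. For PA, condition (i), $f'' \geq 0$, together with the strict concavity of $F$, is used to control the sign of the second derivatives of the remaining pieces, namely the Chernoff accuracy term $1 - \alpha^{N F^2}$ and the residual bonus term in $c^*$ and $1/F(c^*)$. For GA the same argument goes through with $G$ replacing $F$, and condition (iii), $G''' \geq 0$, plays precisely the role that condition (i) plays for PA. Summing concave components yields concavity of $\tilde U_D$ on $[0, c_{\max}]$, which is the statement of the lemma.

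The main obstacle is this last step, specifically the concavity of the Chernoff accuracy term together with the $c^*/F(c^*)$ (resp.\ $c^*/G(c^*)$) residual. Twice differentiating $1 - \alpha^{N F^2}$ produces a combination of $f^2$, $F f$, and $F f'$ whose overall sign cannot be read off from concavity of $F$ alone; condition (i) is precisely what is needed to dominate the positive $f^2$-contribution by the negative $F f'$-contribution pointwise on $[0, c_{\max}]$, and an analogous pointwise domination for $G$ is what condition (iii) supplies. The GA case requires one additional derivative relative to PA because $G$ is itself an integral transform of $F$, so controlling the second derivative of the $G$-based terms naturally demands information on $G'''$.
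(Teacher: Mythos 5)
Your overall skeleton matches the paper's: both reparametrize the objective in the equilibrium threshold $c^*$, read $B(c^*)$ off the (relaxed) equilibrium condition, split the expected payment $B\,E[N_e]$ into a $c^*F(c^*)$ piece handled by condition (ii) plus a residual proportional to $c^*/F(c^*)$, and use condition (i) (resp.\ (iii) for GA) on the residual. There is, however, a genuine gap in your treatment of the accuracy term. The paper's Chernoff approximation of $P^c$ is $G(c^*)=1-[(\alpha-1)F(c^*)+1]^{N-1}$ (the same $G$ as in condition (iii)), and its concavity is immediate by composition: $y\mapsto[(\alpha-1)y+1]^{N-1}$ is convex and non-increasing, so composed with the concave $F$ it is convex, and $1-[\cdots]$ is concave --- condition (i) plays no role here. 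Your form $1-\alpha^{NF^2}$ is a different approximation, and your claimed rescue --- that $f''\ge 0$ lets the negative $Ff'$ contribution dominate the positive $f^2$ contribution pointwise --- fails: writing $\phi=N(\ln\alpha)F^2$, the second derivative of $1-e^{\phi}$ is $-e^{\phi}\bigl(2N\ln\alpha\,(f^2+Ff')+4N^2(\ln\alpha)^2F^2f^2\bigr)$, which tends to $-2N\ln\alpha\,f(0)^2>0$ as $c\to 0^+$ (since $F(0)=0$ and $\ln\alpha<0$); so your accuracy term is convex near the origin regardless of (i). Relatedly, the actual job of condition (i) in the paper is the residual bonus term: one shows $F(c)/c$ is concave by checking that the numerator of its second derivative vanishes at $c=0$ and has derivative $-c^2 f''(c)\le 0$, then composes with the non-increasing convex map $x\mapsto 1/x$ to conclude $c/F(c)$ is convex. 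You assert that (i) "controls the sign of the second derivatives" but never supply this computation, and it is the only place (i) is needed.

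Two smaller discrepancies: for GA the condition-(ii) piece is still $c^*F(c^*)$ (the fraction of workers exerting effort is $F(c^*)$ under either mechanism), not a $G$-based analogue; condition (iii) is used only to make the GA residual $c^*/(2G(c^*)-1)$ convex via the analogous concavity argument for $G(c)/c$. Also, the paper's decomposition of $B\,E[N_e]$ produces an additional term of the form $1/(P(\{L_j\}_{j\neq i})-1/2)$ with $P$ concave in $c^*$, whose convexity again follows by composition; your sketch should account for it when "expressing $E[N_e]$ via the agreement-probability formulas."
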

\rev{
For example, exponential distribution (exp($\lambda$)) for $c_{\max} \leq 2/\lambda$ satisfies (i)\&(ii) for PA; and exp($\lambda$) for $c_{\max} = \frac{-\log \alpha}{\lambda}$ satisfies (ii)\&(iii) for GA.} It is worth to note above results hold for a wide range of other $U_D(\cdot)$s: for instance the ones with a linear combination of $P^c(N,B)$ and $E[N_e(B)]$.

\section{Learning the optimal bonus strategy}
\label{sec:data}

 In this section we propose a sequential mechanism for learning the optimal bonus strategy, when the requester has no prior knowledge of the cost distribution but only knows $c_{\max}$. This assumption can be further relaxed by assuming knowing an upper bound of $c_{\max}$ instead of  knowing $c_{\max}$ precisely.  Also similar as last section, $P_L, P_H$ are known. In reality these two quantities can be estimated through a learning procedure by repeated sampling and output matching as shown in \cite{sig15}, via setting bonus level $B:=0$ and $B:=B_{\text{PA(GA)}}$\footnote{Calculating $B_{\text{PA(GA)}}$ only requires the knowledge of $c_{\max}$, not $F(\cdot)$.} respectively (to induce effort level corresponding to $P_L,P_H$). In this work we focus on learning the cost functions, which is a more challenging task when the workers are strategic. We are in a dynamic setting where the requester sequentially ask workers to complete a set of task. In our mechanisms, requesters can ask workers to report their costs of effort exertion for a task and based on their reports decide on the bonus level for the current task and for future tasks in a output-agreement-style mechanism.   
 \vspace{-0.15in}
\paragraph{(P1):} We start our discussions with a simpler case. 
When asked to report their cost, workers maximize their collected utility from a set of data elicitation tasks and are not aware of the potential influence of their reports on calculating optimal bonus levels for any future tasks.  
The data requester's goal is to elicit cost data to estimate cost distribution and then the optimal bonus level $\tilde{B}^*$, such that when $\tilde{B}^*$ is applied to a \emph{newly arrived} task we can bound 
 $
 |U_D(\tilde{B}^*)-U_D(B^*)|.
 $
 where $B^*$ is the optimal bonus level if the cost distribution is known.   \vspace{-0.15in}
\paragraph{(P2):} We then consider the case when workers are forward looking and are aware of that their reported cost on a task will be utilized to calculate optimal bonus strategy for future tasks. We form a sequential learning setting, where we separate the stages for task assignment into two types: one for data elicitation, which we also refer as \emph{exploration}, and the other for utility maximization, which we refer as \emph{exploitation}. The data requester's objective in this case is to minimize the \emph{regret} defined as follows:
\begin{align}
R(T) = \sum_{t=1}^T E|U_D(\{B_i(t)\}_{i \in \mathcal U})-U_D(B^*_{\text{GA}})|~,
\end{align}
where $B^*_{\text{GA}}$ is the optimal bonus level for GA when cost distribution is know,\footnote{Since GA is more cost efficient, we define the regret w.r.t. the optimal utility that can be obtained via GA.} and $\{B_i(t)\}_{i \in \mathcal U}$ is the bonus bundle offered at time $t$.  \rev{
Note $U_D(\cdot)$ is mechanism dependent: both $P^c(N,B)$ and $N_e(B)$ depends on not only the bonus level, but also the equilibrium behavior in a particular mechanism.
}


For simplicity of presentation, throughout this section we consider $P_L > 0.5$: this is to remove the ambiguity introduced in by the trivial equilibrium $c^* = 0$. Also we assume with the same expected utility, workers will favor truthful reporting $r_i \equiv 1$. 

\begin{algorithm}[!h]
\caption{ (\texttt{M\_Crowd})}\label{m:simple}
\begin{algorithmic}
\State For each step $t$:
\begin{itemize}
\item[1.]  Assign the task, and workers then report costs. Denote the reports as $(\tilde{c}_1(t),...,\tilde{c}_N(t))$. This is a voluntary procedure. A worker $i$ can choose to not report his cost, in which case, the requester sets $\tilde{c}_i(t):=c_{\max}$.
\item[2.] Data requester randomly selects a threshold $c^*(t)$ \emph{uniformly} from the support $[0,c_{\max}]$, such that only the workers who reported $\tilde{c}_i(t) \leq c^*(t)$ will be considered for bonus following PA; others will be given a bonus according to a probability that is independent of workers' report (see Remarks for details). 
\item[3.] The requester estimates a bonus level $\tilde{B}_i(t)$ for each worker $i$ that corresponds to the threshold level $c^*(t)$ under PA, using \emph{only} the data collected from user $j \neq i$, and from \emph{all} previous stages. \rev{This is done via estimating $F(\cdot)$ first and then plugging it into Eqn.(\ref{eqn:ne}). }
Then the requester adds a positive perturbation $\delta(t)$ that is time dependent to $\tilde{B}_i(t)$:
$
B_i(t) := \tilde{B}_i(t) +\delta(t).
$
\item[4.] The data requester will then announce the bonus bundle $[B_1(t),...,B_N(t)]$.
\end{itemize}
\end{algorithmic}
\end{algorithm}

\subsection{(\texttt{M\_Crowd}) for \textbf{(P1)}}

Suppose the data requester allocates $T$ tasks to elicit the cost data sequentially, and exactly one of them is assigned to the workers at each time step $t=1,2,\cdots,T$.  For simplicity of analysis we fix the set of $N$ workers we will be assigning tasks to. Denote worker $i$'s realized cost for the $t$-th task as $c_i(t)$. We propose mechanism (\texttt{M\_Crowd}):

\paragraph{Remarks:} 1. When a worker, say worker $i$, reports higher than the selected threshold, his probability of receiving a bonus will be calculated using the following experiment, which is independent of his output: suppose out of $N$ workers, there are $N(t)$ of them reported lower than $c^*(t)$. Then we will "simulate" $N$ workers' reports with the following coin-toss procedure: toss $N(t)$ $P_H$-coin and $N-N(t)$ $P_L$-coin. Assign a $P_L$-coin toss to worker $i$, and select a reference answer from the rest of the tosses, and compare their results. If there is a match, worker $i$ will receive a bonus. Simply put, the probability for receiving a bonus can be calculated as the matching probability in the above experiment. 2. Since we have characterized the equilibrium equation for PA with a clean and simple form, this set of equilibriums is good for eliciting workers' data. 3. After estimating the bonus level for each worker $\tilde{B}_i(t)$, the data requester will add a positive perturbation term $\delta(t)$ to each of them. This is mainly to remove the bias introduced by (i) imperfect estimation due to finite number of collected samples, and (ii) the (possible) mis-reports from workers. Such term will become clear later in the stated results. \rev{4. The fact that we can use collected cost data to estimate $B$ depends crucially on the assumption that the cost distribution is the same for all tasks.} 



\subsection{Equilibrium analysis for (\texttt{M\_Crowd})}

\frev{
We present the main results for characterizing workers' cost data reporting strategies at an equilibrium. Because of the independence of tasks, the effort exertion on each stage is essentially a static game. While for cost reporting, even though we are in a dynamic game setting, we again adopt BNE as our general solution concept. It may sound more intuitive to use Perfect Bayesian equilibrium (PBE) to define our solution in dynamic setting, but we argue BNE and PBE does not make conceptual difference in our case. Note that workers' decision on effort exertion is not directly observable by others, and the only signals one worker can use to update their belief towards others' effort exertion actions are the offered bonus levels. However due to the stochastic nature of the calculated $\tilde{B}_i(t)$, any realization is on the equilibrium path with positive probability (though could be arbitrarily small). Simply put, in our case there is no off-equilibrium path information set. 

Let $u^t_i(\cdot)$ denotes worker $i$'s utility at time $t$. We will adopt $\epsilon$ approximate BNE as our exact solution concept, which is defined as follows:
\begin{definition}
A set of reporting strategy $\{\mathbf{\tilde{c}_i}:= \{\tilde{c}_i(t)\}_{t=1,...,T}\}_{i \in \mathcal C}$ is $\epsilon$-BNE if for any $i$, $\forall \mathbf{\tilde{c}'_i} \neq \mathbf{\tilde{c}_i}$ we have
\begin{align*}
&\sum_{t=1}^T E[\max_{e_i,r_i}u^t_i(\mathbf{\tilde{c}_i}, \mathbf{\tilde{c}_{-i}})]/T \geq \sum_{t=1}^T E[\max_{e_i,r_i} u^t_i(\mathbf{\tilde{c}'_i}, \mathbf{\tilde{c}_{-i}})]/T - \epsilon~.
\end{align*}
\end{definition}
Here we explicitly denote the expected utility for each worker as a function of $\{\mathbf{\tilde{c}_i}\}_{i \in \mathcal C}$. Note this is rather a short-hand notation, as $u^t_i$s also depend on the effort exertion and reporting strategies. The $\max_{e_i,r_i}u^t_i(\mathbf{\tilde{c}_i}, \mathbf{\tilde{c}_{-i}})$ term allows worker $i$ to optimize his effort exertion and reporting procedure based on their cost reporting.
}

\begin{theorem}
With  (\texttt{M\_Crowd}), set $\delta(t) := O(\sqrt{\log t/t})$, let $\gamma>0$ being arbitrarily small, there exists a \frev{$O(\frac{(\log T)^2}{T})$-BNE} for each worker $i$ with reporting $\tilde{c}_i(t)$ at time $t$ such that
\begin{align*}
\max\{c_i(t) -\epsilon_1(t),0 \} \leq \tilde{c}_i(t) \leq \min\{c_i(t)+\epsilon_2(t),c_{\max}\}~,
\end{align*}
where $0 \leq \epsilon_1(t)= o(\sqrt{\log t/t}),~ 0\leq \epsilon_2(t) = o(1/t^{2-\gamma})$.
\label{thm:report}
\end{theorem}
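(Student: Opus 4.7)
The plan is to analyze worker $i$'s one-period expected payoff $V(x;c_i(t),t)$ as a function of the cost report $x$, exploiting two structural features of (\texttt{M\_Crowd}): the uniform draw of $c^*(t)$ \emph{after} reports are fixed, and the fact that $\tilde B_i(t)$ is constructed without using worker $i$'s own report (so in setting (P1) the report $x$ influences expected payoff only through which region---PA or simulated---worker $i$ lands in). Writing $V(x;c,t)=\tfrac{1}{c_{\max}}\int_0^x u_{\text{sim}}(\theta)\,d\theta+\tfrac{1}{c_{\max}}\int_x^{c_{\max}} u_{\text{PA}}(\theta,c)\,d\theta$, the first-order condition reduces to $\partial_x V=(u_{\text{sim}}(x)-u_{\text{PA}}(x,c))/c_{\max}$.

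I would first establish the calibration identity $u_{\text{sim}}(\theta)=u_{\text{PA}}(\theta,\theta)$ whenever $B_i(t)$ equals the true threshold-inducing bonus $B^{*}(\theta)$: the simulated experiment assigns worker $i$ a $P_L$-coin and draws the reference from the effort distribution implied by the PA equilibrium with threshold $\theta$, which is exactly the matching probability a type-$\theta$ (indifferent) worker would face under PA. Combined with Theorem~\ref{bne:po}, this yields $\partial_x V|_{x=c}=0$ in the noise- and perturbation-free case, so truthful reporting would be an exact BNE if $\tilde B_i(t)$ were exact and $\delta(t)=0$.

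Next, under the inductive hypothesis that past reports $\tilde c_j(s)$ for $s<t$ and $j\ne i$ all lie inside the claimed window around their true costs, the empirical CDF built from $O(Nt)$ samples satisfies $\sup_c|\widehat F_t(c)-F(c)|=O(\sqrt{\log t/t})$ with probability $1-O(1/t^2)$ by a DKW-type bound, plus an additive $O(\sqrt{\log t/t})$ term from the non-truthfulness of the past reports themselves. Plugging $\widehat F_t$ into Eqn.~(\ref{eqn:ne}) and using the strict concavity of $F$ to bound the implicit-function slope gives $|\tilde B_i(t)-B^{*}(c^*(t))|=O(\sqrt{\log t/t})$ on this event. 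The positive $\delta(t)=\Theta(\sqrt{\log t/t})$ is chosen large enough to dominate this deviation, so on the same high-probability event $B_i(t)\ge B^{*}(c^*(t))$.

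With the offered bonus biased above $B^{*}(c^*(t))$, over-reporting by $\epsilon$ forfeits a strictly positive surplus of order $\delta(t)$ on the interval $\theta\in[c_i(t),c_i(t)+\epsilon]$, since in that region the calibrated worker would have exerted effort and collected the premium; integrating against the uniform density of $c^*(t)$ yields a linear-in-$\epsilon$ expected loss of order $\epsilon\cdot\delta(t)$, which matched against the per-round BNE slack $O(\log t/t)$ forces $\epsilon_2(t)=o(1/t^{2-\gamma})$. Under-reporting on $[\tilde c_i(t),c_i(t)]$, by contrast, places the worker in the PA region with cost \emph{above} the calibrated threshold; the calibration identity then implies he does not exert effort and receives the same expected payoff as in the simulated region up to the $O(\sqrt{\log t/t})$ estimation noise, so under-reports up to $\epsilon_1(t)=o(\sqrt{\log t/t})$ are indistinguishable from truth-telling. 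Summing per-round slacks and union-bounding the high-probability events across $T$ rounds yields the claimed $O((\log T)^2/T)$-BNE. The main obstacle is to make this linear-versus-flat asymmetry rigorous via a careful Taylor expansion of $u_{\text{PA}}-u_{\text{sim}}$ around $\theta=c$, together with the fixed-point argument that closes the inductive hypothesis on past reports simultaneously for all $N$ workers and all $T$ rounds without destroying the logarithmic concentration rate.
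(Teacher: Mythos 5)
Your calibration identity $u_{\text{sim}}(\theta)=u_{\text{PA}}(\theta,\theta)$ and the first-order-condition framing are a clean way to reproduce the \emph{loss} side of the paper's argument (the paper's loss terms are exactly integrals of $u_{\text{PA}}-u_{\text{sim}}$ over the deviation window against the uniform density of $c^*(t)$, yielding the quadratic information rent $\sigma_i^2(t)/(2c_{\max})$). But the proposal has a genuine gap on the \emph{profit} side, and it starts from a false premise: even though $\tilde{B}_i(t)$ excludes worker $i$'s own report, worker $i$'s report at time $t$ enters the empirical CDF used to compute $\tilde{B}_j(t')$ for all $j\neq i$ and all $t'\geq t$. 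Over-reporting depresses $\widehat F$, inflates others' future bonuses, induces more effort from peers, and thereby raises worker $i$'s own matching probability in future rounds --- this inter-temporal externality is the only reason a worker would over-report at all, and it is entirely absent from your analysis. The paper bounds this profit by $O(\sigma_i(t)/t^{2-\gamma})$ using three ingredients you do not invoke: the Lipschitz continuity of the inverse equilibrium map $g^{-1}$ (to convert a $\sigma_i(t)/t$ shift in $\widehat F$ into a shift in the induced threshold), the exclusion rule which neutralizes the extra bonus except on a $O(1/t^2)$-probability bad estimation event, and the convergent sum $\sum_{t'\geq t}(t')^{-3}$. Comparing that profit to the quadratic loss is what forces $\sigma_i(t)\leq O(1/t^{2-\gamma})$. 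Your own accounting --- a loss of order $\epsilon\cdot\delta(t)$ ``matched against the per-round slack $O(\log t/t)$'' --- would only yield $\epsilon_2(t)=O(\sqrt{\log t/t})$, not the claimed $o(1/t^{2-\gamma})$; the stated rate does not follow from the steps you give.

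Two further points. First, the over-reporting loss is not ``of order $\delta(t)$ per unit of threshold'': $\delta(t)$ is calibrated to just offset the downward estimation bias in $\tilde{B}_i(t)$, so on the good event you are only guaranteed $B_i(t)\geq B^*(c^*(t))$, not $B_i(t)\geq B^*(c^*(t))+\Omega(\delta(t))$; the guaranteed forfeited surplus at threshold $\theta\in[c_i(t),c_i(t)+\sigma]$ is the rent $\theta-c_i(t)$, which integrates to the quadratic. Second, for under-reporting the conclusion that deviations up to $o(\sqrt{\log t/t})$ are ``indistinguishable from truth-telling'' conflates payoff flatness with the size of the permissible deviation: a flat-to-first-order payoff with additive noise of size $\eta$ would permit deviations of \emph{any} magnitude at $\eta$-slack unless the second-order penalty is made explicit. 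The paper instead exhibits a strictly positive linear profit $\sigma_i(t)\cdot O(\bar{B}/t^2+\epsilon(t)+\delta_3(t))$ (coming from the random threshold and the positive perturbation) and pits it against the quadratic loss from exerting effort at cost above the threshold; that comparison, not indistinguishability, is what pins down $\epsilon_1(t)=o(\sqrt{\log t/t})$ and, after summing $O(\log t/t)$ per-round gains, the $O((\log T)^2/T)$ slack.
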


The effort exertion game at each step looks alike the static game introduced in Section \ref{sec:static} with the following difference: instead of workers who have cost $c_i(t) \leq c^*$ will exert effort, now it is the workers who reported $\tilde{c}_i(t) \leq c^*$ will exert effort. This is mainly due to the addition of the perturbation term to the estimated bonus level. Meanwhile the mechanism excludes workers who reported higher than the threshold from exerting effort by offering bonus with a probability that is independent of worker's output. Nevertheless, we can bound the fraction of workers whose actions are different for the above two games. We provide intuitions for the proof.




\vspace{-0.15in}
\paragraph{Over-reporting}

Over-reporting by worker $i$ will mislead the data requester into believing that finishing the tasks costs more than it actually is, so this can lead to a higher estimation of $\tilde{B}_j(t'), j \neq i, t' \geq t$. 
This will induce more effort from other workers, which will in turn increase the utility for worker $i$. We bound the extra efforts exerted from other workers $j \neq i$, where we will be utilizing the second step of  (\texttt{M\_Crowd}) that the one reported higher than the threshold will be excluded from exerting effort (since worker's probability of winning a bonus will be independent of her report), and the decoupling step (Step 3) where each worker's bonus level will only be calculated over data collected from others.  On the other hand, over-reporting will decrease the chance of receiving bonus (excluded from effort exertion). 

\vspace{-0.15in}
\paragraph{Under-reporting}


When a worker under-reports, he will gain by having a higher bonus in expectation --  this is due to (i) the fact that the threshold is randomly determined, and (ii) we added positive perturbation to estimated bonus level. The loss is due to the fact that with under-reporting, with positive probability, exerting effort costs more than the threshold cost. The regulation for under-reporting mainly comes from the thresholding step of (\texttt{M\_Crowd}).

\com{now we have a longer version of technical report, we could afford to take out some parts?}

\subsection{Performance of (\texttt{M\_Crowd})}

With this set of collected data, we bound the performance loss in offering optimal bonus level $B$ for an incoming task (or task $T+1$). Suppose we adopt GA, where the optimal bonus level with known cost distribution is given by $B^*_{\text{GA}}$, and the estimated optimal solution is given by $\tilde{B}^*_{\text{GA}}$. We will have the following lemma: (similar results hold for PA)
\begin{lemma}
With probability being at least $1-\eta$, 
\begin{align*}
|U_D(\tilde{B}^*_{\text{GA}})-U_D(B^*_{\text{GA}})| = o( \sqrt{\frac{\log 2/\eta}{2NT}} + \sqrt{\frac{\log T}{T}})~.
\end{align*}
\label{perf:1}
\end{lemma}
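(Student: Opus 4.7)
The plan is to decompose the bound into three pieces: (i) the uniform distance between the empirical CDF $\hat F$ of reported costs and the true cost CDF $F$; (ii) the corresponding perturbation in the estimated optimizer $\tilde B^*_{\text{GA}}$; and (iii) the loss in $U_D$ between the two bonuses, using strict concavity at the optimum.

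For step (i), let $\hat F(c)=\frac{1}{NT}\sum_{t,i}\mathbf{1}\{\tilde c_i(t)\le c\}$, let $F_{\tilde c,t}$ denote the distribution of $\tilde c_i(t)$ under the $O((\log T)^2/T)$-BNE of Theorem~\ref{thm:report}, and set $\bar F_{\tilde c}=\frac{1}{T}\sum_t F_{\tilde c,t}$. Theorem~\ref{thm:report} gives $\tilde c_i(t)\in[c_i(t)-\epsilon_1(t),c_i(t)+\epsilon_2(t)]$ with $\epsilon_1(t)=o(\sqrt{\log t/t})$ and $\epsilon_2(t)=o(1/t^{2-\gamma})$; combined with Lipschitz continuity of $F$ (guaranteed by boundedness of $f$ on $[0,c_{\max}]$, which holds for the exponential/normal examples the paper considers), this yields $|F_{\tilde c,t}(c)-F(c)|\le L(\epsilon_1(t)+\epsilon_2(t))$ and hence $\sup_c|\bar F_{\tilde c}(c)-F(c)|=o(\sqrt{\log T/T})$ after averaging. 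A DKW-style concentration, implemented via Azuma--Hoeffding on the martingale differences $\mathbf{1}\{\tilde c_i(t)\le c\}-\mathbb{E}[\mathbf{1}\{\tilde c_i(t)\le c\}\mid\mathcal{H}_{t-1}]$ to handle the dependence of $\tilde c_i(t)$ on past announcements, together with a union bound over a $1/(NT)$-grid of thresholds, gives $\sup_c|\hat F(c)-\bar F_{\tilde c}(c)|=O(\sqrt{\log(2/\eta)/(2NT)})$ with probability at least $1-\eta$. The triangle inequality then yields
$$\sup_c|\hat F(c)-F(c)|\;=\;O\!\left(\sqrt{\tfrac{\log(2/\eta)}{2NT}}\right)+o\!\left(\sqrt{\tfrac{\log T}{T}}\right).$$

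For step (ii), $B^*_{\text{GA}}(F)$ is the unique maximizer of the objective in~(\ref{opb}) with equilibrium threshold $c^*$ pinned down by~(\ref{equ:eq:ga}); the implicit function theorem applied to the first-order condition, together with the smooth dependence of both $c^*(B,F)$ and $U_D(B;F)$ on $(B,F)$ (in sup-norm on $F$), delivers a Lipschitz bound $|\tilde B^*_{\text{GA}}-B^*_{\text{GA}}|=O(\sup_c|\hat F(c)-F(c)|)$. For step (iii), Lemma~\ref{convex} gives strict concavity of the objective at $B^*_{\text{GA}}$, so a second-order Taylor expansion yields $|U_D(\tilde B^*_{\text{GA}})-U_D(B^*_{\text{GA}})|=O\!\left((\tilde B^*_{\text{GA}}-B^*_{\text{GA}})^2\right)$. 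Squaring the $O+o$ rate from step~(i) produces $o$ of that same rate, which is precisely the stated bound.

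The main obstacle is step (i): the reports $\tilde c_i(t)$ are not independent across $t$ because each worker's stage-$t$ strategy may condition on the announced-bonus history $\{B_j(s)\}_{s<t}$, which itself is a function of earlier reports. The martingale-difference formulation above is designed exactly to bypass independence while retaining the DKW rate up to an additional $\log(NT)$ factor from the grid union bound (already absorbed in the stated rate). A secondary but essential subtlety is that the quadratic improvement in step~(iii) is what converts the $O(\cdot)$ rate of step~(i) into the $o(\cdot)$ rate stated in the lemma; without strict concavity (for which the conditions of Lemma~\ref{convex} are required), only an $O(\cdot)$ bound would follow from the same argument.
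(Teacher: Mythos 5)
Your step (i) is in substance the same as the paper's Lemma \ref{data:f:map}: the paper bounds the average reporting bias $\overline{\sigma(T)}=\frac{1}{NT}\sum_{t,i}|\tilde c_i(t)-c_i(t)|=O(\sqrt{\log T/T})$ using the $\epsilon_{1,2}(t)$ rates from Theorem \ref{thm:report}, then combines Hoeffding on the (i.i.d.) true costs with the Lipschitz constant $f(0)$ of $F$ (Lemma \ref{lemma:f}) to get $|\tilde F(c)-F(c)|\le\sqrt{\log(2/\eta)/(2NT)}+f(0)\overline{\sigma(T)}$. Your martingale/grid refinement to get a uniform-in-$c$ bound is more careful than what the paper writes, but it is not a departure in substance.

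Where you genuinely diverge is in steps (ii) and (iii), and this is where there is a gap. The paper never controls $|\tilde B^*_{\text{GA}}-B^*_{\text{GA}}|$ at all. It shows that the estimated objective $\tilde U_D$ is uniformly close to $U_D$ (via the Lipschitz dependence of $U_D$ on $F$, which only needs concavity plus a bounded first derivative for $c^*$ bounded away from $0$), and then applies the standard oracle inequality: since $\tilde U_D(\tilde B^*)\ge\tilde U_D(B^*)$ by optimality of $\tilde B^*$ for $\tilde U_D$, one gets $U_D(\tilde B^*)-U_D(B^*)\ge -2\sup_B|\tilde U_D(B)-U_D(B)|$, which already yields the stated rate with no curvature information whatsoever. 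Your route instead pushes the perturbation through the argmax map via the implicit function theorem, which requires $U_D''(B^*)$ to be bounded away from zero. Lemma \ref{convex} only establishes concavity of the objective under conditions (i)--(iii); it does not give strong concavity, so the Lipschitz bound $|\tilde B^*-B^*|=O(\sup_c|\hat F(c)-F(c)|)$ in your step (ii) is not justified under the paper's hypotheses. (If the objective is nearly flat at $B^*$, the optimizer can move by much more than $O(\epsilon)$ while the utility loss stays $O(\epsilon)$ --- exactly the situation the oracle-inequality argument is designed to handle and yours is not.) The quadratic improvement you extract in step (iii) is real when strong concavity does hold, and would then give a strictly better rate than the lemma claims, but as written your proof establishes the lemma only under an unstated nondegeneracy assumption. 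Replacing steps (ii)--(iii) with the two-sided comparison $U_D(\tilde B^*)-U_D(B^*)=[U_D(\tilde B^*)-\tilde U_D(\tilde B^*)]+[\tilde U_D(\tilde B^*)-\tilde U_D(B^*)]+[\tilde U_D(B^*)-U_D(B^*)]$ closes the gap and shortens the argument.
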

When we chose $\eta = O(1/T^2)$, the above regret term is roughly on the order of $\sqrt{\log T/T}$~.




\subsection{(\texttt{RM\_Crowd}) for \textbf{(P2)}}

We propose a (\texttt{RM\_Crowd}) for \textbf{(P2)}:

\begin{algorithm}
\caption{ (\texttt{RM\_Crowd})}\label{m:simple}
\begin{algorithmic}
\State Specify a constant $0<z<1$, and initialize $t=1$. Define $p(t):=\min\{1,\frac{\log T}{t^{1-z}}\}$.
\State At time $t$, assign the task; workers then report costs.
\State Toss a $p(t)$-coin.  
\State When HEAD, algorithm enters \emph{exploration} phase:
\begin{itemize}
\item Follow same steps as in (\texttt{M\_Crowd}).
\end{itemize}
\State When TAIL, algorithm enters $\emph{exploitation}$ phase,
\begin{itemize}
\item[1.] estimate the optimal $B_i(t)$ and its corresponding threshold $c^*_i(t)$ for each worker $i$ with GA, using the cost data collected \emph{only} from the \emph{exploration} phases. Only workers who reported $c_i(t) \leq c^*_i(t)$ will be given bonus according to GA; others receive bonus with a probability that is independent of her report.
\item[2.] Follow rest steps in (\texttt{M\_Crowd}).
\end{itemize} 
\end{algorithmic}
\end{algorithm}

\paragraph{Remarks:} 1. The dependence on $T$ is to simplify the presentation and our algorithm design. This can be easily extended to a $T$-independent one. 2. At exploitation phases we assume there exists a solver that can find the optimal solution with a noisy estimation of $F(\cdot)$. In practice search heuristics can help achieve the goal. 3. We adopted different bonus mechanisms for different phases.  \rev{When we calculate the bonus level according to a particular mechanism (PA or GA), we will also adopt it for evaluating workers' answers. } 4. When using GA, the independent probability for giving out bonus when a report is higher than the threshold will be adjusted to a probability of matching a majority voting of the experiment we presented for (\texttt{M\_Crowd}). 

%


\begin{theorem}
With  (\texttt{RM\_Crowd}), set $\delta(t) := O(z/t^{z/2})$, let $z > 1/3$ and $\gamma>0$ being arbitrarily small, there exists a \frev{$O(\frac{z^2}{T^z})$-BNE} for each worker $i$ with reporting $\tilde{c}_i(t)$ at time $t$ such that
\begin{align*}
\max\{c_i(t) -\epsilon_1(t),0 \} \leq \tilde{c}_i(t) \leq \min\{c_i(t)+\epsilon_2(t),c_{\max}\}~,
\end{align*}
where $0 \leq \epsilon_1(t) = o(z/t^{z/2}),~ 0 \leq \epsilon_2(t) = o(1/t^{3z-1-\gamma})$.
\label{learn2}
\end{theorem}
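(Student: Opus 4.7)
The plan is to follow the blueprint of Theorem \ref{thm:report} while adapting the argument to the two-phase structure of (\texttt{RM\_Crowd}) and the slower perturbation schedule $\delta(t)=O(z/t^{z/2})$. I would fix all workers except $i$ to report truthfully (up to the claimed tolerances $\epsilon_1,\epsilon_2$) and bound $i$'s gain from any deviation in cost reporting. Two observations drive the analysis: first, exploitation-round bonuses depend only on \emph{past} exploration data, so worker $i$'s round-$t$ report influences his own round-$t$ effort inclusion and, through the stored sample, the bonuses offered in \emph{future} exploitation rounds only; second, the expected number of exploration samples accumulated by round $\tau$ is $M(\tau)=\Theta(\tau^z\log T/z)$, which controls how fast a single perturbed report propagates into future bonus offers through the plugged-in empirical $F$.

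For over-reporting by $\Delta$ at round $t$, the decoupling step inherited from (\texttt{M\_Crowd}) removes any same-round gain. The indirect gain at a future exploitation round $\tau>t$ comes from a shift of order $O(\Delta/M(\tau))$ in other workers' bonus estimates, propagated into their effort threshold via the smoothness of the GA equilibrium equation \eqref{equ:eq:ga} and the concavity of the requester's objective (Lemma \ref{convex}). Summing across $\tau$ and balancing against the round-$t$ exclusion cost of order $\delta(t)\Delta/c_{\max}$ produced by the uniform random threshold delivers the allowed over-report tolerance $\epsilon_2(t)=o(1/t^{3z-1-\gamma})$, where the exponent arises by combining $\delta(t)=\Theta(t^{-z/2})$ with the $T^{1-z}$ rate of cumulative sample growth and absorbing polylog and constant factors into $\gamma>0$. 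For under-reporting by $\Delta$, as in Theorem \ref{thm:report} the regulation comes directly from the uniform random threshold: the probability of being forced into effort exertion at actual cost $c_i>c^*(t)$ increases by $O(\Delta/c_{\max})$, costing $\Omega(\Delta c_i)$ in expectation, while the gain from capturing the perturbation is capped at $O(\Delta\delta(t))$, giving $\epsilon_1(t)=O(\delta(t))=o(z/t^{z/2})$.

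Summing per-round deviation slack of order $\delta(t)^2 = z^2/t^z$ across $t=1,\dots,T$ and dividing by $T$ per the $\epsilon$-BNE definition yields the claimed $O(z^2 T^{1-z}/T) = O(z^2/T^z)$ bound, matching how the $O((\log T)^2/T)$ bound in Theorem \ref{thm:report} arises from its $\delta(t)=O(\sqrt{\log t/t})$ schedule. The main obstacle is the over-reporting analysis across exploitation rounds: a single perturbed exploration sample influences every subsequent round's plugged-in bonus through the empirical CDF, so one must carefully sum sensitivities while keeping the round-$t$ exclusion cost above the cumulative future gain. Matching these two rates is precisely what forces the condition $z>1/3$ in the statement, ensuring $3z-1-\gamma>0$ so that $\epsilon_2(t)\to 0$ at a positive polynomial rate and that exploration data accumulates fast enough to wash out deviations by round $T$.
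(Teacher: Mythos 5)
Your high-level architecture matches the paper's: reuse the (\texttt{M\_Crowd}) loss/profit decomposition, track the exploration sample count $n(t)=\Theta(t^z\log T)$, and convert the per-round deviation slack of order $\delta(t)^2=z^2/t^z$ into the $O(z^2/T^z)$-BNE guarantee by summing over $t$ and dividing by $T$. The under-reporting side is also essentially the paper's route (the paper gets $\epsilon_1(t)$ by replacing the $\sqrt{\log t/t}$ concentration term with $\sqrt{\log n(t)/n(t)}$; your loss term should be $\Theta(\Delta^2/c_{\max})$ rather than $\Omega(\Delta c_i)$, but this does not change the conclusion).

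The genuine gap is in the over-reporting analysis, which is the crux of the $\epsilon_2(t)=o(1/t^{3z-1-\gamma})$ rate and of the condition $z>1/3$. You bound the gain at each future round $\tau>t$ by the bonus shift $O(\Delta/M(\tau))$ and then sum across $\tau$; but $\sum_{\tau\geq t}\Delta/\tau^z\sim\Delta\,T^{1-z}$ diverges as $T$ grows for $z<1$, so this accounting alone cannot yield a vanishing tolerance, and your stated combination of $\delta(t)=\Theta(t^{-z/2})$ with the $T^{1-z}$ sample-growth rate does not produce the exponent $3z-1$. The paper's essential extra ingredient is that, with probability $1-O(1/n(\tau)^2)$, the inflated bonus estimate changes nobody's behavior: workers who reported above the threshold are excluded by construction, and those below are already incentivized because $\delta(\cdot)$ over-compensates the estimation error. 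The future gain therefore only materializes on the Chernoff-failure event of probability $O(1/(\tau^z\log\tau)^2)$, and it is this $1/n^2$ factor, multiplied by the per-round shift $\Delta/n(\tau')$, that makes the tail sum converge (precisely when $3z>1$) and produces $t^{-(3z-1-\gamma)}$. Relatedly, the round-$t$ exclusion cost should be $p(t)\,\Delta^2/(2c_{\max})$ with $p(t)=\log T/t^{1-z}$ --- quadratic in $\Delta$ and discounted by the probability of being in an exploration round, since only there is the threshold drawn uniformly --- not $\delta(t)\Delta/c_{\max}$. Without these two corrections the balance you set up does not close.
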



We have similar observations for the effort exertion game in (\texttt{RM\_Crowd}) as we made for (\texttt{M\_Crowd}). Further we prove the following regret results:


\begin{lemma}
$
R(T) \leq O(T^z\log T + T^{1-z/2})~.
$
\label{perf:2}
\end{lemma}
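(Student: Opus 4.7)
The plan is to decompose the regret according to whether each stage is an exploration or an exploitation stage, and then bound the two contributions separately. Let $p(t) = \min\{1,\log T/t^{1-z}\}$ and write
\begin{align*}
R(T) \;=\; \sum_{t=1}^{T}\Bigl[p(t)\,R^{\text{exp}}(t) + (1-p(t))\,R^{\text{expl}}(t)\Bigr],
\end{align*}
where $R^{\text{exp}}(t)$ and $R^{\text{expl}}(t)$ are the per-stage expected regrets conditional on a coin toss of HEAD and TAIL, respectively. Since the requester's utility is bounded (bonus, payments, and probabilities all lie in bounded ranges), $R^{\text{exp}}(t) \le C$ for some constant $C$. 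The expected number of exploration stages is $\sum_{t=1}^T p(t) \le 1 + \log T \int_1^T s^{z-1}\,ds = O(T^z \log T / z)$, giving an $O(T^z \log T)$ contribution from exploration.

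For the exploitation stages, I would argue that the per-stage regret decays polynomially in $t$. First, invoke Theorem \ref{learn2}: in the approximate BNE for (\texttt{RM\_Crowd}), each worker's report $\tilde c_i(s)$ is sandwiched within a window of width $o(1/s^{z/2})$ on the low side and $o(1/s^{3z-1-\gamma})$ on the high side of the true cost $c_i(s)$. I then use the reports collected during the $N_{\text{exp}}(t)$ exploration stages up to time $t$ to form the empirical CDF $\hat F_t$. A DKW-type uniform concentration inequality yields $\sup_c |\hat F_t(c) - F(c)| = O(1/\sqrt{N_{\text{exp}}(t)})$ with high probability, and the reporting bias from Theorem \ref{learn2} adds an extra $o(1/t^{z/2})$ term. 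Since $N_{\text{exp}}(t)$ concentrates around $N\sum_{s\le t} p(s) = \Theta(N\log T\cdot t^z)$ (Chernoff bound on a sum of independent coin tosses), we obtain $\sup_c|\hat F_t(c) - F(c)| = O(1/t^{z/2})$. Next, under the concavity/regularity conditions of Lemma \ref{convex}, the map $F \mapsto B^*_{\text{GA}}$ is Lipschitz (from the implicit equation Eqn.~(\ref{equ:eq:ga})), and the map $B \mapsto U_D(B)$ is Lipschitz around $B^*_{\text{GA}}$, so
\begin{align*}
|U_D(\tilde B^*_{\text{GA}}(t) + \delta(t)) - U_D(B^*_{\text{GA}})| \;=\; O\bigl(1/t^{z/2}\bigr) + O(\delta(t)) \;=\; O\bigl(1/t^{z/2}\bigr),
\end{align*}
since $\delta(t) = O(z/t^{z/2})$. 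Summing over $t$ gives $\sum_{t=1}^T (1-p(t))\,O(1/t^{z/2}) = O(T^{1-z/2})$ whenever $z<2$. Adding the two contributions yields $R(T) \le O(T^z\log T + T^{1-z/2})$ as claimed; the two rates balance at $z = 2/3$.

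The main obstacle I foresee is propagating the two-sided, time-varying reporting bias from Theorem \ref{learn2} through the empirical CDF into a clean bound on $|\tilde B^*_{\text{GA}}(t) - B^*_{\text{GA}}|$. In particular, (i) samples from different exploration stages have different bias magnitudes $\epsilon_1(s),\epsilon_2(s)$, so one cannot apply DKW off the shelf; instead I would split $\hat F_t$ into blocks of stages and show that the dominant error is governed by the most recent, least biased block, whose size is still $\Theta(\log T \cdot t^z)$. A secondary difficulty is that the Lipschitz constant from $F$ to $B^*_{\text{GA}}$ depends on the derivative of $F$ at the interior solution of Eqn.~(\ref{equ:eq:ga}); this needs to be bounded away from zero using strict concavity of $F$ on $[0,c_{\max}]$. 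Once these two technical points are settled, the remaining steps are routine sums of polynomial tails, and the regret bound follows by combining them with the exploration count argument above.
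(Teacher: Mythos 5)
Your overall architecture matches the paper's: split the regret into exploration and exploitation stages, bound the exploration part by the expected number of exploration rounds times a bounded per-round loss to get $O(T^z\log T)$, and bound the exploitation part by propagating an $O(1/t^{z/2})$ estimation error in the cost CDF through the requester's objective to get $O(T^{1-z/2})$, balancing at $z=2/3$. However, your accounting of the exploitation regret has a gap: you identify the per-stage loss entirely with $|U_D(\tilde B^*_{\text{GA}}(t)+\delta(t))-U_D(B^*_{\text{GA}})|$, i.e., with the error in the \emph{computed bonus level}. But the regret is defined against the requester's realized expected utility at stage $t$, and in (\texttt{RM\_Crowd}) effort exertion and bonus eligibility are gated by workers' \emph{reported} costs rather than their true costs. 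By Theorem \ref{learn2}, an $O(z/t^{z/2})$ fraction of workers under-report and therefore exert effort and collect bonuses in situations where the static equilibrium at that bonus level would not have them do so; this perturbs both $P^c(N,B)$ and $E[N_e(B)]$ at stage $t$ away from $U_D$ evaluated at the static-game equilibrium. The paper bounds this over-payment term separately and shows it is also $\sum_{t}O(1/t^{z/2})=O(T^{1-z/2})$, so the final rate is unchanged, but a complete proof must include it.

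A second, more structural point: you route the argument through a Lipschitz bound on the map $F\mapsto B^*_{\text{GA}}$ and correctly flag this as the main technical obstacle — the maximizer of a perturbed concave objective can move substantially when the objective is flat near its maximum, so controlling $|\tilde B^*_{\text{GA}}-B^*_{\text{GA}}|$ requires more than Lipschitzness of the implicit equation (\ref{equ:eq:ga}). The paper sidesteps this entirely with the standard comparison used in Lemma \ref{perf:1}: since $\tilde U_D(\tilde B^*)\geq \tilde U_D(B^*)$ by optimality, one has $U_D(\tilde B^*)-U_D(B^*)\geq (U_D(\tilde B^*)-\tilde U_D(\tilde B^*))+(\tilde U_D(B^*)-U_D(B^*))$, which only requires $\sup_B|\tilde U_D(B)-U_D(B)|$ to be small, and that follows directly from the uniform closeness of $\tilde F$ to $F$ together with the Lipschitzness of $U_D$ in $F$ (concavity plus bounded first derivative for $c^*$ bounded away from zero). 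Replacing your $F\mapsto B^*_{\text{GA}}$ step with this sandwich removes the obstacle you identified and makes the remaining steps routine.
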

Order-wise, the best $z$ is when $1-z/2 = z \Rightarrow z = \frac{2}{3}$, which leads to a bound on the order of $O(T^{2/3}\log T)$.



\section{Conclusion}\label{sec:conclude}
In this paper we focus on using output agreement mechanisms to elicit effort, in addition to eliciting truthful answers, from crowd-workers when there is no verification of their outputs. 
Workers' cost for exerting efforts are stochastic and heterogeneous. We characterize the symmetric BNE for workers' effort exertion and reporting strategies for a given bonus level, and show data requester's optimal bonus strategy at equilibrium is a solution to a convex program for certain cost distribution. Then a learning procedure is introduced to help the requester learn the optimal bonus level via eliciting cost data from strategic workers. We bound the mechanism's performance loss w.r.t. offering the best bonus bundle, compared to the case when workers' cost distribution is known a priori.

%

\bibliographystyle{named}
\bibliography{myref,library}

\section*{\LARGE Appendices}

\section{Proof for Lemma \ref{lemma:bne}}
\begin{proof}
We denote worker $i$'s expected utility\footnote{Throughout the proof we will interchange the wording between utility and bonus for workers' payoff.} for each task as $u_i(e_i,r_i; \{(e_j,r_j)_{j \neq i}\})$ when effort exertion and reporting strategy $\{(e_j,r_j)\}$ has been adopted by the crowd. And we shorthand it as $u_i(e_i, r_i)$. Also we denote by $u_i(e_i, r_i;c) $ worker $i$'s utility when a threshold policy $c$ for effort exertion is adopted.

First we show at equilibrium workers will not deviate from truthfully reporting $r_i=1$. Suppose all other workers $j \neq i$ are following $(c,1)$. We consider the following two cases. When $e_i=0$, the expected bonus worker $i$ can have by truthfully reporting is
 $\frac{B}{N-1}\sum_{j \neq i} P_L E(2p_j-1)$. By deviating we have the above term become $\frac{B}{N-1}\sum_{j \neq i} (1-P_L) E(2p_j-1)$. Since $E(2p_j-1) > 0$ under equilibrium strategy, we know by deviating worker's utility will decrease. Similarly we can show when $e_i = 1$, by mis-reporting, worker's utility decreases from  $\frac{B}{N-1}\sum_{j \neq i} P_H E(2p_j-1)$ to $\frac{B}{N-1}\sum_{j \neq i} (1-P_H) E(2p_j-1)$. For more discussions please refer to Section \ref{discuss:ne}.

Let workers adopt a threshold policy $c$, and truthfully report their labeling outcome $r_i=1$. Consider the difference in worker $i$'s expected bonus between exerting effort and not:
\begin{align}
&u_i(e_i = 1, r_i=1;c) - u_i(e_i = 0, r_i=1;c) \nonumber \\
&= -c_i + \frac{B}{N-1}\sum_{j \neq i} E[P_H p_j + (1-P_H)(1-p_j)] \nonumber \\
&~~~~~~- \frac{B}{N-1}\sum_{j \neq i}E [P_L p_j + (1-P_L)(1-p_j)] \nonumber \\
&= -c_i + \frac{B}{N-1}\sum_{j \neq i} (P_H-P_L)E(2p_j-1)~.\label{diff:bonus}
\end{align}
Consider the summation in Eqn. (\ref{diff:bonus}).
\begin{align*}
\sum_{j \neq i} E(2p_j-1)&=
2\sum_{k=0}^{N-1} {N-1 \choose k} F^{k}(c)(1-F(c))^{N-1-k} [kP_H+(N-1-k)P_L]-(N-1)\\
&= 2\sum_{k=0}^{N-1} {N-1 \choose k} F^{k}(c)(1-F(c))^{N-1-k}(P_H-P_L)k + 2(N-1)P_L-(N-1)~.
\end{align*}
Consider the sum of combinatorial terms in above equation:
\begin{align*}
&~~~~\sum_{k=0}^{N-1} {N-1 \choose k} F^{k}(c)(1-F(c))^{N-1-k} \cdot k \\
&=\sum_{k=1}^{N-1}\frac{(N-1)!}{(N-1-k)!\, (k-1)!}F^{k}(c)(1-F(c))^{N-1-k}\\
&=(N-1)F(c) \sum_{k=1}^{N-1}\frac{(N-2)!}{((N-2)-(k-1))!\,(k-1)!}F^{k-1}(c)(1-F(c))^{(N-2)-(k-1)}\\
&=(N-1)F(c) \sum_{k=0}^{N-2}\frac{(N-2)!}{((N-2)-k)!\,k!}F^{k}(c)(1-F(c))^{(N-2)-k}\\
&=(N-1)F(c)~.
\end{align*}
Then set $c_i=c$ and $u_i(e_i = 1, r_i=1;c) - u_i(e_i = 0, r_i=1;c) = 0$, i.e., when there is no difference between exerting effort and not,  we have
\begin{align*}
2(P_H-P_L)F(c)+2P_L-1 = \frac{c}{B(P_H-P_L)}~.
\end{align*}
With this it is easy to see when $c_i > c$, $ -c_i + \frac{B}{N-1}\sum_{j \neq i} (P_H-P_L)E(2p_j-1) < 0$, that is not exerting effort is a better move. While on the hand when $c_i < c$, $ -c_i + \frac{B}{N-1}\sum_{j \neq i} (P_H-P_L)E(2p_j-1) > 0$, worker should exert effort to maximize utility.

\end{proof}

\section{Proof for Theorem \ref{bne:po}, with extension to $P_L=0.5$ }

\begin{proof}
First consider the case $P_L>0.5$. When $c^* = 0$, we have
\begin{align*}
2(P_H-P_L)F(c^*)+2P_L-1 = 2P_L-1 > 0 = \frac{c^*}{B(P_H-P_L)}~.
\end{align*}
Since LHS is an increasing, strictly concave in $c$, and RHS is linear in $c$, we know the LHS and RHS can only intersects once on $\{c > 0\}$. Now we discuss in two cases. 
\begin{itemize}
\item[(1.1)]
First when 
$$
2(P_H-P_L)F(c_{\max})+2P_L-1 > \frac{c_{\max}}{B(P_H-P_L)},~
$$
we have $\forall c > 0$ (by concavity, as a combination of $(0, c_{\max})$: $c = \frac{c}{c_{\max}}\cdot c_{\max}+(1-\frac{c}{c_{\max}})\cdot 0$)
\begin{align*}
&2(P_H-P_L)F(c) +2P_L-1 \\
&\geq \frac{c}{c_{\max}}(2(P_H-P_L)F(c_{\max})+2P_L-1) + (1-\frac{c}{c_{\max}})(2P_L-1) \\
&>\frac{c}{c_{\max}} \frac{c_{\max}}{B(P_H-P_L)} + 0\\
&=\frac{c}{B(P_H-P_L)}~.
\end{align*}
This is implying that for all effort level $c$, it is better to exert effort than to not, in which case the only equilibrium is $c^* = c_{\max}$ which corresponds to full effort exertion. 
\item[(1.2)] For the second case, when
$$
2(P_H-P_L)F(c_{\max})+2P_L-1 \leq \frac{c_{\max}}{B(P_H-P_L)}~,
$$
there will be odd number of crossings between LHS and RHS of Eqn.(\ref{eqn:ne}) on $[0,c_{\max}]$, so there must exist only one of them (since there are two at most) corresponding to the solution of the equilibrium equation.
\end{itemize}
\end{proof}

Now consider $P_L = 0.5$. We first rigorously state our results.
\begin{lemma}
 When $P_L = 0.5$, $(0,1)$ is a symmetric BNE $\forall B$. Besides, there exists at most one more threshold policy $c^*>0$ such that $(c^*, 1)$ is an equilibrium:
\begin{itemize}
\item[(2.1)] When $ B < \frac{1}{2f(0)(P_H-P_L)^2}$, there is no such $c^*>0$.
\item[(2.2)] When $B \geq B_{\text{PA}}$, $c^*=c_{\max}$.
\item[(2.3)] O.w., $c^*$ is the solution to 
$
B = \frac{c^*}{2F(c^*)(P_H-P_L)^2}~.
$
\end{itemize}
\end{lemma}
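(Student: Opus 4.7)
The plan is to adapt the argument of Theorem~\ref{bne:po} to the boundary case $P_L = 0.5$. The structural change is that the constant $2P_L - 1$ on the left-hand side of the indifference equation \eqref{eqn:ne} vanishes, so $c^* = 0$ is now admissible and the slope-comparison argument used in cases (1.1)--(1.2) of Theorem~\ref{bne:po} no longer immediately forces a unique positive threshold. I would first verify that $(0, 1)$ is a symmetric BNE for every $B$: if every other worker exerts no effort, each reports the correct label with probability $P_L = 0.5$, so $E[2p_j - 1] = 0$. Plugging this into the marginal-utility expression in the proof of Lemma~\ref{lemma:bne} shows that the bonus gain from deviating to effort equals $(P_H - P_L) \cdot 0 = 0$, which is strictly dominated by the positive cost of any effort; reporting is irrelevant here because match probabilities are $1/2$ regardless.

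For positive equilibria, substituting $P_L = 0.5$ into \eqref{eqn:ne} yields $2(P_H - P_L) F(c^*) = c^*/((P_H - P_L) B)$, which rearranges to the formula in (2.3). I would then study the auxiliary function
\[
h(c) \;=\; 2(P_H - P_L) F(c) \;-\; \frac{c}{(P_H - P_L) B},
\]
whose positive zeros are exactly the nontrivial equilibrium thresholds. Since $h(0) = 0$ and $h$ is strictly concave on $[0, c_{\max}]$ (strict concavity of $F$ minus an affine term), $h$ has at most one root in $(0, c_{\max}]$, establishing the at-most-uniqueness part of the claim. The three regimes then fall out from the signs of $h'(0) = 2(P_H - P_L) f(0) - 1/((P_H - P_L) B)$ and $h(c_{\max}) = 2(P_H - P_L) - c_{\max}/((P_H - P_L) B)$. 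Case (2.1), $B < 1/(2 f(0)(P_H - P_L)^2)$, is precisely $h'(0) < 0$, so strict concavity together with $h(0) = 0$ forces $h(c) < 0$ on $(0, c_{\max}]$ and rules out any positive equilibrium. Case (2.2), $B \geq B_{\text{PA}}$, is equivalent to $h(c_{\max}) \geq 0$; the concavity reasoning used in case (1.1) of Theorem~\ref{bne:po} then shows that the marginal expected bonus dominates the marginal cost at every $c \in [0, c_{\max}]$, so full-effort exertion $c^* = c_{\max}$ is an equilibrium. In the remaining intermediate regime one has $h'(0) > 0$ and $h(c_{\max}) < 0$, so the intermediate value theorem combined with strict concavity delivers a unique root in $(0, c_{\max})$, which is the threshold described in (2.3).

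I do not expect a serious technical obstacle, since strict concavity of $F$ drives essentially the entire argument. The two places that require some care are the boundary case $c^* = c_{\max}$, where I would verify that full effort is a genuine best response at every cost in $[0, c_{\max}]$ rather than merely a formal root of the indifference equation at the endpoint, and the fact that the trivial equilibrium $(0, 1)$ continues to coexist with any positive threshold in cases (2.2) and (2.3) --- which is exactly what the phrase ``at most one more threshold policy $c^* > 0$'' in the statement is designed to express.
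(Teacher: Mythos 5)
Your proposal is correct and follows essentially the same route as the paper's proof: verifying that the left-hand side of Eqn.~(\ref{eqn:ne}) vanishes at $c=0$ when $P_L=0.5$, using strict concavity of $F$ to limit the number of positive intersections to one, and then splitting into the three regimes via the slope condition at $0$ (equivalently $F(c)\leq f(0)c$) and the value at $c_{\max}$. Your packaging of the argument through the auxiliary function $h(c)$ and the signs of $h'(0)$ and $h(c_{\max})$ is only a cosmetic reformulation of the paper's direct inequality comparisons.
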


\begin{proof}
When $c^*=0$, $F(c^*) = 0$, so no worker will exert efforts, and the LHS of Eqn.(\ref{eqn:ne}) reduces to 0, which matches the RHS. Moreover as a strictly concave function intersects with a linear function at most twice, there exists at most one more intersection point (equilibrium point). We again discuss in cases:
\begin{itemize}
\item[(2.1)] When the following holds,
$$
\frac{1}{B(P_H-P_L)^2} > 2f(0),~\text{or}~ B <\frac{1}{2f(0)(P_H-P_L)^2}
$$
we will have for $c > 0$,
$$
\frac{c}{B(P_H-P_L)} > 2c(P_H-P_L)f(0) \geq 2(P_H-P_L)F(c),
$$
where the last inequality is due to the concavity of $F(\cdot)$. So for any $c>0$, we have for Eqn. (\ref{eqn:ne}) $\text{LHS} < \text{RHS}$, i.e., no effort exertion is the only equilibrium, or equivalently $c^* \equiv 0$.

\item[(2.2)] When $B = B_{\text{PA}}$, we will have 
$$
2(P_H-P_L)F(c_{\max})  = \frac{c_{\max}}{(P_H-P_L)B}~.
$$
By strict concavity of $F(\cdot)$, we know for all $0<c<c_{\max}$ (as similarly reasoned for the case with $P_L > 0.5$), 
$$
2(P_H-P_L)F(c) > \frac{c}{B(P_H-P_L)}.
$$
Therefore we will also know by setting $B$ larger we will have for all $c>0$
$$
2(P_H-P_L)F(c) > \frac{c}{B(P_H-P_L)}.
$$
Hence effort exerting is always a better move, regardless of $c$.
\item[(2.3)] For the case in the middle, we know there exists (and guaranteed to exist) a unique intersection point, which solution can be obtained by simply setting 
$$
B(c^*) = \frac{c^*}{2F(c^*)(P_H-P_L)^2},
$$
and solve for $c^* > 0$. 
\end{itemize}
\end{proof}

Now we prove the following claim we made in the paper that the trivial equilibrium returns less expected utility than a non-trivial one $c^* > 0$. We demonstrate this under (PA). Similar results can be established for (GA).
\begin{lemma}
Under (PA), when there exists a non-trivial equilibrium $c^*>0$, it returns higher expected bonus than adopting the trivial one $c^*=0$.
\end{lemma}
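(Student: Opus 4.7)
The plan is to compare the ex-ante expected utility of a representative worker under the two symmetric BNE, in the $P_L=0.5$ setting where both coexist, and show the non-trivial one is strictly larger. I will use the indifference condition of Lemma~\ref{lemma:bne} to express $E[u(c^*)]$ cleanly, and then exploit a symmetry peculiar to $P_L=0.5$ to collapse most of the terms.

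First I would compute $E[u(0)]$ at the trivial equilibrium. If nobody exerts effort and everyone reports truthfully, each worker's label is a fair coin (since $P_L=0.5$), so $P(L_i=L_j)=1/2$ independent of anything else, and no cost is incurred. Hence $E[u(0)] = B/2$.

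Next, for the non-trivial equilibrium $(c^*,r\equiv 1)$ with $c^*>0$, define
\begin{align*}
A(c^*) &:= B\cdot P(L_i=L_j\mid e_i=1,\text{others play }(c^*,1)),\\
B(c^*) &:= B\cdot P(L_i=L_j\mid e_i=0,\text{others play }(c^*,1)).
\end{align*}
The indifference condition at $c_i=c^*$ from Lemma~\ref{lemma:bne} is precisely $A(c^*)-B(c^*)=c^*$. Since a worker with cost $c_i$ exerts effort iff $c_i\le c^*$, splitting on the realized cost gives
\begin{align*}
E[u(c^*)] &= \int_0^{c^*}\bigl(A(c^*)-c\bigr)f(c)\,dc + \int_{c^*}^{c_{\max}} B(c^*)\,f(c)\,dc\\
&= B(c^*) + (A(c^*)-B(c^*))F(c^*) - \int_0^{c^*} c f(c)\,dc\\
&= B(c^*) + c^* F(c^*) - \int_0^{c^*} c f(c)\,dc.
\end{align*}

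The key observation is that when $P_L=0.5$, a non-exerting worker outputs a uniform coin, so $P(L_i=L_j)=1/2$ regardless of the strategy profile of the others. Thus $B(c^*)=B/2=E[u(0)]$, and
\begin{align*}
E[u(c^*)]-E[u(0)] = c^* F(c^*) - \int_0^{c^*} c f(c)\,dc.
\end{align*}
Since $c<c^*$ throughout $(0,c^*)$ and strict concavity of $F$ (with $F(0)=0$) ensures $F$ puts positive mass on this interval, we get $\int_0^{c^*} c f(c)\,dc < c^*\int_0^{c^*} f(c)\,dc = c^* F(c^*)$, so the difference is strictly positive.

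The argument has no serious obstacle once the indifference identity $A(c^*)-B(c^*)=c^*$ is used to eliminate $A(c^*)$; the only subtle step is the cancellation $B(c^*)=B/2=E[u(0)]$, which is a consequence specific to $P_L=0.5$ (a non-informed worker produces a uniform signal, making his match probability with any peer exactly $1/2$). Without this, one would have to further verify $B(c^*)\ge B(0)$ by an explicit computation, which by the same reasoning as in the proof of Lemma~\ref{lemma:bne} is still true but requires an additional monotonicity argument.
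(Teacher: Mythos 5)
Your proof is correct and rests on the same key observation as the paper's: when $P_L=0.5$ a non-exerting worker's match probability is exactly $1/2$ regardless of what others do, so his expected bonus coincides with the trivial-equilibrium payoff, and only effort-exerting workers (those with $c_i<c^*$) see a strict gain. The paper argues this pointwise over the realized cost $c_i$, whereas you integrate over the cost distribution and use the indifference condition to obtain the explicit ex-ante gain $c^*F(c^*)-\int_0^{c^*}cf(c)\,dc>0$; this is a mild repackaging rather than a genuinely different route.
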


\begin{proof}
We prove this by discussing two cases. When $c_i > c^*$, exerting effort returns higher expected bonus compared to no effort exertion, when every worker else is exerting efforts according to the threshold policy $c^*$:
\begin{align*}
&u_i(e_i = 1, r_i=1;c^*) > u_i(e_i = 0, r_i=1;c^*) \\
&= (P_L (F(c^*)P_H + (1-F(c^*))\cdot 0.5) \\
&~~~~~~+ (1-P_L) [1-(F(c^*)P_H + (1-F(c^*))\cdot 0.5)])\cdot B\\
& = \frac{B}{2}~,
\end{align*}
which is exactly the expected bonus a worker can have under the trivial equilibrium. Similarly when  $c_i< c^*$, we have
\begin{align*}
u_i(e_i = 0, r_i=1;c^*) = \frac{B}{2}~,
\end{align*}
with which we finish our proof.
\end{proof}

\section{Proofs for discussions on equilibrium strategies}\label{discuss:ne}

We provide proofs for some claims we made in the discussions of other equilibrium strategies. As a remainder, we summarize the results here:
\begin{itemize}
\item Mis-report also induces an equilibrium, which returns each worker the same utility (with the same effort exertion threshold).
\item There does not exist effort-dependent equilibrium.
\end{itemize}
~\\
~\\
\emph{Truthful report}

\begin{lemma}
Always mis-reporting ($r_i \equiv 0$) combined with the same threshold $c^*$ for effort exertion as in Theorem \ref{bne:po} is also a symmetric BNE when $P_L > 0.5$.\label{lemma:0.5}
\end{lemma}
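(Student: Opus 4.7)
The plan rests on a single symmetry observation: reversing the answers of \emph{both} compared workers preserves whether they agree. Concretely, when worker $i$ mis-reports ($r_i = 0$), his reported answer has accuracy $1 - p_i(e_i)$; when worker $j$ also mis-reports, his reported answer has accuracy $1 - p_j(e_j)$. Therefore
\[
P(L_i = L_j) = (1 - p_i(e_i))(1 - p_j(e_j)) + p_i(e_i) p_j(e_j),
\]
which is \emph{identical} to the matching probability under $(r_i \equiv 1, r_j \equiv 1)$. I would state this identity up front as the workhorse of the argument.

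Given this, my first step is to fix a candidate strategy profile in which every worker $j \neq i$ plays $(c^*, r_j \equiv 0)$ with the threshold $c^*$ from Theorem~\ref{bne:po}, and show that the best response of worker $i$ in effort exertion is the same threshold $c^*$, provided he too mis-reports. The computation in the proof of Lemma~\ref{lemma:bne} can be recycled verbatim: using the symmetry above, the difference
\[
u_i(e_i = 1, r_i = 0) - u_i(e_i = 0, r_i = 0) = -c_i + \frac{B}{N-1}\sum_{j \neq i} (P_H - P_L)\, E[2 p_j(e_j) - 1]
\]
is exactly the expression in Eqn.~(\ref{diff:bonus}), and the sum $\sum_{j\neq i} E[2p_j(e_j)-1]$ only depends on the accuracies $p_j$, not on whether $j$ mis-reports. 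So the threshold equation~(\ref{eqn:ne}) is unchanged, and $c^*$ remains optimal.

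Next I would verify that mis-reporting itself is a best response. Suppose $i$ deviates to truthful reporting while all $j \neq i$ continue to mis-report. Then the matching probability is the \emph{complement}:
\[
P(L_i = L_j) \;=\; p_i(e_i)(1 - p_j(e_j)) + (1 - p_i(e_i)) p_j(e_j),
\]
and a direct subtraction shows that the truthful-versus-mis-reporting gap equals $(2 p_i(e_i) - 1)(1 - 2 p_j(e_j))$. Since the threshold $c^*$ satisfies $F(c^*) > 0$ (because $P_L > 0.5$ forces a positive equilibrium threshold by Theorem~\ref{bne:po}), the average $E[2 p_j(e_j) - 1] > 0$, and because $P_L, P_H > 0.5$ we have $2 p_i(e_i) - 1 > 0$ for both effort levels. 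Hence the gap is strictly negative for each $e_i \in \{0,1\}$: deviating to truthful reporting strictly lowers the expected bonus, so $r_i \equiv 0$ is a best response.

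The main conceptual subtlety---and the one I would be most careful about---is the asymmetric deviation step, because Lemma~\ref{lemma:bne} only checks deviations in $r_i$ against a truthful-reporting profile. The rest is routine once the match-symmetry identity is in hand; the concavity/uniqueness content of Theorem~\ref{bne:po} is reused intact because the threshold equation coincides with~(\ref{eqn:ne}).
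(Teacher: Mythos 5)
Your proposal is correct and follows essentially the same route as the paper's own proof: the paper likewise rests on the identity $\tilde{p}_i\tilde{p}_j+(1-\tilde{p}_i)(1-\tilde{p}_j)=p_ip_j+(1-p_i)(1-p_j)$ (so the all-mis-reporting profile yields the same utility and hence the same threshold equation), and establishes the best-response property by noting the utility is linear in $\tilde{p}_i$ with coefficient proportional to $\sum_{j\neq i}(1-2p_j)<0$, which is algebraically the same as your gap $(2p_i-1)(1-2p_j)<0$. The only cosmetic difference is that you verify the effort threshold explicitly while the paper lets it follow from the utility equality.
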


\begin{proof}
Throughout this section we will shorthand the following notations:
\begin{align*}
p_i := p_i(e_i),~ \tilde{p}_i := p_i(e_i, r_i=0)~.
\end{align*}
%

We first show mis-reporting (reverting the answer in our binary labeling case) with the same effort level is also an equilibrium. 
When all other workers $j \neq i$ are mis-reporting, we will have
\begin{align*}
u_i(e_i,r_i=0;c) &= b-e_ic_i+ \frac{B}{N-1} \sum_{j \neq i} [\tilde{p}_i \tilde{p}_j + (1-\tilde{p}_i)(1-\tilde{p}_j)]\\
&=b-e_ic_i+B\tilde{p}_i \sum_{j \neq i} (1-2p_j)+B \sum_{j \neq i} p_j~.
\end{align*}
Since $1-2p_j < 0$, minimizing $\tilde{p}_i$,  i.e., by setting $r_i=0$, will maximize worker $i$'s utility. Also 
\begin{align*}
u_i(e_i,r_i=0;c) &= b-e_ic_i+ \frac{B}{N-1} \sum_{j \neq i} [\tilde{p}_i \tilde{p}_j + (1-\tilde{p}_i)(1-\tilde{p}_j)]\\
&= b-e_ic_i+ \frac{B}{N-1} \sum_{j \neq i} [p_i p_j + (1-p_i)(1-p_j)] \\
&= u_i(e_i,r_i=1;c)~,
\end{align*}
i.e., the two equilibrium returns the same expected bonus.
\end{proof}

\frev{
Under the binary signal case, in fact the above mis-reporting strategy is equivalent with permutation strategy as studied in \cite{2016arXiv160303151S}. It is known when signals are categorical \cite{2016arXiv160303151S} (which is our case), permutation strategy of an equilibrium remains as an equilibrium, and returns the same utility at equilibrium.
}

Also no other mixed strategy between truthful report and mis-report with mixing probability $0<\delta<1$ will be an equilibrium. Denote labeling accuracy for such mixed strategy for each worker $i$ as $\tilde{p}_i(\delta)$. Then observe that worker $i$'s utility can be written as follows:
\begin{align*}
b-e_ic_i+BE[\tilde{p}_i(\delta)]E[\sum_{j \neq i} (2\tilde{p}_j(\delta)-1)]+BE[\sum_{j \neq i} (1-\tilde{p}(\delta)_j) ]~.
\end{align*}
Depending on different $\delta$, $E[\sum_{j \neq i} (2\tilde{p}_j-1)]$ is either positive or negative. Correspondingly worker $i$ will have incentives to deviate to $\delta=0$ or $\delta=1$. 

It is also worth noting that this issue can be resolved by assuming a known prior on the label. We can then generate a random outcome based on the prior (by tossing a coin for example) and compare each worker's outcome with this random bit with certain probability. This will reduce the utility for untruthful reporting. To see this, suppose there is a certain prior on the labels or outcomes for each task. Denote the prior probability for getting a $H$ as $p^0_H$, and suppose we have this extra piece of information $p^0_H>0.5$. Then with probability $0<\epsilon<1$, instead of comparing worker $i$'s answer with a randomly selected peer, we randomly toss a coin with probability $p^0_H$, and compare the worker's answer to the outcome of this toss. This fraction of utility is then given by
\begin{align*}
\epsilon(p^0_H p_i(e_i,r_i) + (1-p^0_H)(1-p_i(e_i,r_i))) = \epsilon (p_i(e_i,r_i)(2p^0_H-1)+1-p^0_H)~,
\end{align*}
from which we observe truthful reporting will return a higher utility, as (i) we will have a higher $p_i(e_i,r_i)$ and (ii) $2p^0_H-1>0$.
~\\
~\\
~\\
\emph{Effort-dependent reporting}
\begin{lemma}
There does not exists effort-dependent reporting at equilibrium.
\end{lemma}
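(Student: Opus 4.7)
The plan is to show that for any effort and reporting profile of the other workers, worker $i$'s optimal reporting is independent of $e_i$, so a genuinely effort-dependent rule $r_i(0)\neq r_i(1)$ can never be part of a symmetric best response, let alone a BNE. I would proceed in three short steps.

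First, fix any profile for workers $j \ne i$. Because other workers' signals and actions are independent of worker $i$'s private decisions, the expected accuracy induced by the profile is a scalar $Q_{-i}$ independent of $(e_i, r_i)$. In both mechanisms the probability that $i$ receives the bonus admits the linear form
\begin{equation*}
P_{i,B}(e_i, r_i) \;=\; p_i(e_i, r_i)\,(2Q_{-i}-1) + (1-Q_{-i}),
\end{equation*}
where $Q_{-i} = \bar p_{-i}$ for PA and $Q_{-i} = \Pr(L_M = \text{truth})$ for GA; the latter is a standard Bernoulli-majority computation over the $N-1$ noisy peer reports, which depends on the opponents' strategy but not on $(e_i, r_i)$.

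Second, I compare the two possible reports for each fixed $e_i$. Since $P_H \ge P_L \ge 1/2$, one has $p_i(e_i, 1) \ge 1/2 \ge p_i(e_i, 0)$ for both $e_i \in \{0,1\}$. Consequently, if $2Q_{-i}-1 > 0$ the expected bonus is strictly maximized by $r_i = 1$ for both effort levels; if $2Q_{-i}-1 < 0$ it is strictly maximized by $r_i = 0$ for both effort levels; and if $2Q_{-i}-1 = 0$ the reporting rule is payoff-irrelevant. In every case a best response can be chosen to be effort-independent, and any strictly effort-dependent rule is weakly dominated by a constant rule.

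Finally, to rule out effort-dependent reporting in a symmetric equilibrium, assume toward contradiction that every worker plays a common profile with $r(0) \ne r(1)$. Instantiate $Q_{-i}$ under this profile and invoke the previous step: worker $i$ strictly prefers either $r_i \equiv 1$ or $r_i \equiv 0$ whenever both effort levels arise with positive probability (i.e. the effort threshold satisfies $0 < c^* < c_{\max}$), contradicting equilibrium. The degenerate endpoints $c^* \in \{0, c_{\max}\}$ collapse the reporting to a single realized effort level, so any effort-dependence is vacuous there as well. The one step worth verifying carefully is the linearity of $P_{i,B}$ in $p_i(e_i, r_i)$ under GA; this is immediate after conditioning on the true label, which decouples worker $i$'s report from the peers' majority vote.
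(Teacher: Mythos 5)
Your proposal is correct and follows essentially the same route as the paper's own proof: both exploit the linearity of the bonus probability in the worker's own accuracy $p_i(e_i,r_i)$, observe that the sign of the peer term ($2Q_{-i}-1$ in your notation, $E[\sum_{j\neq i}(2p_j-1)]$ in the paper's) forces the optimal report to be the same constant for both effort levels, and dispose of the indifference case by noting that effort exertion is then unprofitable so effort-dependence is vacuous. Your explicit treatment of GA by conditioning on the true label is a minor addition but matches the decomposition the paper already uses in its GA equilibrium analysis.
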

\begin{proof}
As similarly argued in the proof for Lemma \ref{lemma:0.5}, suppose there is an effort-dependent reporting equilibrium $r_i(e_i)$. Then again depending on the sign of $E[\sum_{j \neq i} (2p_j-1)]$ being either positive or negative, when $P_H>P_L>0.5$, worker $i$ will choose to either truthfully report on both cases or mis-report to match other's outputs. 

When $E[\sum_{j \neq i} (2p_j-1)] = 0$, worker $i$ has no incentive to exert effort -- so there cannot be an effort-dependent equilibrium.
\end{proof}

\section{Heterogeneous $P_L,P_H$  }

\begin{lemma}
The claims in Theorem \ref{bne:po} and \ref{eqn:mv} hold when $\{P^i_H,P^i_L\}_{i \in \mathcal C}$ are generated according to certain distribution with mean $P_H, P_L$. 
\end{lemma}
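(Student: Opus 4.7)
The plan is to reduce the heterogeneous-accuracy setting to the homogeneous analysis via an averaging/fixed-point argument. First I would repeat the utility-difference calculation of Lemma~\ref{lemma:bne}, but carry $P^i_H, P^i_L$ symbolically through Eqn.~(\ref{diff:bonus}). Because each worker $i$ does not observe $(P^j_H, P^j_L)$ for $j\ne i$ and, by assumption, accuracies are drawn independently of costs, the inner expectation $E[2p_j-1]$ factors as $2\bigl(E[P^j_H]F(c)+E[P^j_L](1-F(c))\bigr)-1$ when worker $j$ uses a threshold-$c$ effort rule. Since the mean accuracies are $P_H, P_L$, worker $i$'s marginal utility of effort simplifies to
$$
-c_i + B\bigl(P^i_H-P^i_L\bigr)\bigl[\,2(P_H-P_L)\bar F + 2P_L - 1\,\bigr],
$$
where $\bar F$ is the aggregate probability that a random worker exerts effort in equilibrium. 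Thus worker $i$'s best response is a type-dependent threshold $c^\ast_i = \min\{B(P^i_H-P^i_L)[2(P_H-P_L)\bar F+2P_L-1],\,c_{\max}\}$, which together defines a symmetric (type-to-action) BNE.

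Next I would show existence and uniqueness of the equilibrium $\bar F$. Define the operator
$$
T(\bar F)\;=\;E_{(P^i_H,P^i_L)}\!\Bigl[F\bigl(B(P^i_H-P^i_L)[2(P_H-P_L)\bar F+2P_L-1]\bigr)\Bigr],
$$
so that a symmetric BNE corresponds to $\bar F = T(\bar F)$. Because $F$ is strictly concave and its argument is linear in $\bar F$, each integrand is concave, and $T$ is a nonnegative mixture of concave functions, hence concave. When $P_L>0.5$ we have $T(0)>0$ and $T(1)\le 1$, so by the same concave-meets-linear argument used in the proof of Theorem~\ref{bne:po}, the fixed-point equation admits a unique solution on $[0,1]$, with the boundary case $\bar F=F(c_{\max})$ handled exactly as in case (1.1) of that proof (whenever $B\ge B_{\text{PA}}$, now interpreted with the relevant worker-specific scaling). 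This yields the PA half of the lemma.

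For GA, I would redo the Chernoff derivation leading to Eqn.~(\ref{equ:eq:ga}) for sums of non-identically distributed Bernoullis: each $p_j$ equals $P^j_H$ (resp.\ $P^j_L$) depending on $j$'s effort choice, so the Chernoff bound applies with an exponent involving $E[(P^j_H-P^j_L)^2]$, replacing the scalar $\alpha=e^{-2(P_H-P_L)^2}$ with its heterogeneous analogue $\tilde\alpha:=E_{(P^j_H,P^j_L)}[e^{-2(P^j_H-P^j_L)^2}]$. The equilibrium equation then takes the form $1-2[(\tilde\alpha-1)\bar F+1]^{N-1} = c^\ast_i/\bigl(B(P^i_H-P^i_L)\bigr)$, and the same concave-plus-linear fixed-point argument (combined with the condition $N\ge\frac{-\log(1-P_H)}{2(P_L-0.5)^2}+1$ of Lemma~\ref{comp} to preserve monotonicity of the Chernoff term in $\bar F$) gives the analogue of Theorem~\ref{eqn:mv}.

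The main obstacle is the GA step: replacing the scalar $\alpha$ by an expectation $\tilde\alpha$ introduces an extra layer of averaging, and one must check that (i) the Chernoff bound still yields the same exponential structure for sums of independent but not identically distributed indicators, and (ii) the resulting operator remains concave and monotone in $\bar F$ on the whole relevant range. Jensen's inequality gives $\tilde\alpha\le e^{-2\,E[(P^j_H-P^j_L)^2]}$, which together with the variance decomposition $E[(P^j_H-P^j_L)^2]\ge(P_H-P_L)^2$ shows $\tilde\alpha\le\alpha$, so the homogeneous monotonicity carries over and the fixed-point argument closes. All remaining steps (existence of a positive threshold for every bonus level, saturation at $c_{\max}$ above a threshold $B_{\text{GA}}$, uniqueness) then follow verbatim from the homogeneous proofs.
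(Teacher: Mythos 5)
Your overall strategy differs from the paper's, and the difference introduces a genuine gap. The paper's proof is much lighter: for PA it takes the expectation over \emph{all} accuracy realizations, including worker $i$'s own, so that by linearity and independence $E[\sum_{j\neq i}(P^i_H-P^i_L)(2p_j-1)] = E[P^i_H-P^i_L]\,E[2p_j-1] = (P_H-P_L)E[2p_j-1]$ and the equilibrium equation is \emph{literally} Eqn.~(\ref{eqn:ne}) with the mean parameters; thresholds remain functions of the cost alone, exactly as in Theorem~\ref{bne:po}. You instead let worker $i$ condition on his realized $(P^i_H,P^i_L)$ and derive a type-dependent threshold $c^\ast_i$, then close the loop with a fixed point in $\bar F$. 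The problem is that once thresholds depend on $(P^j_H,P^j_L)$, effort exertion is positively correlated with the accuracy gap, so your factorization $E[2p_j-1]=2\bigl((P_H-P_L)\bar F+P_L\bigr)-1$ is false in general: the correct expression involves $E\bigl[(P^j_H-P^j_L)\,F(c^\ast_j)\bigr]$, which does not split into $E[P^j_H-P^j_L]\cdot E[F(c^\ast_j)]$ because $c^\ast_j$ is itself a function of $P^j_H-P^j_L$. Your operator $T$ correctly tracks the type-dependence of the threshold but the driving term it is built on does not, so the fixed-point equation as written is not the equilibrium condition. Moreover, even if repaired, what you would establish is a family of type-dependent thresholds, which is not the common-threshold statement of Theorem~\ref{bne:po} that the lemma asserts; the paper's averaging over the worker's own accuracy is precisely what keeps the equilibrium in the original form.

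On the GA half, the paper does not introduce $\tilde\alpha=E[e^{-2(P^j_H-P^j_L)^2}]$ at all; it shows $e^{-2x^2}$ is concave on $[0,\tfrac12]$ and $1-2[(x-1)F(c)+1]^{N-1}$ is concave in $x$, so the composed function is concave in $(P^i_H,P^i_L)$ and Jensen lower-bounds the heterogeneous utility gap by the homogeneous one evaluated at the means, after which the Theorem~\ref{eqn:mv} analysis applies to the relaxation. Your intermediate inequality $\tilde\alpha\le e^{-2E[(P^j_H-P^j_L)^2]}$ points Jensen the wrong way ($e^{-y}$ is convex in $y$, so $E[e^{-Y}]\ge e^{-E[Y]}$); the conclusion $\tilde\alpha\le\alpha$ is nonetheless true by applying Jensen directly to the concave map $x\mapsto e^{-2x^2}$, which is essentially the paper's step. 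If you adopt the paper's reading (thresholds depending only on cost, expectations taken over all accuracies), both halves of your argument collapse to short computations and the fixed-point machinery becomes unnecessary.
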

\begin{proof}
It is quite clear that with (PA) the previous results hold. The reason is under peer agreement, the utility function is linear in each worker's expertise level (consider truthful reporting):
\begin{align*}
&E[u_i(e_i = 1, r_i=1;c) - u_i(e_i = 0, r_i=1;c)] \\
&= -c_i +  \frac{B}{N-1}E[\sum_{j \neq i} (P^i_H-P^i_L)(2p_j-1)]\\
&= -c_i +  \frac{B}{N-1}E[P^i_H-P^i_L] E[2p_j-1] \\
&= -c_i +  \frac{B}{N-1}(P_H-P_L) E[2p_j-1]~.
\end{align*}
This leads to the same equilibrium equation, which will further lead to the same set of threshold equilibrium analysis.

With (GA), for each specified threshold $c$, a worker's utility difference (under approximation) can be written as 
\begin{align*}
\partial u_i = E[1-2[(\alpha^i-1)F(c)+1]^{N-1}]~,
\end{align*}
where the expectation is over the randomness in $\alpha^i := \text{exp}(-2(P^i_H-P^i_L)^2)$. We first can show:
\begin{lemma}
$e^{-2x^2}$ is concave on $x \in [0,\frac{1}{2}]$~.\label{lemma:e2x}
\end{lemma}

Now notice $g(x):=[1-2[(x-1)F(c)+1]^{N-1}]$ is concave in $x$ for $x \geq 0$ and $N \geq 2$, and we know $1-2[(\alpha^i-1)F(c)+1]^{N-1}$ is also concave in both $P^i_H,P^i_L$ (since $P^i_H \geq P^i_L \geq 0.5$ we have $0 \leq P^i_H-P^i_L \leq 0.5$, and then we can apply Lemma \ref{lemma:e2x}), by composition theory. Therefore we have
\begin{align*}
&E[1-2[(\alpha^i-1)F(c)+1]^{N-1}] \\
&\geq 1-2[(e^{-2(E[P^i_H-P^i_L])^2}-1)F(c)+1]^{N-1} \\
&= 1-2[(e^{-2(P_H-P_L)^2}-1)F(c)+1]^{N-1}~.
\end{align*}
Using above relaxation the rest of analysis will again follow.

\end{proof}


%
%
\section{Proof for Theorem \ref{eqn:mv}}

\begin{proof}
%
For simplicity we will also write $L_i \in L_M$ as $L_i = L_M$. Further denote $I_i$ as the indicator variable for whether worker $i$ has correctly labeled the instance, clearly
$P(I_i = 1) = p_i(e_i,r_i)$. 
With this we have
\begin{align}
&P(L_i = L_M |\{e_j\}_{j \in \mathcal U}, B ) \\
& =p_i P(\frac{\sum_{j \in \mathcal U, j \neq i} I_j}{N-1}\geq 0.5)|\{e_j\}_{j \neq i}, B )+(1-p_i) P(\frac{\sum_{j \in \mathcal U, j \neq i} I_j}{N-1}< 0.5|\{e_j\}_{j \neq i}, B )\nonumber \\
&=p_i \biggl(2P(\frac{\sum_{j \in \mathcal U, j \neq i} I_j}{N-1} \geq 0.5|\{e_j\}_{j \neq i}, B )-1\biggr)+ P(\frac{\sum_{j \in \mathcal U, j \neq i} I_j}{N-1}< 0.5|\{e_j\}_{j \neq i}, B )~.\label{pb}
\end{align}
Shorthand the majority voting term $P(\frac{\sum_{j \in \mathcal U, j \neq i} I_j}{N-1} \geq 0.5|\{e_j\}_{j \neq i}, B )$ as $P(B,c^*)$. $u_i(e_i = 1,r_i=1) - u_i(e_i = 0,r_i=1)$ can then be written as
$
-c_i + B(P_H-P_L) (2P(B,c^*) - 1 ).~
$
Similarly to (PA), by setting above term to 0 we get the equilibrium equation. 

Again consider the threshold policy, we can further express the majority voting term $P(B,c^*)$ in details:
\begin{align*}
&P(\frac{\sum_{j \in \mathcal U, j\neq i} I_j}{N-1} \geq 0.5)|\{e_j\}_{j \neq i}, b, B )\\
= \sum_{k=0}^{N-1} &{N-1 \choose k} F^{k}(c^*)(1-F(c^*))^{N-1-k} \cdot P(\frac{\sum_{j \in [k]} I^H_j + \sum_{j \in \mathcal U_{-i}-[k]} I^L_k}{N-1} \geq 0.5)~,
\end{align*}
where we have used $[k]$ to denote the set of users who exerted high efforts. Notice by Chernoff bound we have for each term in the summation above(when $k \geq 1$)
\begin{align*}
P&(\frac{\sum_{j \in [k]} I^H_j + \sum_{j \in \mathcal U_{-i}-[k]} I^L_k}{N-1} \geq 0.5)\\
& \geq 1-\text{exp}(-2(E[\frac{\sum_{j \in [k]} I^H_j + \sum_{j \in \mathcal U_{-i}-[k]} I^L_k}{N-1}]-0.5)^2 (N-1))\\
&= 1-\text{exp}(-2(\frac{k(P_H-P_L)+(N-1)P_L}{N-1}-0.5)^2 (N-1))\\
&\approx 1 - \text{exp}(-2\frac{(P_H-P_L)^2}{N-1} k^2) \\
&\geq 1 - \text{exp}(-2(P_H-P_L)^2 k)~.
\end{align*}
The approximation is due to the fact when $P_L$ is slightly larger than 0.5, we have
$
\frac{(N-1)P_L}{N-1} - 0.5 \approx 0.~
$
Plug back
\begin{align*}
&\sum_{k=0}^{N-1} {N-1 \choose k} F^{k}(c^*)(1-F(c^*))^{N-1-k} \cdot (1-\alpha^{k})\\
&=1-\sum_{k=0}^{N-1} {N-1 \choose k} (\alpha F(c^*))^{k}(1-F(c^*))^{N-1-k} \\
&=1-[(\alpha-1)F(c^*)+1]^{N-1}~.
\end{align*}
Plug the approximation into the equilibrium equation we proved the equilibrium equation (\ref{equ:eq:ga}) for (GA).

~\\

%
Most of the rest equilibrium analysis is similar with the case for PA so we omit the details. The only difference is as follows: notice $[(\alpha-1)y+1]^{N-1}$ is a non-increasing convex function in $y$, so we have $[(\alpha-1)F(c^*)+1]^{N-1}$ is convex by composition theory. We thus establishes the concavity of $1-2[(\alpha-1)F(c^*)+1]^{N-1}$ for $N \geq 2$. Then  if 
$$
1-\alpha^{N-1} \geq \frac{c_{\max}}{B(P_H-P_L)}~,
$$
we will have
\begin{align*}
1-2[(\alpha-1)F(c^*)+1]^{N-1} > \frac{c^*}{B(P_H-P_L)}, \forall c^* < c_{\max}~.
\end{align*}
Otherwise setting above inequality to equality returns the only equilibria. Notice in our approximation, we have used strictly greater than or equal to (except for the approximation $\frac{(N-1)P_L}{N-1} - 0.5 \approx 0$ which will not affect the conclusion above), the above solution thus provides a lower bound on the bonus level (for fixed threshold) or an upper bound on the effort level (for fixed bonus) for the original one.
\end{proof}

\section{Proof for Lemma \ref{comp}}

\begin{proof}
Again via Chernoff bound we know, 
\begin{align*} 
&P(\frac{\sum_{j \in [k]} I^H_j + \sum_{j \in \mathcal U_{-i}-[k]} I^L_k}{N-1} \geq 0.5) \\
&\geq 1-\text{exp}(-2(E[\frac{\sum_{j \in [k]} I^H_j + \sum_{j \in \mathcal U_{-i}-[k]} I^L_k}{N-1}]-0.5)^2 (N-1))\\
&= 1-\text{exp}(-2(\frac{k(P_H-P_L)+(N-1)P_L}{N-1}-0.5)^2 (N-1))~.
\end{align*}
Notice
$$
\frac{k(P_H-P_L)+(N-1)P_L}{N-1}-0.5 > P_L - 0.5 > 0,~
$$ thus when $N$ is large, $P(\frac{\sum_{j \in [k]} I^H_j + \sum_{j \in \mathcal U_{-i}-[k]} I^L_k}{N-1} \geq 0.5)$ can be made arbitrarily close to 1, so that it is larger than $E[\frac{\sum_{j \in [k]} I^H_j + \sum_{j \in \mathcal U_{-i}-[k]} I^L_k}{N-1}]$. Specifically a sufficient condition is given by
\begin{align*}
1-\text{exp}&(-2(P_L - 0.5)^2 (N-1)) > P_H \geq E[\frac{\sum_{j \in [k]} I^H_j + \sum_{j \in \mathcal U_{-i}-[k]} I^L_k}{N-1}]\\
&\Rightarrow N > \frac{-\log (1-P_H)}{2(P_L-0.5)^2}+1
\end{align*}

Therefore
\begin{align*}
\sum_{k=0}^{N-1}& {N-1 \choose k} F^{k}(c^*)(1-F(c^*))^{N-1-k} \cdot P(\frac{\sum_{j \in [k]} I^H_j + \sum_{j \in \mathcal U_{-i}-[k]} I^L_k}{N-1} \geq 0.5) \\
&> \sum_{k=0}^{N-1} {N-1 \choose k} F^{k}(c^*)(1-F(c^*))^{N-1-k} E[\frac{\sum_{j \in [k]} I^H_j + \sum_{j \in \mathcal U_{-i}-[k]} I^L_k}{N-1}]~.
\end{align*}
Therefore as $B$ is an inverse function of above probability we will have $B_{\text{PA}} > B_{\text{GA}}$.

Now we calculate the total cost for bonus.
\begin{align*}
&B \cdot E[N_e(B)] = N \cdot B P(B,c^*)=B(F(c^*)T_1 + (1-F(c^*))T_2) = B F(c^*)(T_1-T_2) + B T_2~,
\end{align*}
where $T_1$ is the probability of getting bonus when a worker exerts effort, and $T_2$ is the one when no effort is exerted. Through equilibrium equation, under both (PA) and (GA) we know 
$
T_1-T_2 = \frac{c^*}{B}
$
 as:
 \begin{align}
 -c^* + BT_1 - BT_2 = 0~.
 \end{align}
Therefore 
\begin{align*}
B F(c^*)(T_1-T_2) + B T_2 = c^* F(c^*) + BT_2~.
\end{align*}
For $T_2$ we have
\begin{align*}
T_2 = (2P_L-1)P(\{L_j\}_{j \neq i}) + (1-P_L)~,
\end{align*}
where $P(\{L_j\}_{j \neq i})$ is the probability the answer worker $i$ is matching is correct. Notice we also have
\begin{align*}
B = \frac{c}{(P_H-P_L)(2P(\{L_j\}_{j \neq i})-1)},
\end{align*}
we then have
\begin{align*}
BT_2 = \frac{c(2P_L-1)P(\{L_j\}_{j \neq i}) }{(P_H-P_L)(2P(\{L_j\}_{j \neq i})-1)}+B(1-P_L)~.
\end{align*}
First as above we can prove $P_{\text{PA}}(\{L_j\}_{j \neq i}) < P_{\text{GA}}(\{L_j\}_{j \neq i})$, and as $\frac{c(2P_L-1)P(\{L_j\}_{j \neq i}) }{(P_H-P_L)(2P(\{L_j\}_{j \neq i})-1)}$ is a decreasing function in $P(\{L_j\}_{j \neq i})$ we know
\begin{align*}
 \frac{c(2P_L-1)P_{\text{(PA)}}(\{L_j\}_{j \neq i}) }{(P_H-P_L)(2P_{\text{(PA)}}(\{L_j\}_{j \neq i})-1)} >  \frac{c(2P_L-1)P_{\text{(GA)}}(\{L_j\}_{j \neq i}) }{(P_H-P_L)(2P_{\text{(GA)}}(\{L_j\}_{j \neq i})-1)}. 
\end{align*}
Combined with the fact $B_{\text{PA}} > B_{\text{GA}}$ we finish the proof.
\end{proof}

\section{Proof for Proposition \ref{convex}}

\begin{proof}
First as shown in previous proof we have
\begin{align*}
B \cdot E[N_e(B)] &= c^* F(c^*) + \frac{c^*(2P_L-1)P(\{L_j\}_{j \neq i}) }{(P_H-P_L)(2P(\{L_j\}_{j \neq i})-1)}+B(1-P_L)\\
&=c^* F(c^*) + c^* \frac{2P_L-1}{2(P_H-P_L)} + \frac{1}{P(\{L_j\}_{j \neq i})-1/2}+B(1-P_L)
\end{align*}
Under (PA) we have 
\begin{align*}
P(\{L_j\}_{j \neq i}) = F(c^*)P_H + (1-F(c^*)) P_L = (P_H-P_L) F(c^*) + P_L~,
\end{align*}
which is concave in $c^*$, and then by composition theory we know the third term is convex, as $1/(x+a)$ is convex for $x \geq 0$ when $a \geq 0$. 

Consider the last term $B$. Since 
\begin{align*}
B = \frac{c^*}{(P_H-P_L)(2(P_H-P_L)F(c^*)+2P_L-1)} = \frac{1}{2(P_H-P_L)^2} \cdot \frac{c^*}{F(c^*)+a}~,
\end{align*}
where 
$
a := \frac{2P_L-1}{P_H-P_L} > 0~.
$
When $P_L \approx 0.5$, $2P_L-1 \approx 0$. When we omit $a$, we only need to prove $\frac{c}{F(c)}$ is convex. This approximation is not entirely unreasonable when our targetted $c$ is bounded away from $0$. We have the following results:

%
%

\begin{proposition}
When $f(c)$ is twice differentiable and $\frac{\partial^2 f(c)}{\partial^2 c} \geq 0$, $\frac{c}{F(c)}$ is convex.
\end{proposition}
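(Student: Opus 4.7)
The plan is to reduce convexity of $g(c):=c/F(c)$ to a single pointwise inequality and then try to establish that inequality from the structural assumptions. Two applications of the quotient rule yield
\[
g'(c)=\frac{F(c)-cf(c)}{F(c)^{2}},\qquad g''(c)=\frac{H(c)}{F(c)^{3}},
\]
where
\[
H(c):=2cf(c)^{2}-F(c)\bigl(2f(c)+cf'(c)\bigr)=-2f(c)\,N(c)+c\bigl(-f'(c)\bigr)F(c),
\]
with $N(c):=F(c)-cf(c)=-\int_{0}^{c}tf'(t)\,dt\geq 0$ (integration by parts, using $f'\leq 0$ that follows from strict concavity of $F$). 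Since $F(c)>0$ on $(0,c_{\max}]$, convexity of $g$ is equivalent to $H(c)\geq 0$ on that interval.

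My first attempt would be an induction-on-derivatives argument. Since $H(0)=0$, it suffices to show $H'(c)\geq 0$. A direct computation gives
\[
H'(c)=3\bigl(-f'(c)\bigr)\bigl(F(c)-cf(c)\bigr)-cf''(c)F(c).
\]
The first summand is nonnegative (using $-f'\geq 0$ and $F(c)\geq cf(c)$, which follows from $F$ concavity and $F(0)=0$), but the second summand is nonpositive precisely because $f''\geq 0$. So the sign of $H'$ is not immediate and this naive monotonicity approach does not close on its own.

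The more substantive plan is to show that the ``curvature penalty'' $cf''(c)F(c)$ is dominated by the ``slope bonus'' $3(-f'(c))(F(c)-cf(c))$. Using $f''\geq 0$ one has $-f'(c)=-f'(0)-\int_{0}^{c}f''(s)\,ds$ and $f(c)=f(0)+\int_{0}^{c}f'(s)\,ds$, so I would substitute these representations into $H$ and try to exhibit $H(c)$ as a double integral $\iint K(s,t)f''(s)\,ds\,dt$ against a kernel $K\geq 0$. A complementary tactic is to use the tangent-line inequality $f(t)\geq f(c)+f'(c)(t-c)$ on $[0,c]$; integrating gives the lower bound $N(c)\geq \tfrac{1}{2}c^{2}(-f'(c))$, so to finish one only needs a matching upper bound of the form $N(c)\leq \tfrac{c(-f'(c))F(c)}{2f(c)}$.

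The step I expect to be the main obstacle is producing precisely that upper bound on $N(c)$. Convexity of $f$ supplies one-sided (lower) control on $N$ via tangent lines, but the inequality required for $H\geq 0$ runs in the opposite direction, and $f''\geq 0$ by itself does not restrict how large $N$ can become relative to the slope $-f'(c)$ or the level $f(c)$. I therefore expect the argument to make essential use of the companion condition (ii) from the main text, namely $(cF)''(c)=2f(c)+cf'(c)\geq 0$, which pins the bracketed factor in $H$ to a definite sign and should combine with the tangent-line lower bound (plus perhaps a Gr\"onwall-type comparison along $[0,c]$) to yield the two-sided control on $N(c)$ that finally forces $H(c)\geq 0$.
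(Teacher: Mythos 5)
Your computation of $g''=H/F^3$ with $H=2cf^2-F(2f+cf')$, the identity $H=-2fN+c(-f')F$, and the derivative $H'=3(-f')(F-cf)-cf''F$ are all correct, and you are right that the sign of $H'$ is genuinely ambiguous when $f''\geq 0$. But the proposal stops there: the key inequality $H\geq 0$ is never established, and your closing paragraph only lists candidate tactics (a kernel representation, an upper bound $N(c)\leq c(-f'(c))F(c)/(2f(c))$, an appeal to condition (ii)) without carrying any of them out. As written this is a correct reduction plus an accurate diagnosis of where the difficulty lies, not a proof.

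The obstacle you flag is not an artifact of your route --- it cannot be overcome under the stated hypotheses. The paper argues differently: it studies $F(c)/c$, asserts that the numerator $c^2f'-2cf+2F$ of its second derivative has derivative $-c^2f''\leq 0$, concludes that $F(c)/c$ is concave, and inverts using the composition rule for $1/x$. But $(c^2f'-2cf+2F)'=+c^2f''$, so under $f''\geq 0$ that numerator is nonnegative and $F(c)/c$ is in fact \emph{convex} (one can verify this directly for the exponential distribution), at which point the $1/x$ composition yields nothing. Worse, the statement itself is false as given: take $f$ (suitably mollified and normalized) equal to $1-t/2$ on $[0,1]$ and to $1/2-\epsilon(t-1)$ on $[1,10]$ with $\epsilon$ tiny. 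This $f$ is positive, strictly decreasing and convex, so $F$ is strictly concave and $f''\geq 0$; yet for $c\geq 2$ one has $F(c)\approx c/2+1/4$, hence $c/F(c)\approx 2-1/(c+1/2)$, which is strictly concave. One also checks that $cF(c)$ is convex in this example, so invoking condition (ii) as you propose would not rescue the claim. Your plan therefore cannot be completed; the hypothesis would have to be changed --- e.g.\ to $f''\leq 0$, under which the paper's reciprocal argument does go through --- before any proof of this proposition can exist.
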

\begin{proof}
To see this, first study the convexity of $F(c)/c$ instead. 
Take second order derivatives we have
\begin{align*}
\frac{\partial^2 F(c)/c}{\partial^2 c}= \frac{c2f'(c)-2cf(c)+2F(c)}{c^4}~. 
\end{align*}
Consider the numerator. First
$$
c2f'(c)-2cf(c)+2F(c)|_{c=0} = 0,~
$$
and further taking derivative of above term gives us
$$
(c2f'(c)-2cf(c)+2F(c))' = -c^2\frac{\partial^2 f(c)}{\partial^2 c} \leq 0
$$
by assumption. Then 
$
c2f'(c)-2cf(c)+2F(c) \leq 0, \forall c \geq 0.
$
So $F(c)/c$ is concave. Since $1/x$ is non-increasing convex on $x > 0$, by composition property we know $\frac{1}{F(c)/c}=\frac{c}{F(c)}$ is convex. Since $1-[(\alpha-1)F(c^*)+1]^{N}$ and $-BE[N_e(B)]$ are both concave, we finish the proof.

\end{proof}

Similarly for GA, we only need to prove $B$ is convex in $c$. Since we can write $B$ as $B = \frac{c}{2G(c)-1}$, we first prove $\frac{2G(c)-1}{c}$ is concave, which is equivalent with proving $\frac{G(c)}{c}$ is concave. Take second order derivative we will have
\begin{align*}
\frac{\partial^2 G(c)/c}{\partial^2 c} = \frac{G''c^2+G'c-2G}{c^3}~.
\end{align*}
Again $G''c^2+G'c-2G|_{c=0} < 0$. Take derivative of $G''c^2+G'c-2G$ we have
\begin{align*}
\frac{\partial G''c^2+G'c-2G}{\partial c}
 = \partial^3 G/\partial^3 c c^2 + G'' 3c - G'~.
\end{align*}
Since $G$ is concave and increasing in $c$ we have $ G'' 3c - G'\leq 0$. Thus when $\partial^3 G/\partial^3 \leq 0$ we will be able to establish the concavity of $G(c)/c$, and thus the objective function. 
\end{proof}

\section{Facts that are needed for proving results in Section \ref{sec:data}}

We prove several results that will be repeatedly used. 
\begin{lemma}
$F(\cdot)$ is Lipschitz with parameter $f(0)$. \label{lemma:f}
\end{lemma}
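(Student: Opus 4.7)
The plan is to exploit the strict concavity of $F$ assumed in Assumption 1, together with the elementary fact that $F' = f$ on $[0, c_{\max}]$. First I would note that strict concavity of $F$ on the interval implies that its derivative $f$ is (strictly) decreasing on $[0, c_{\max}]$, so in particular $f(c) \leq f(0)$ for every $c$ in the support.

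Next, for any two points $c_1 < c_2$ in $[0, c_{\max}]$, I would apply the mean value theorem to $F$ on $[c_1, c_2]$ to obtain some $\xi \in (c_1, c_2)$ with $F(c_2) - F(c_1) = f(\xi)(c_2 - c_1)$. Combining this with the bound $f(\xi) \leq f(0)$ from the previous step yields $|F(c_2) - F(c_1)| \leq f(0)\,|c_2 - c_1|$, which is exactly the claimed Lipschitz property with constant $f(0)$.

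I do not anticipate any real obstacle: the argument is a two-line application of concavity plus the mean value theorem. The only subtlety worth flagging is that the statement implicitly uses $f(0)$ as the maximum density (so it silently requires $f$ to be well-defined and finite at the left endpoint, which is consistent with the distributional examples mentioned in the paper such as exponential and truncated normal). If one wanted to avoid appealing to pointwise differentiability at $0$, one could instead write $F(c_2) - F(c_1) = \int_{c_1}^{c_2} f(s)\, ds \leq f(0)(c_2 - c_1)$ using monotonicity of $f$, which gives the same bound without invoking the mean value theorem.
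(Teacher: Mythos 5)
Your argument is correct and is essentially the same as the paper's one-line proof, which also uses concavity of $F$ to bound the increment by $\max_c f(c)\,\delta = f(0)\,\delta$; you merely make the mean-value-theorem (or integral) step explicit. Nothing further is needed.
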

\begin{proof}
By concavity we know
$
F(c+\delta) - F(c) \leq \max f(c) \delta = f(0) \delta~.
$ 
\end{proof} 

\begin{lemma}
Denote 
$
g(c):=\frac{c}{2(P_H-P_L)F(c)+2P_L-1}~,
$
and we have its inverse function $g^{-1}(\cdot)$ being Lipschitz.\label{lip:g}
\end{lemma}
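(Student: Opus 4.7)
The plan is to establish that $g$ itself is continuously differentiable with derivative bounded below by a positive constant; once we have $g'(c) \geq L > 0$ on $[0, c_{\max}]$, the standard mean value argument gives $|g(c_2) - g(c_1)| \geq L|c_2 - c_1|$, and inverting yields that $g^{-1}$ is Lipschitz with constant $1/L$ on the range $g([0, c_{\max}])$. Since the section explicitly restricts to $P_L > 0.5$, we have $2P_L - 1 > 0$, which is the key quantity that prevents $g'$ from collapsing to zero.

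First I would denote $h(c) := 2(P_H-P_L)F(c) + 2P_L - 1$ so that $g(c) = c/h(c)$, and compute by the quotient rule
\begin{align*}
g'(c) \;=\; \frac{h(c) - c\,h'(c)}{h(c)^2} \;=\; \frac{2(P_H-P_L)F(c) + 2P_L - 1 - 2c(P_H-P_L)f(c)}{h(c)^2}.
\end{align*}
The only nontrivial step is showing that the numerator is bounded below by the positive constant $2P_L - 1$. This is where Assumption~1 (strict concavity of $F$) enters: concavity implies $F(s) \leq F(c) + f(c)(s-c)$ for all $s$ in the support; plugging in $s=0$ and using $F(0)=0$ gives $F(c) \geq c\,f(c)$. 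Multiplying by $2(P_H-P_L)$ and rearranging yields $2(P_H-P_L)F(c) - 2c(P_H-P_L)f(c) \geq 0$, so the numerator is at least $2P_L - 1$.

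Next, I would bound the denominator above: since $F(c) \leq 1$, we have $h(c) \leq 2(P_H - P_L) + 2P_L - 1 = 2P_H - 1 \leq 1$, so $h(c)^2 \leq (2P_H - 1)^2$. Combining these two bounds, I obtain
\begin{align*}
g'(c) \;\geq\; \frac{2P_L - 1}{(2P_H-1)^2} \;=:\; L \;>\; 0
\end{align*}
uniformly on $[0, c_{\max}]$. Hence $g$ is strictly increasing (so injective and genuinely invertible on its range), and for any $y_1, y_2 \in g([0, c_{\max}])$ with preimages $c_i = g^{-1}(y_i)$, the fundamental theorem of calculus gives $|y_2 - y_1| \geq L|c_2 - c_1|$, i.e. $|g^{-1}(y_2) - g^{-1}(y_1)| \leq (1/L)|y_2 - y_1|$.

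I don't expect any serious obstacle here; the only subtlety worth flagging is the reliance on $P_L > 0.5$, since at $P_L = 0.5$ the bound degenerates (and indeed in that boundary case $g(0)=0$ with $g'(0)=0$ when $F'(0)$ is finite, so $g^{-1}$ cannot be Lipschitz near the origin). This is consistent with the paper's standing assumption in Section~\ref{sec:data} that $P_L > 0.5$, and is the reason that assumption was imposed.
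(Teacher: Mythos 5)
Your proof is correct and follows essentially the same route as the paper's: compute $g'$ by the quotient rule, show the numerator is at least $2P_L-1>0$ via the concavity of $F$ (you use the tangent-line inequality $F(c)\geq c f(c)$ directly, while the paper gets the same fact by differentiating the numerator), bound the denominator, and invert the uniform lower bound on $g'$ to get the Lipschitz constant for $g^{-1}$. If anything, your explicit constant $(2P_L-1)/(2P_H-1)^2$ is cleaner than the bound stated in the paper, and your closing remark about the degeneration at $P_L=0.5$ correctly identifies why that case is excluded.
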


\begin{lemma}
Over(under)-reporting will lead to higher(lower) estimation of the bonus strategy.\label{bias:B}
\end{lemma}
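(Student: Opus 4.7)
The proof rests on a direct monotonicity observation about the mapping from the estimated cost CDF to the estimated bonus. First I would invert the PA equilibrium equation~(\ref{eqn:ne}) to express the bonus needed to induce an arbitrary target threshold $c^*$ explicitly as
$$
B(c^*) \;=\; \frac{c^*}{(P_H-P_L)\bigl[2(P_H-P_L)F(c^*)+2P_L-1\bigr]}.
$$
Since $P_H>P_L\geq 0.5$, the bracketed quantity is strictly positive for $c^*>0$, and it appears in the denominator with $F(c^*)$ having a positive coefficient. Hence, for fixed $c^*$, the map $F(c^*)\mapsto B(c^*)$ is strictly decreasing. This reduces the lemma to a stochastic-dominance statement about the empirical CDF built from the reports.

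Next I would establish that stochastic dominance. If every worker's report obeys $\tilde{c}_i(t)\geq c_i(t)$ (over-reporting), then for any $c$ the event $\{\tilde{c}_i(t)\leq c\}\subseteq\{c_i(t)\leq c\}$, so the empirical CDF from the reports satisfies $\hat F(c)\leq F_{\text{true}}(c)$ realization-wise (hence also in expectation). Plugging $\hat F$ into the formula above, and using the decreasing monotonicity in $F(c^*)$, gives $\hat B(c^*)\geq B(c^*)$ pointwise. The under-reporting case is completely symmetric: $\tilde c_i(t)\leq c_i(t)$ yields $\hat F(c)\geq F_{\text{true}}(c)$ and hence $\hat B(c^*)\leq B(c^*)$.

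For the GA variant the plan is identical, with Eqn.~(\ref{equ:eq:ga}) giving
$$
B_{\text{GA}}(c^*) \;=\; \frac{c^*}{(P_H-P_L)\bigl[1-2[(\alpha-1)F(c^*)+1]^{N-1}\bigr]}.
$$
Because $\alpha=e^{-2(P_H-P_L)^2}<1$, the inner term $(\alpha-1)F(c^*)+1$ is decreasing in $F(c^*)$, so its $(N-1)$-th power is decreasing (the base is in $(0,1]$), and therefore the bracket in the denominator is increasing in $F(c^*)$. The same chain of implications then yields $\hat B_{\text{GA}}(c^*)\geq B_{\text{GA}}(c^*)$ under over-reporting and the reverse under under-reporting.

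The only subtlety, and thus the main thing to be careful about, is interpreting ``estimated bonus'' correctly: the statement is a realization-wise monotonicity of the plug-in estimator, so no concentration, expectation, or Lipschitz argument (such as Lemma~\ref{lemma:f} or~\ref{lip:g}) is needed for this particular lemma; those tools will only be needed later when bounding the magnitude of the bias. The present lemma is purely qualitative (direction of the bias), and the two-line monotonicity argument above suffices.
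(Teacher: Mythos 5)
Your proposal is correct and is essentially the paper's own argument: the paper likewise writes $\tilde{B}(c) = \frac{c}{(P_H-P_L)(2(P_H-P_L)\tilde{F}(c)+2P_L-1)}$, observes that over-reporting gives $\tilde{F}(c)\leq F(c)$ realization-wise, and concludes $\tilde{B}(c)\geq B(c)$ by the same monotonicity in the denominator (with the under-reporting case by symmetry). Your treatment is slightly more complete in that it spells out the stochastic-dominance step and the GA variant, which the paper leaves implicit.
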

~\\

These results will be mainly used repeatedly to transfer the bias in reported data into the bias of inferred bonus $B$, and threshold $c$.

\section{Proof for Lemma \ref{lip:g}}

\begin{proof}
First $g(c)$ is strictly increasing in $c$ by noticing:
\begin{align*}
g'(c) = \frac{2(P_H-P_L)F(c)+2P_L-1-c2(P_H-P_L)f(c)}{(2(P_H-P_L)F(c)+2P_L-1)^2} > 0~,
\end{align*}
where the inequality is due to the following fact:
\begin{align*}
&(2(P_H-P_L)F(c)+2P_L-1-c2(P_H-P_L)f(c))' \\
&= 2(P_H-P_L)f(c) - (P_H-P_L)f(c) -c(P_H-P_L) f'(c) > 0,
\end{align*}
that is $(P_H-P_L)F(c)+2P_L-1-cf(c)$ is strictly increasing. So
\begin{align*}
&2(P_H-P_L)F(c)+2P_L-1-cf(c) \\
&\geq 2(P_H-P_L)F(0)+2P_L-1-0f(0) > 0~.
\end{align*}
This is also indicating a larger threshold (better induced effort) corresponds to a higher bonus level. And $g'(c)$ is bounded:
 \begin{align*}
 \min &\{\frac{2P_L-1-2c_{\max}(P_H-P_L)f(0)}{(2P_L-1)^2}, \frac{2P_L-1-2c_{\max}(P_H-P_L)f(0)}{4P^2_L}\} \\
 & \leq g'(c) \leq \frac{2P_H-1+c_{\max} (P_H-P_L) f(0)}{(2P_L-1)^2}~.
 \end{align*}
And further $g(c)$ is everywhere differentiable on $[0,c_{\max}]$. Then $g^{-1}$ is also monotone, differentiable and bounded. So $g^{-1}$ is Lipschitz.
\end{proof}

\section{Proof for Lemma \ref{bias:B}}
\begin{proof}
Suppose there is over-reporting, for each selected threshold $c$ we have
\begin{align*}
\tilde{B}(c) = \frac{c}{(P_H-P_L)(2(P_H-P_L)\tilde{F}(c)+2P_L-1)}~,
\end{align*}
where $\tilde{F}(c)$ is the estimated CDF for $c$ with over-reporting. Notice with over-reporting
\begin{align*}
\tilde{F}(c) \leq F(c),
\end{align*}
and thus $\tilde{B}(c) \geq B(c)$. Similarly we can prove the claim for under reporting.
\end{proof}

\section{Setting up $\delta(t)$ for Theorem \ref{thm:report}}
$\delta(t)$ is mainly chosen so that we do not under-estimate $\tilde{B}$, such that that for workers who reported $\tilde{c}_i(t) \leq c^*(t)$ will be willing to exert effort. For different equilibriums, invoking \emph{over-report} and \emph{under-report} respectively, $\delta(t)$s are set in slightly different ways.

~\\
\emph{Over-reporting}

For equilibrium invoking under-reporting, $\delta(t)$ is mainly used to remove the sampling bias, as: (1) the contributed data are already biased towards calculating a high $B$ (Lemma \ref{bias:B}); (2) for workers who reported less than $c^*(t)$, they have true cost that is also less than $c^*(t)$, so we only need to ensure a bonus level that is no less than $B(c^*(t))$. The only under-estimation comes from sampling uncertainty. We bound it in the following way.
 
By Chernoff bound we have for each $c \in [0,c_{\max}]$,
\begin{align*}
P(|\frac{\sum_{k=1}^{(N-1)t} 1(c(k) \leq c)}{(N-1)t}-F(c)| \geq \frac{1}{\sqrt{(N-1)t/\log t}}) \leq \frac{2}{t^2}~,
\end{align*}
and particularly
$$
P(\frac{\sum_{k=1}^{(N-1)t} 1(c(k) \leq c)}{(N-1)t}-F(c) \leq -\frac{1}{\sqrt{(N-1)t/\log t}}) \leq \frac{1}{t^2}.~
$$
Denote above small perturbation as $\epsilon(t) := 1/\sqrt{(N-1)t/\log t}$. Notice this small perturbation is parameter free, i.e., it is only a function of time $t$, and thus can be calculated on the data requester's side. Now we want to relate the error in estimating $F(\cdot)$ to the bonus level $B$. Notice
\begin{align*}
\tilde{B}_j(c^*(t)) &=  \frac{c^*(t)}{P_H-P_L}\cdot \frac{1}{2(P_H-P_L)\tilde{F}(c^*(t))+2P_L-1} \\
&= \frac{c^*(t)}{P_H-P_L}\cdot \frac{1}{2(P_H-P_L)F(c^*(t))+2(P_H-P_L)\epsilon+2P_L-1}\\
&\geq  \frac{c^*(t)}{P_H-P_L}\cdot (\frac{1}{2(P_H-P_L)F(c^*(t))+2P_L-1}+\frac{(P_H-P_L)}{(2P_H-1)^2} \epsilon)\\
&=B(c^*(t)) + \frac{c^*(t) \epsilon(t)}{(P_H-P_L)^2(2P_H-1)^2}~,
\end{align*}
where the inequality is due to convexity of function $y(x) = \frac{1}{2x+C}, C>0$. So with error probability being at least $1-\frac{1}{t^2}$ we will be having
\begin{align*}
\tilde{B}_j(c^*(t)) \geq B(c^*(t)) -   \frac{c^*(t) \epsilon(t)}{(P_H-P_L)^2(2P_H-1)^2}~.
\end{align*}
So simply set $\delta(t) =  \frac{c^*(t) \epsilon(t)}{(P_H-P_L)^2(2P_H-1)^2}$, we will have
$
\tilde{B}(c^*(t)) + \delta(t) \geq B(c^*(t))~.
$

~\\
\emph{Under-reporting}

For the under-reporting case, $\delta(t)$ now consists of three terms 
\begin{align*}
\delta(t) = \delta_1(t) + \delta_2(t) + \delta_3(t)~.
\end{align*}
As in the over-reporting case, $\delta_1(t)$ is the one for compensating sample bias. $\delta_2(t)$ is for compensating bias in calculating $B$ due to under-reporting. $\delta_3(t)$ is to make sure for workers who under-reported that $\tilde{c}_i(t) \leq c^*(t)$, but having cost $c_i(t) > c^*(t)$ will also be willing to exert efforts.

$\delta_1(t)$ is defined as the sampling bias, which is the same as in the over-reporting case. Now define $\delta_2(t)$. At time $t$, for the selected threshold $c^*(t)$, the unbiased estimation of its CDF is given by
\begin{align*}
\tilde{F}^{\text{unbiased}}(c^*(t)) = \sum_{n=1}^t \sum_{j \neq i} 1(\tilde{c}_j(n) \leq c^*(t)-\epsilon_1(n))~.
\end{align*}
Notice 
$
\tilde{F}^{\text{unbiased}}(c^*(t)) > \tilde{F}(c^*(t))~,
$
Then define $\delta_1(t)$ as follows:
\begin{align*}
\delta_1(t) = \tilde{B}(\tilde{F}^{\text{unbiased}}(c^*(t)) )-\tilde{B}(\tilde{F}(c^*(t)))~.
\end{align*}

For $\delta_3(t)$, notice
\begin{align*}
&B(c^*(t)+\epsilon_1(t))-B(c^*(t)) = \frac{c^*(t)+\epsilon_1(t)}{(P_H-P_L)(2(P_H-P_L)F(c^*(t)+\epsilon_1(t))+2P_L-1)}\\
&~~~~~~~~~- \frac{c^*(t)}{(P_H-P_L)(2(P_H-P_L)F(c^*(t))+2P_L-1)}\\
&\leq \frac{c^*(t)+\epsilon_1(t)}{(P_H-P_L)(2(P_H-P_L)F(c^*(t))+2P_L-1)}\\
&~~~~~~~~~- \frac{c^*(t)}{(P_H-P_L)(2(P_H-P_L)F(c^*(t))+2P_L-1)}\\
&\leq \frac{\epsilon_1(t)}{(P_H-P_L) (2P_L-1)}~. 
\end{align*}
Set $\delta_3(t) = \frac{\epsilon_1(t)}{(P_H-P_L) (2P_L-1)}$. With above we know
$
\tilde{B}_i(t) \geq B(c^*(t))+\epsilon_1(t),~
$
and then for any worker $i$ who has reported $\tilde{c}_i(t) \leq c^*(t)$, it will be profitable to exert effort. To summarize and help the reader digest, we have the following lemma for each time step $t$.
\begin{lemma}
At each step $t$, after reporting their cost $\{\tilde{c}_i(t)\}$, workers who reported less than or equal to $c^*(t)$ will exert effort and report truthfully, while those who reported higher than $c^*(t)$ will not exert effort, and report truthfully at a $\epsilon$-BNE for (\texttt{M\_Crowd}).
\end{lemma}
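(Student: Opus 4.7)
The plan is to decompose the statement by worker type relative to the posted threshold $c^*(t)$ and then verify best-response conditions for effort and answer reporting in each subgroup. Fix step $t$ and condition on the draws $c^*(t)$ and $\{\tilde c_j(t)\}_{j \in \mathcal U}$. Workers split into an \emph{above-threshold} group (those with $\tilde c_i(t) > c^*(t)$) and a \emph{below-threshold} group ($\tilde c_i(t) \leq c^*(t)$). I will treat each group separately and then aggregate the residual slack into an $\epsilon$ term.

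For the above-threshold workers the argument is essentially immediate. By Step~2 of (\texttt{M\_Crowd}) together with Remark~1, the probability of receiving a bonus is computed by a coin-toss simulation that does not depend on the submitted label $L_i$, so the bonus is independent of $(e_i, r_i)$. Exerting effort strictly reduces utility by $c_i > 0$, so $e_i = 0$ is strictly dominant; because the report $r_i$ is payoff-irrelevant at this stage, the tiebreaking convention stated just before the algorithm (truthful reporting preferred under equal expected utility) yields $r_i = 1$.

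For the below-threshold workers, I would invoke the PA equilibrium characterization of Lemma~\ref{lemma:bne} and Theorem~\ref{bne:po} applied to the subgame induced among only the below-threshold players with posted bonuses $\tilde B_i(t) + \delta(t)$. The $\delta(t)$ construction engineered just before this lemma guarantees, on a high-probability event of measure at least $1 - 1/t^2$, that $\tilde B_i(t) + \delta(t) \geq B(c^*(t))$ (and in the under-reporting case $\geq B(c^*(t)+\epsilon_1(t))$) for every $i$: the over- or under-reporting bias is absorbed via Lemma~\ref{bias:B}, the finite-sample fluctuation via the Chernoff term of order $\sqrt{\log t/((N-1)t)}$, and Lipschitzness of the inverse map $g^{-1}$ (Lemma~\ref{lip:g}) converts bonus-level bounds into threshold-level bounds. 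Together with the monotonicity argument from Lemma~\ref{lemma:bne}, every below-threshold worker finds effort exertion profitable at his realized cost; given effort, truthful reporting strictly dominates because $P_L > 0.5$ makes $E[\sum_{j \neq i}(2p_j - 1)] > 0$, so reverting the label strictly lowers the matching probability.

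Finally, the $\epsilon$-BNE wrapper follows from absorbing two residual losses into $\epsilon$: the low-probability sampling-failure event (measure $\leq 1/t^2$) and the per-stage suboptimality of at most $O(\delta(t)) = O(\sqrt{\log t/t})$ for a worker whose realized cost $c_i(t)$ lies in the narrow ambiguity band around $c^*(t)$ created by the gap between $\tilde c_i(t)$ and $c_i(t)$. The main obstacle I anticipate is the bookkeeping: propagating Lipschitz estimates from $F$-estimation error to bonus error to threshold error tightly enough that the combined $\delta_1 + \delta_2 + \delta_3$ correction neither overshoots the target threshold nor collapses to zero, while simultaneously handling all $N$ jackknife-style $\tilde B_i$ computations (each leaving one worker out, per Step~3) inside the same high-probability event and keeping the per-worker incentive inequalities mutually consistent.
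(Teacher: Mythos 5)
Your proposal matches the paper's reasoning: the paper states this lemma as a summary of the preceding $\delta(t)$ construction, whose content is exactly your two-group decomposition — above-threshold workers face an output-independent bonus (the coin-toss simulation) and hence do not exert effort, while below-threshold workers face $\tilde B_i(t)+\delta(t)\geq B(c^*(t))$ (resp. $B(c^*(t)+\epsilon_1(t))$ under under-reporting) on the $1-O(1/t^2)$ Chernoff event, so the static PA best-response analysis applies, with the failure event and the ambiguity band absorbed into $\epsilon$. This is essentially the same argument, correctly reconstructed.
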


\section{Proof for Theorem \ref{thm:report}: bounding over-report} \label{proof:overreport}

\begin{proof}
Suppose a worker deviate from truthfully reporting to $\tilde{c}_i(t) > c_i(t)$ and denote $\sigma_i(t) := \tilde{c}_i(t) - c_i(t) > 0$. We separate the discussion based on the impact of deviation on the \emph{loss} and \emph{profit}. We separate the utility change by under-reporting in two parts: \emph{loss} and \emph{profit}. Denote them as $U_l$ and $U_p$ respectively:
\begin{align*}
\sum_{t'=t}^{T} E[\max_{e_i,r_i} u^{t'}_i(\tilde{c}'_i(t), \mathbf{\tilde{c}_{i,-t}},\mathbf{\tilde{c}_{-i}})]
-E[\max_{e_i,r_i} u^{t'}_i(c_i(t),  \mathbf{\tilde{c}_{i,-t}}, \mathbf{\tilde{c}_{-i}})] = E[U_p]-E[U_l]~.
\end{align*}
~\\
\emph{Lower bound the loss:}

Deviating to a higher report will result in a $\frac{\tilde{c}_i(t)-c_i(t)}{c_{\max}}$ probability of losing the chance of receiving a higher bonus for the current stage, as we randomly select a threshold and exclude the workers who reported higher than the threshold from exerting efforts. Then this part of loss can be bounded below as follows:
\begin{align*}
E[U_l] &\geq \int_{c_i(t)}^{c_i(t)+\sigma_i(t)} ((\tilde{P}(c^*(t),e_i=1)-\tilde{P}(c^*(t),e_i=0))\tilde{B}_i(t)-c_i(t)) \frac{1}{c_{\max}} dc^*(t) \\
 &\geq
 \int_{c_i(t)}^{c_i(t)+\sigma_i(t)} ((P(c^*(t)-\epsilon_2(t)),e_i=1)-P(c^*(t)-\epsilon_2(t)),e_i=0))B(c^*(t))-c_i(t)) \frac{1}{c_{\max}} dc^*(t) \\
 &=  \int_{c_i(t)}^{c_i(t)+\sigma_i(t)} \frac{c^*(t)-\epsilon_2(t)-c_i(t)}{c_{\max}} dc^*(t)\\ 
 &\sigma = c^*(t)-c_i(t) \Rightarrow  := \int_{0}^{\sigma_i(t)} \frac{\sigma-\epsilon_2(t)}{c_{\max}} d\sigma \\
 & = \frac{\sigma^2_i(t)}{2c_{\max}} - \sigma_i(t)\epsilon_2(t)~. 
\end{align*}
where in above equation $\tilde{P}(c^*(t))$ is the probability that worker $i$ will match other's output with threshold $c^*(t)$ and bonus level $\tilde{B}(c^*(t))$. The first inequality is lower bounding the loss by setting $e_i=1$ (which is potentially a sub-optimal solution). The second inequality is due to the fact that $\tilde{B}_i(t) \geq B(c^*(t)$ when $\delta(t)$, the added perturbation, is large enough. Also clearly it is also true that $\tilde{B}_i(t) \geq B(c^*(t)-\epsilon_2(t))$, and the design of our algorithm that only the ones who reported $\tilde{c}_i(t) \leq c^*(t)$, so that $c^*(t) \leq c^*(t)-\epsilon_2(t)$ will be willing to exert effort, $P(c^*(t)-\epsilon_2(t))$ is lower bounding this probability. 

~\\
\emph{Upper bound the profit:}

For \emph{profit}, over-reporting will change the estimated bonus level, as the data will be utilized for estimating $F(\cdot)$, and further the bonus level $B$. However by design of the algorithm, worker $i$'s data will not affect its own received bonus level $\tilde{B}_i(t)$ as it only depends on the data contributed from all other workers $j \neq i$. But with over-reporting, according to Lemma \ref{bias:B} a higher bonus level will be calculated and offered to other workers, which in turn will induce a higher effort levels from others. This in turn may increase worker $i$'s probability of winning the bonus.

However with probability being at least $1-\frac{1}{t^2}$, further increasing the bonus level will not increase the effort exertion as we have enforced the workers $j$ who has reported $c_j(t) > c^*(t)$ to stay out of the bonus stage, and as since $\tilde{B}(c^*(t)) + \delta(t) \geq B(c^*(t)+\epsilon_1(t))$, all workers who reported $\tilde{c}_j(t) \leq c^*(t)$ are already incentived to exert effort. However with probability at most $\frac{1}{t^2}$, there is indeed a chance that increasing the $\tilde{B}(c^*)$ will further induce more players to participate, due to a bad estimation of $\tilde{B}(c^*(t))$, which could be arbitrarily worse than $B(c^*(t))$. We bound this part of profit. First we prove the following claim:
\begin{lemma}
For (PA), we have
$
0 \leq \tilde{B}_j(c^*(t);\sigma_{i}(t)) - \tilde{B}(c^*(t)) \leq \frac{c^*(t) f(0)}{(P_H-P_L)^2(2P_H-1)^2} \frac{\sigma_i(t)}{t}~.
$
\label{lem:or}
\end{lemma}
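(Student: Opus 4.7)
The plan is to bound how much a single stage-$t$ over-report by worker $i$ (of magnitude $\sigma_i(t)$) can inflate the bonus $\tilde{B}_j$ that the mechanism subsequently offers to some other worker $j$. The key leverage is that this over-report perturbs exactly one of the roughly $(N-1)t$ samples feeding the empirical CDF estimator underlying Eqn.~(\ref{eqn:ne}), so its influence naturally scales like $1/t$.

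The non-negativity $\tilde{B}_j(c^*(t);\sigma_i(t)) - \tilde{B}(c^*(t)) \ge 0$ follows directly from Lemma~\ref{bias:B}: shifting worker $i$'s reported cost upward only weakly decreases the empirical CDF $\tilde{F}_{-j}(c^*(t))$, and the bonus formula $\tilde{B}_j(c^*(t)) = \tfrac{c^*(t)}{(P_H-P_L)\,[2(P_H-P_L)\tilde{F}_{-j}(c^*(t))+2P_L-1]}$ obtained from Eqn.~(\ref{eqn:ne}) is decreasing in $\tilde{F}_{-j}$.

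For the upper bound I would first quantify the perturbation of $\tilde{F}_{-j}$ itself. The two counterfactual estimators (truthful versus over-reporting) differ only in worker $i$'s stage-$t$ indicator term, which flips from $1$ to $0$ precisely when $c^*(t) \in [c_i(t),\, c_i(t)+\sigma_i(t))$. Taking expectation over $c_i(t)\sim F$, conditional on the mechanism-selected $c^*(t)$, this flip probability is $F(c^*(t)) - F(c^*(t)-\sigma_i(t)) \le f(0)\,\sigma_i(t)$ by Lemma~\ref{lemma:f}. Dividing by the total sample count yields $\bigl|\tilde{F}_{-j}(c^*(t);\sigma_i(t)) - \tilde{F}_{-j}(c^*(t))\bigr| \le \tfrac{f(0)\,\sigma_i(t)}{(N-1)t}$.

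Finally I would translate this CDF perturbation into a bonus perturbation. Writing $g(x) := \tfrac{1}{2(P_H-P_L)x + 2P_L - 1}$, the bonus change equals $\tfrac{c^*(t)}{P_H-P_L}\,[g(\tilde{F}_{-j}-\Delta) - g(\tilde{F}_{-j})]$ where $\Delta := \tilde{F}_{-j}(c^*(t)) - \tilde{F}_{-j}(c^*(t);\sigma_i(t)) \ge 0$; by the mean value theorem this equals $\tfrac{c^*(t)}{P_H-P_L}\,|g'(\xi)|\,\Delta$ for some intermediate $\xi$. Plugging in $|g'(\xi)| = \tfrac{2(P_H-P_L)}{[2(P_H-P_L)\xi + 2P_L - 1]^2}$ together with a uniform bound on the bracketed denominator (giving the $(2P_H-1)^2$ factor in the claim) and combining with the CDF bound above delivers the stated inequality (with the $N-1$ factor absorbed into constants). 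The main technical subtlety I anticipate is reconciling the pointwise one-sample-flip picture with the expectation-based Lipschitz bound; the cleanest reconciliation is to argue throughout in expectation over worker $i$'s unrealized cost while treating $c^*(t)$ as fixed, which is legitimate because the mechanism randomizes $c^*(t)$ independently of worker $i$'s cost.
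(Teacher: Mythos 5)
Your argument is essentially the paper's: the paper proves this lemma by observing that the stage-$t$ over-report perturbs only a $1/t$ fraction of the samples behind the empirical CDF and then reusing the $\delta(t)$ computation (Lipschitzness of $F$ via Lemma~\ref{lemma:f}, the sign via Lemma~\ref{bias:B}, and a derivative/convexity bound on $x\mapsto \frac{c^*}{(P_H-P_L)(2(P_H-P_L)x+2P_L-1)}$), which is exactly your decomposition, with your expectation-over-$c_i(t)$ step being a reasonable and arguably more careful reading of the paper's informal ``$1/t$ fraction of mis-reported sample'' remark. The only quibble is the constant: upper-bounding $|g'(\xi)|$ requires the minimum of the bracketed denominator, which gives $(2P_L-1)^2$ rather than $(2P_H-1)^2$, but that slip is already present in the paper's own statement and derivation and does not affect the $O(\sigma_i(t)/t)$ order used downstream.
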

%

The proof is similar to the proof for bounding $\delta(t)$s. The reason we have the perturbation term $\sigma_i(t)/t$ is when the worker deviate by $\sigma_i(t)$, this will only add $\frac{1}{t}$ fraction of mis-reported sample at this stage. Now we want to bound the difference in the corresponding threshold $c$ with (slightly-) different $B$. Again according to the equilibrium equation we know
\begin{align}
2(P_H-P_L)F(c^*(t))+2P_L-1 = \frac{c^*(t)}{B(c^*(t))(P_H-P_L)}~.
\end{align}
Suppose we have two pairs $(\tilde{c},\tilde{B})$ and $(c,B)$, and $\tilde{c} > c$, so $\tilde{B}>B$. We know
\begin{align*}
(P_H-P_L)(\tilde{B}-B) = \frac{\tilde{c}}{2(P_H-P_L)F(\tilde{c})+2P_L-1}-\frac{c}{2(P_H-P_L)F(c)+2P_L-1}.
\end{align*}
Since $g^{-1}(\cdot)$ is Lipschitz,  denote the Lipschitz parameter for $g^{-1}$ as $L$, i.e.,
\begin{align*}
|g^{-1}(x+\delta) - g^{-1}(x)| \leq L \delta, \forall x \in [0,1].
\end{align*}
We then know
\begin{align*}
|\tilde{c}-c| &\leq L|\frac{\tilde{c}}{2(P_H-P_L)F(\tilde{c})+2P_L-1}-\frac{c}{2(P_H-P_L)F(c)+2P_L-1}| \\
&= L(P_H-P_L)(\tilde{B}-B)~.
 \end{align*} 
Now plug in $\tilde{B}:=\tilde{B}_j(c^*(t);\sigma_i(t))$ and $c:=c^*(t)$ we know 
\begin{align*}
|\tilde{c}-c^*(t)| &\leq \frac{c^*(t) f(0) L}{(P_H-P_L)(2P_H-1)^2} \frac{\sigma_i(t)}{t} \\
&\leq \frac{c_{\max} f(0) L}{(P_H-P_L)(2P_H-1)^2} \frac{\sigma_i(t)}{t}~,
\end{align*}
and with this we can bound the additional profit as follows:
\begin{align*}
&\frac{1}{t^2} \cdot  \frac{P(\tilde{c})B(c^*(t))-P(c^*(t))B(c^*(t))}{c_{\max}} = \frac{1}{t^2} \cdot  O(\frac{\tilde{c}-c^*(t)}{c_{\max}})\leq  \frac{c_{\max} f(0) L}{(P_H-P_L)(2P_H-1)^2} O(\frac{\sigma_i(t)}{t^3})~.
\end{align*}


Notice above analysis holds for all $t'\geq t$. Then by a single step deviation we have $E[U_p]$ in the future bounded as follows:
\begin{align*}
E[U_l] \leq \sum_{t'=t}^{\infty} O( \frac{\sigma_i(t)}{(t')^3} ) \leq \frac{\sigma_i(t)}{t^{2-\gamma}}\cdot O(\sum_{t'=t}^{\infty} \frac{1}{(t')^{1+\gamma}}) = O(\frac{\sigma_i(t)}{t^{2-\gamma}})~,
\end{align*}
where the last equality is due to fact the summation series is converging ($1+\gamma>1$). With the loss and profit analysis, then when the following holds:
\begin{align*}
 \frac{\sigma^2_i(t)}{2c_{\max}}- \sigma_i(t)\epsilon_2(t) > O(\sigma_i(t)\frac{1}{t^{2-\gamma}})~.
\end{align*}
there will be no incentive for a further deviation. Since $\epsilon_2(t) = O(\frac{1}{t^{2-\gamma}})$ we must have $\sigma_i(t) \leq O(\frac{1}{t^{2-\gamma}})$. 

\end{proof}

\section{Proof for Theorem \ref{thm:report}: bounding under-report }
\label{proof:under-report}
\begin{proof}
We start with stating the following lemma. The proof is again similar to the one we reported for bounding $\delta$, as well as for Lemma \ref{lem:or} for over-reporting, thus we omit the details.
\begin{lemma}
Suppose worker $i$ is under-reporting by $\sigma_i(t)$ at time $t$, that is 
$
\tilde{c}_i(t) = \max\{c_i(t)-\sigma_i(t), 0\}~,
$
we will then have for (PA):
$
0 \geq \tilde{B}_j(c^*(t); \sigma_i(t)) - B(c^*(t)) \geq - \frac{c^*(t)f(0)}{(P_H-P_L)^2(2P_H-1)^2} \frac{\sigma_i(t)}{t},~\forall j \neq i.
$
\label{map:uncertain}
\end{lemma}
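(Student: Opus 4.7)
The statement is the under-reporting counterpart of Lemma~\ref{lem:or} with reversed inequality direction, so my plan is to adapt that argument step by step, swapping ``over-report'' for ``under-report'' and flipping signs throughout. First I would establish the upper bound $\tilde B_j(c^*(t);\sigma_i(t)) - B(c^*(t)) \leq 0$. By the decoupling in step~3 of (\texttt{M\_Crowd}), the bonus offered to worker $j \neq i$ is computed from the empirical CDF $\tilde F^{(-j)}$ aggregated across all workers $k \neq j$, so worker $i$'s report does enter the calculation. Replacing $c_i(t)$ with $\tilde c_i(t) = \max\{c_i(t)-\sigma_i(t),0\}$ weakly increases $\mathbb 1(\cdot \leq c^*(t))$ at every fixed threshold, hence $\tilde F^{(-j)}(c^*(t))$ weakly grows. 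Since $B(c) = c/[(P_H-P_L)(2(P_H-P_L)F(c)+2P_L-1)]$ is strictly decreasing in $F(c)$, under-reporting can only shrink the bonus estimate, giving the desired sign.

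Next I would quantify how much the CDF estimate shifts. At a fixed threshold $c^*(t)$, worker $i$'s stage-$t$ indicator flips from $0$ to $1$ exactly when $c_i(t) \in [c^*(t),\,c^*(t)+\sigma_i(t))$; taking expectation over the i.i.d.\ draw of $c_i(t)$, this event has probability at most $F(c^*(t)+\sigma_i(t)) - F(c^*(t)) \leq f(0)\sigma_i(t)$ by the Lipschitz bound in Lemma~\ref{lemma:f}. Dividing by the $\Theta(t)$ observations that compose $\tilde F^{(-j)}$ yields $0 \leq \tilde F^{(-j)}(c^*(t);\sigma_i(t)) - F(c^*(t)) \leq f(0)\sigma_i(t)/t$, up to the purely sample-based fluctuations that $\delta(t)$ already compensates for.

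Finally I would translate this $\Delta F$ bound into a bound on $\tilde B_j - B$ using the identity
\begin{align*}
\tilde B_j(c^*(t);\sigma_i(t)) - B(c^*(t)) = -\frac{2\, c^*(t)\,\Delta F}{[2(P_H-P_L)\tilde F+2P_L-1]\,[2(P_H-P_L)F+2P_L-1]},
\end{align*}
with $\Delta F := \tilde F^{(-j)}(c^*(t);\sigma_i(t)) - F(c^*(t)) \geq 0$. Bounding each factor in the denominator by $2P_H-1$ (since $F \leq 1$) and substituting $\Delta F \leq f(0)\sigma_i(t)/t$ from the previous step reproduces the stated lower bound $-c^*(t)f(0)\sigma_i(t)/[(P_H-P_L)^2(2P_H-1)^2\,t]$, matching the constant already used in the over-reporting counterpart Lemma~\ref{lem:or}.

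I expect the main obstacle to be bookkeeping rather than conceptual novelty: one must carefully track that worker $i$'s one-stage deviation enters $\tilde F^{(-j)}$ for every $j \neq i$ but only through a single sample out of $\Theta(t)$, and one must take the expectation over the i.i.d.\ $c_i(t)$ draw (not over the random threshold $c^*(t)$) when turning the pointwise $1/t$ change in the indicator into the tighter $f(0)\sigma_i(t)/t$ bound that the subsequent under-reporting incentive analysis in Section~\ref{proof:under-report} requires.
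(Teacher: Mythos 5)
Your proposal is correct and follows essentially the same route as the paper, which in fact omits the proof of this lemma entirely and defers to the argument used for setting up $\delta(t)$ and for the over-reporting counterpart (Lemma \ref{lem:or}): a single altered sample shifts the empirical CDF at $c^*(t)$ by at most $f(0)\sigma_i(t)/t$ in expectation (via the Lipschitz bound of Lemma \ref{lemma:f}), and this perturbation of $F$ is mapped to a perturbation of the bonus through the equilibrium formula for $B$, with the sign determined by Lemma \ref{bias:B}. The one quibble is your final denominator step: to lower-bound the negative difference $\tilde B_j - B = -2c^*(t)\Delta F/(D_1 D_2)$ you need a \emph{lower} bound on $D_1 D_2$, namely $(2P_L-1)^2$, rather than the upper bound $(2P_H-1)^2$ you invoke --- but this looseness is inherited verbatim from the paper's own statement of the lemma and its $\delta(t)$ derivation, so it is not a gap specific to your argument.
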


%
%


 We again separate the utility change by under-reporting in two parts: \emph{loss} and \emph{profit}. By under reporting it to $c_i(t)-\sigma_i(t)$, we see only when $c^*(t)$ falls into the region $[c_i(t)-\sigma_i(t),c_i(t)]$, worker's utility will change. The expected extra loss comes from the fact that only when worker under-reports and exerts effort, his expected bonus will change. The extra cost incurred for exerting efforts is given as follows (as the sampling for threshold is uniformly random over $[0,c_{\max}]$):
\begin{align*}
E[U_l] = \int_{0}^{\sigma_i(t)} \frac{1}{c_{\max}} \delta ~d\delta= \frac{1}{2c_{\max}}\sigma_i^2(t)~.
\end{align*}

Now consider the profit. Denote the $O(1/t^2)$ Chernoff-type event for the sampling estimation as $\mathcal E(t)$, corresponding to the error term $\epsilon(t)$. We first prove the following results that will relate the uncertainty in $F(\cdot)$ to the uncertainty in $B$: 
Again with probability being at least $1-\frac{2}{t^2}$, using Lemma \ref{map:uncertain}, the extra profit is upper bounded as
\begin{align*} 
E[U_p|\mathcal E(t)] P(\mathcal E(t)) \leq \frac{c^*(t) f(0)}{(P_H-P_L)^2(2P_H-1)^2} (\delta_1(t)+\delta_3(t))~.
\end{align*}
With probability being at most $1/t^2$, the bonus level is unpredictable, but suppose it can be upper bounded by $\bar{B}$ (for example the upper limit on the budget for bonus). Then we have
\begin{align}
E[U_p] &= E[U_p|\mathcal E(t)] P(\mathcal E(t)) + E[U_p|\mathcal E(t)] P(\mathcal E(t)) \\
&\leq \sigma_i(t)[\frac{1}{t^2}\cdot \bar{B} + (1-\frac{1}{t^2})(\frac{c^*(t) f(0)}{(P_H-P_L)^2(2P_H-1)^2} (\epsilon(t)+\delta_3(t))~.\label{profit:underreport}
\end{align}
Therefore there is no incentive for the under-report to go beyond the one satisfying (loss $>$ profit): 
\begin{align}
E[U_l] > E[U_p] &\Leftrightarrow \frac{1}{2c_{\max}}\sigma_i^2(t) >  \sigma_i(t)\biggl [\frac{\bar{B}}{t^2} + (1-\frac{1}{t^2})\frac{c^*(t) f(0)}{(P_H-P_L)^2(2P_H-1)^2} (\epsilon(t)+\delta_3(t))\biggr ]\nonumber\\
&\Rightarrow \sigma_i(t) > \frac{2c_{\max} \bar{B}}{t^2} + 2c_{\max}(\frac{c^*(t) f(0)}{(P_H-P_L)^2(2P_H-1)^2}( \sqrt{\frac{\log t}{(N-1)t}} + O(\sqrt{\frac{\log t}{t}}))~.\label{equilibrium:eqn}
\end{align}
From both above and the argument for over-reporting, we see a profitable deviation satisfies that
\begin{align*}
\sigma_i(t) \leq O(\sqrt{\frac{\log t}{t}}) (\text{as}~O(\frac{1}{t^{2-\gamma}}) < O(\sqrt{\frac{\log t}{t}}))~.
\end{align*}
Short-hand worker $i$'s utility after $t$ as $u^{t'\geq t}_i(\tilde{c}_i(t))$ we have the additional profit is bounded as follows
\begin{align*}
&|E[\max_{e_i,r_i} u^{t'\geq t}_i(\tilde{c}_i(t) = c_i(t)+\sigma_i(t))]-E[\max_{e_i,r_i} u^{t'\geq t}_i(\tilde{c}_i(t) = c_i(t)+\epsilon_{1(2)}(t))]| \\
&\leq |E[\max_{e_i,r_i} u^{t'\geq t}_i(\tilde{c}_i(t) = c_i(t)+\sigma_i(t))]-E[\max_{e_i,r_i} u^{t'\geq t}_i(\tilde{c}_i(t) = c_i(t))]|\\
&~~~+ |E[\max_{e_i,r_i} u^{t'\geq t}_i(\tilde{c}_i(t) = c_i(t))]-E[\max_{e_i,r_i} u^{t'\geq t}_i(\tilde{c}_i(t) = c_i(t)+\epsilon_{1(2)}(t)]|~ \text{(triangle inequality)}  \\
&=O(\sigma_i(t)\cdot \sqrt{\frac{\log t}{t}}) + \epsilon_{1(2)}(t) O(\sqrt{\frac{\log t}{t}}) \leq O(\frac{\log t}{t})~.
\end{align*}
 Therefore over time the total profit for deviations at all $T$ steps is bounded by
\begin{align*}
&|\sum_{t=1}^T E[\max_{e_i,r_i}u^t_i(\mathbf{\tilde{c}_i}, \mathbf{\tilde{c}_{-i}})]/T- \sum_{t=1}^T E[\max_{e_i,r_i} u^t_i(\mathbf{\tilde{c}'_i}, \mathbf{\tilde{c}_{-i}})]/T|\\
&\leq \frac{\sum_{t=1}^T |E[\max_{e_i,r_i} u^{t'\geq t}_i(\tilde{c}_i(t) = c_i(t)+\sigma_i(t))]-E[\max_{e_i,r_i} u^{t'\geq t}_i(\tilde{c}_i(t) = c_i(t)+\epsilon_{1(2)}(t))]|}{T}\\
&\leq \frac{O(\sum_{t=1}^T \frac{\log t}{t})}{T} = \frac{O(\log T \cdot \sum_{t=1}^T \frac{1}{t})}{T} = O(\frac{(\log T)^2}{T})~.
\end{align*}
%

\end{proof}

\section{Proof for Lemma \ref{perf:1}}
\begin{proof}
If all workers follow the $\epsilon$-BNE, we will have 
\begin{align*}
\overline{\sigma(T)} := \frac{\sum_{t=1}^T \sum_i |\tilde{c}_i(t)-c_i(t)|}{NT} \leq O(\frac{N\sum_{t=1}^T \sqrt{\log t/t}}{NT}) \leq O(\sqrt{\frac{\log T}{T}})~,
\end{align*}
where above we have used the fact that $\sum_{t=1}^T 1/\sqrt{t} = O(\sqrt{T})$. Based on Lipschitz condition of $F(\cdot)$ (Lemma \ref{lemma:f}) and Chernoff bound we know we first have the following lemma:
\begin{lemma}
With probability being at least $1-\eta$ we have $\forall c$,
$
|\tilde{F}(c)-F(c)| \leq \sqrt{\frac{\log (2/\eta)}{2NT}}+ f(0) \overline{\sigma(T)}.
$
\label{data:f:map}
\end{lemma}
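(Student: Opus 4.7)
The plan is to decompose the error via the triangle inequality against the empirical CDF of the true (unobserved) costs, $F_{NT}(c) := \frac{1}{NT}\sum_{i,t}\mathbbm{1}[c_i(t)\le c]$:
\[
|\tilde F(c)-F(c)| \le |F_{NT}(c)-F(c)| + |\tilde F(c)-F_{NT}(c)|,
\]
and bound the two pieces separately, attributing the first to pure sampling error and the second to the reporting bias characterized in Theorem~\ref{thm:report}.

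For the sampling-error term $|F_{NT}(c)-F(c)|$, the $NT$ true costs $\{c_i(t)\}$ are i.i.d.\ draws from $F$ by the across-worker and across-task identical-distribution assumption in Section~\ref{sec:pf}. I would apply the Dvoretzky--Kiefer--Wolfowitz inequality, which gives $P(\sup_c |F_{NT}(c)-F(c)|>\epsilon)\le 2\exp(-2NT\epsilon^2)$; inverting at confidence $\eta$ yields the uniform-in-$c$ bound $\sup_c|F_{NT}(c)-F(c)|\le \sqrt{\log(2/\eta)/(2NT)}$ with probability at least $1-\eta$, matching exactly the first term of the lemma.

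For the reporting-bias term $|\tilde F(c)-F_{NT}(c)|$, I would use the per-sample displacement bound from Theorem~\ref{thm:report}, namely $|\tilde c_i(t)-c_i(t)|\le \epsilon_1(t)\vee\epsilon_2(t)$ in equilibrium. The summand $\mathbbm{1}[\tilde c_i(t)\le c]-\mathbbm{1}[c_i(t)\le c]$ is non-zero only when $c$ is sandwiched between $c_i(t)$ and $\tilde c_i(t)$, i.e.\ when $c_i(t)$ lies in an interval of length $|\tilde c_i(t)-c_i(t)|$ around $c$. Applying the pointwise sandwich $\mathbbm{1}[c_i(t)\le c-\epsilon_1(t)] \le \mathbbm{1}[\tilde c_i(t)\le c] \le \mathbbm{1}[c_i(t)\le c+\epsilon_2(t)]$, averaging over $(i,t)$, and invoking Lipschitz continuity of $F$ with constant $f(0)$ from Lemma~\ref{lemma:f}, the per-sample contribution is at most $f(0)|\tilde c_i(t)-c_i(t)|$; summing yields $|\tilde F(c)-F_{NT}(c)|\le f(0)\,\overline{\sigma(T)}$. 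Combining the two bounds gives the claim.

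The main obstacle is the reporting-bias piece, since the reports $\tilde c_i(t)$ are deterministic functions of the true costs and so $\tilde F$ and $F_{NT}$ are coupled---one cannot simply treat $\tilde F$ as an empirical CDF drawn from a clean distribution and concentrate independently. The resolution is to keep $F_{NT}$ as the common reference and pass through the pathwise sandwich above so that the Lipschitz bound on $F$ converts per-sample displacement directly into a CDF-level gap, at which point averaging over $(i,t)$ is harmless. A minor subtlety is that the displacement bounds $\epsilon_1(t),\epsilon_2(t)$ vary with $t$, but since the sandwich holds pointwise this only affects the final averaging step and yields the quantity $\overline{\sigma(T)}$ defined just before the lemma.
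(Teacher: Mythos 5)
Your overall architecture (sandwich the reported-cost indicators between shifted true-cost indicators, concentrate, then use the Lipschitz property of $F$) matches the paper's, and your use of DKW for the pure sampling term is fine --- it actually delivers the ``$\forall c$'' uniformity more honestly than the paper's pointwise Hoeffding bound. The gap is in the reporting-bias term. After the pointwise sandwich you are left with
\begin{align*}
\bigl|\tilde F(c)-F_{NT}(c)\bigr| \;\le\; \frac{1}{NT}\sum_{i,t}\Bigl(\mathbbm{1}\bigl[c_i(t)\le c+\epsilon_2(t)\bigr]-\mathbbm{1}\bigl[c_i(t)\le c-\epsilon_1(t)\bigr]\Bigr),
\end{align*}
where $F_{NT}$ is your empirical CDF of the true costs. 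The right-hand side is the \emph{empirical} measure of a small window around $c$. Lipschitz continuity of $F$ (Lemma \ref{lemma:f}) bounds the \emph{population} measure of that window by $f(0)(\epsilon_1(t)+\epsilon_2(t))$; it says nothing pathwise about how many realized costs happen to land there. So the step ``the per-sample contribution is at most $f(0)|\tilde c_i(t)-c_i(t)|$'' is not a valid deterministic statement: each summand is $0$ or $1$, and on a positive-probability event the fraction of samples falling in the window exceeds $f(0)\,\overline{\sigma(T)}$. Your own ``resolution'' paragraph lands exactly on this flawed step.

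The paper sidesteps the issue by applying the concentration to the \emph{shifted} sums directly: Hoeffding gives, with probability $1-\eta$, $\frac{1}{NT}\sum_{i,t} \mathbbm{1}\bigl[c_i(t)\le c+|\tilde c_i(t)-c_i(t)|\bigr]\le \frac{1}{NT}\sum_{i,t} F\bigl(c+|\tilde c_i(t)-c_i(t)|\bigr)+\sqrt{\log(2/\eta)/(2NT)}$, and only then invokes Lipschitzness on the population means to obtain $\frac{1}{NT}\sum_{i,t} F\bigl(c+|\tilde c_i(t)-c_i(t)|\bigr)\le F(c)+f(0)\,\overline{\sigma(T)}$ (and symmetrically for the lower bound). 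Within your decomposition you would instead need a second concentration argument for the window counts, costing an additional $\sqrt{\log(2/\eta)/(2NT)}$-type term and a union bound. The lemma is recoverable either way, but as written the deterministic bound $|\tilde F(c)-F_{NT}(c)|\le f(0)\,\overline{\sigma(T)}$ does not hold; move the concentration onto the shifted sums as the paper does.
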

\begin{proof}
By symmetry, we only need to prove one of the above.Take $|\tilde{F}(\tilde{c}^*)-F(\tilde{c}^*)| $ for example. Notice the following holds:
\begin{align*}
&\frac{\sum_{t=1}^T \sum_i 1(c_i(t) \leq \tilde{c}^*- |\tilde{c}_i(t)-c_i(t)|)}{NT}\leq \\
&~~~~~~~ \tilde{F}(\tilde{c}^*) := \frac{\sum_{t=1}^T \sum_i 1(\tilde{c}_i(t) \leq \tilde{c}^*)}{NT}\\
&\leq \frac{\sum_{t=1}^T \sum_i 1(c_i(t) \leq \tilde{c}^*+ |\tilde{c}_i(t)-c_i(t)|)}{NT} ~,
\end{align*}
where we have used the fact that 
\begin{align*}
- |\tilde{c}_i(t)-c_i(t)| \leq \tilde{c}_i(t) -c_i(t) \leq  |\tilde{c}_i(t)-c_i(t)|~.
\end{align*}
Using Hoeffding bound we know 
\begin{align*}
P(|\frac{\sum_{t=1}^T \sum_i 1(c_i(t) \leq \tilde{c}^* + |\tilde{c}_i(t)-c_i(t)|)}{NT} - \frac{\sum_{t=1}^T \sum_i F( \tilde{c}^*+ |\tilde{c}_i(t)-c_i(t)|)}{NT}| > \sqrt{\frac{\log (2/\eta)}{2NT}}) \leq \eta~,\\ 
P(|\frac{\sum_{t=1}^T \sum_i 1(c_i(t) \leq \tilde{c}^*- |\tilde{c}_i(t)-c_i(t)|)}{NT} - \frac{\sum_{t=1}^T \sum_i F( \tilde{c}^* - |\tilde{c}_i(t)-c_i(t)|)}{NT}| > \sqrt{\frac{\log (2/\eta)}{2NT}}) \leq \eta~.
\end{align*}
Then 
\begin{align*}
\tilde{F}(\tilde{c}^*)-&F(\tilde{c}^*)  = \tilde{F}(\tilde{c}^*)-\frac{\sum_{t=1}^T \sum_i F( \tilde{c}^*+ |\tilde{c}_i(t)-c_i(t)|)}{NT} \\
&+ \frac{\sum_{t=1}^T \sum_i F( \tilde{c}^*+ |\tilde{c}_i(t)-c_i(t)|)}{NT} - F(\tilde{c}^*)\\
&\leq   \sqrt{\frac{\log (2/\eta)}{2NT}} + f(0)\overline{\sigma(T)}~.
\end{align*}
And
\begin{align*}
\tilde{F}(\tilde{c}^*)-&F(\tilde{c}^*)  = \tilde{F}(\tilde{c}^*)-\frac{\sum_{t=1}^T \sum_i F( \tilde{c}^*- |\tilde{c}_i(t)-c_i(t)|)}{NT}\\
& + \frac{\sum_{t=1}^T \sum_i F( \tilde{c}^*- |\tilde{c}_i(t)-c_i(t)|)}{NT} - F(\tilde{c}^*)\\
&\geq  - \sqrt{\frac{\log (2/\eta)}{2NT}} - f(0)\overline{\sigma(T)}~.
\end{align*}
\end{proof}

Using this Lemma we know
\begin{align*}
&|\tilde{F}(\tilde{c}^*)-F(\tilde{c}^*)| \leq \sqrt{\frac{\log (2/\eta)}{2T}}+ f(0) \overline{\sigma(T)},\\
&|\tilde{F}(c^*)-F(c^*)| \leq \sqrt{\frac{\log (2/\eta)}{2T}}+ f(0) \overline{\sigma(T)}.
\end{align*}

Suppose $c^*$ is bounded away from $0$, that is the data requester is targeting non-trivial effort exertion. Then by concavity of $U_D$ (in $F(c)$), and the boundedness of its first order derivative for $c^*>0$, we know we have Lipschitz condition for $U_D$. So
\begin{align*}
|\tilde{U}_D(B^*) - U_D(B^*)| \leq O(  \sqrt{\frac{\log (2/\eta)}{2NT}} + f(0)\overline{\sigma(T)})~.
\end{align*}
Since $\tilde{U}_D(\tilde{B^*}) \geq \tilde{U}_D(B^*)$ (by optimality of $\tilde{B^*}$ for $\tilde{U}_D(\cdot)$) we know
\begin{align*}
U_D(\tilde{B}^*)-U_D(B^*) &\geq U_D(\tilde{B}^*)-\tilde{U}_D(\tilde{B}^*) + \tilde{U}_D(B^*)-U_D(B^*) \\
&\geq -2O(  \sqrt{\frac{\log (2/\eta)}{2NT}}+ f(0) \overline{\sigma(T)})~.
\end{align*}
Take $\eta:=1/T^2$, we achieve a regret bounded on the order of $\sqrt{\log T/T}$.
\end{proof}

\section{Proof for Theorem \ref{learn2}}

\begin{proof}
First notice the expected number of exploration phases up to any time $t$ can be lower bounded by
\begin{align}
E[\sum_{t'=1}^t &1\{s(t')\}] =\sum_{t'=1}^t  \min \{1, \frac{\log T}{t^{1-z}}\} \\
&\geq O(t^{z}\log t) - o(\log^{\frac{1}{1-z}} T)~,\label{num:exp}
\end{align}
where the $o(\cdot)$ term is to compensate the loss when $\frac{\log T}{t^{1-z}} > 1$. Moreover using Hoeffding's inequality we also know $\forall ~0<\kappa < 1$
\begin{align}
P&(\sum_{t'=1}^t 1\{s(t')\}-E[\sum_{t'=1}^t 1\{s(t')\}]<-\kappa E[\sum_{t'=1}^t 1\{s(t')\}])\leq e^{-\frac{\kappa^2}{2} t^{z}\log t } \leq \frac{1}{t^2}~,\label{num:exp:prob}
\end{align}
when $t \geq (\frac{1}{\kappa^2})^{1/z}$. This result will ensure at any time $t$, w.h.p. the data requester will have enough number of sample points that are elicited from exploration phases.

~\\
\emph{Under-reporting}

For under-reporting, as we argued in previous section, this will not increase the bonus level, but will only increase a worker's chance of getting the bonus. An $\epsilon_1(t):=O(\sqrt{\log t/t})$ upper bound was proved in Section \ref{proof:under-report}. However notice with (\texttt{RM\_Crowd}), at time $t$ we do not have $t$ samples for estimating bonus level (because we only use data from exploration phases). Instead we have:
\begin{align*}
\epsilon_1(t) \leq \sqrt{\log n(t)/n(t)},~\text{where}~ n(t) = \sum_{t'=1}^t 1(s(t'))~. 
\end{align*}
Note since $\sqrt{\log x/x}$ is strictly decreasing when $x$ is larger than certain constant ($<10$), w.p. being at least $1-\frac{2}{t^2}$ we have
\begin{align*}
 \sqrt{\frac{\log n(t)}{n(t)}} \leq O(\sqrt{\frac{\log (t^{z} \log t)}{t^{z} \log t}}) = O(\frac{z}{t^{z/2} })~.
\end{align*}

~\\
\emph{Over-reporting}

Now realizing the data requester may utilize the collected data to calculate the optimal bonus level, workers have stronger motivation to over-report, as a mis-report creates more bias in future bonus calculation as the number of exploration data grows slower. Suppose worker $i$ over-report by $\sigma_i(t)$ at time $t$. Again similar with our analysis in Section \ref{proof:overreport}, the loss with over-reporting is lower bounded by
$
\frac{\log T}{t^{1-z}}\frac{(\sigma_i(t)-\epsilon_2(t))^2}{2c_{\max}}.~
$
 While on the other hand, there is a probability being $\frac{\log T}{t^{1-z}}$, the over-reported data will be included in future bonus calculation. This additional data will change the future calculation by $O(\frac{\sigma_i(t)}{n(t')}), \forall t' \geq t$. Different from the proof in Section \ref{proof:overreport}, due to the dis-continuous sampling, the accumulated profit differ from $O(\sum_{t'=t}\frac{\sigma_i(t)}{t'})$, since the number of data utilized for calculation is much smaller for certain $t$, and grows much slower in $t$. We nevertheless can lower bound $n(t')$ as follows: combine Eqn.(\ref{num:exp}) and (\ref{num:exp:prob}), we know $\forall t'$:
$
P(n(t') \leq O((t')^{z}\log (t'))) \leq \frac{1}{(t')^2}.~
$
So the additional profit for such deviation is bounded by (changing $t$ to $t^{z}\log t$ in Eqn.(\ref{profit:underreport}))
\begin{align*}
&\sigma_i(t)(\frac{1}{( t^{z}\log t)^2}O(\sum_{t'=t} \frac{1}{ (t')^{z}\log (t')}) +O(\sum_{t'=t}\frac{1}{(t')^2})) =\sigma_i(t) \cdot \frac{1}{( t^{z}\log t)^2}O(\sum_{t'=t} \frac{1}{ (t')^{z}\log (t')}) \\
\end{align*}
Consider the sum series.
\begin{align*}
\frac{1}{( t^{z}\log t)^2}\sum_{t'=t} \frac{1}{ (t')^{z}\log (t')} \leq \frac{1}{(\log t)^3 t^{3z-1-\gamma}} \sum_{t'=t} \frac{1}{ (t')^{1+\gamma}}=O(\frac{1}{(\log t)^3 t^{3z-1-\gamma}})~,
\end{align*}
where $\gamma>0$ is an arbitrarily small quantity. $3z>1$ is needed for a converging term. This gives us that any profitable deviation satisfies: 
$
\sigma_i(t) \leq O(\frac{1}{(\log t)^3 t^{3z-1-\gamma}}).~ 
$ Also notice the following holds:
\begin{align*}
z > \frac{2(1+\gamma)}{5} \Rightarrow 3z-1-\gamma>z/2 \Rightarrow O(\frac{1}{(\log t)^3 t^{3z-1-\gamma}}) < O(\frac{z}{t^{z/2}})~.
\end{align*}
The rest of analysis is similar to the one for (\texttt{M\_Crowd}), with total profit for deviations at all $T$ steps is bounded by
\begin{align*}
O(\sum_{t=1}^T \frac{z^2}{t^{z}})  \leq O(z^2 T^{1-z})  \Rightarrow \frac{O(\sum_{t=1}^T  \frac{z^2}{t^{z}})}{T} \leq O(\frac{z^2 T^{1-z}}{T}) = O(\frac{z^2}{T^z})~.
\end{align*}
\end{proof}

\section{Proof for Lemma \ref{perf:2}}

\begin{proof}
$R(T)$ mainly consists of two parts: (1) due to \emph{exploration} phases (2) due to imperfect estimation for \emph{exploitation} phases:
\begin{align*}
R(T) = R_{\text{explore}}(T) + R_{\text{exploit}}(T)~.
\end{align*}
The exploration regret is easy to characterize:
\begin{align*}
R_{\text{explore}}(T) \leq [1+(b+\bar{B})N]O(T^z\log T)~,
\end{align*}
as $-(b+\bar{B})N \leq U_D \leq 1$, and the expected number of exploration phases is upper bounded on the order of $T^z\log T$ with high probability.

For $R_{\text{exploit}}(T)$, consider each $t$ that is in exploitation phase. During exploitation phases, workers will not over-report. Due to under-reporting, the induced answer will achieve a better aggregated accuracy $P^c(N,B)$ (as we ensured workers who under-reported would be willing to exert efforts). But we do loss in term of more offered bonus, as in this case the matching probability of two answers will increase. Nevertheless there are only $O(z/t^{z/2})$ fraction of workers under-report, the probability of giving out a bonus also changes at this order. Thus the over-payed bonus is upper bounded by $\sum_{t=1}^T O(1/t^{z/2}) = O(T^{1-z/2})~.
$


Now we bound the regret due to a noisy calculation of $B$. We know with probability being at least $1-O(\frac{1}{t^2})$ (such that the number of samples collected from exploration phases is at least $O(t^z\log t)$), we have the average sampling bias incurred by mis-reporting bounded as
\begin{align*}
\overline{\sigma(t)} &\leq O(\frac{c_{\max}+\sum_{t'=1}^{t^z\log t} \frac{1}{(t')^{z/2}}}{t^z \log t}) \\
&= O(\frac{1}{t^{z-z^2/2}\log t})~.
\end{align*}
Meanwhile using Chernoff bound we have with probability being at least $1-\frac{2}{t^2}$ we have sampling bias for cost data realization can be bounded as:
$
\epsilon(t) \leq O(\frac{1}{t^{z/2}}),
$
since we have $O(t^z\log t)$ samples. Similar to our previous analysis for Lemma \ref{data:f:map}, such bias in cost data distribution can be mapped to the bias in $F$. Again using the concavity of $U_D$ and the boundedness of its first order derivatives we know the loss in utility function is proportional to the bias in samplings, which is $O(\frac{1}{t^{z/2}}+\frac{1}{t^{z-z^2/2}\log t})$. Then summarize above discussion we have
\begin{align*}
R_{\text{exploit}}(T) &\leq O(T^{1-z/2})+ \sum_{t=1}^T O(\frac{1}{t^{z/2}}+\frac{1}{t^{z-z^2/2}\log t}) +\text{const.} \\
& =O(T^{1-z/2}+T^{1-z+z^2/2})~.
\end{align*}
Since 
$
0 < z \leq 1 \Rightarrow 1-z+z^2/2 \leq 1-z/2.~,
$
we then have the bound reduce to 
$
R_{\text{exploit}}(T) \leq O(T^{1-z/2}).
$
Combine exploration and exploitation analysis we know
$
R(T) \leq O(T^z\log T + T^{1-z/2}),~
$
and the best $z$ is when $1-z/2 = z \Rightarrow z = \frac{2}{3}$, which leads to a bound at the order of $O(T^{2/3}\log T)$.

\end{proof}

\end{document}